\algrenewcommand\algorithmicdo{}
\newtheorem{assumption}[theorem]{Assumption}
\begin{document}

\title{Fast Construction of Partitioned Learned Bloom Filter\\with Theoretical Guarantees}

\author{\name{Atsuki Sato} \email{a\_sato@hal.t.u-tokyo.ac.jp} \\
        \name{Yusuke Matsui} \email{matsui@hal.t.u-tokyo.ac.jp} \\
        \addr{
            Department of Information and Communication Engineering \\
            Graduate School of Information Science and Technology \\
            The University of Tokyo, Japan
        }
}

\editor{My editor}

\maketitle

\begin{abstract}
Bloom filter is a widely used classic data structure for approximate membership queries.
Learned Bloom filters improve memory efficiency by leveraging machine learning, with the partitioned learned Bloom filter (PLBF) being among the most memory-efficient variants.
However, PLBF suffers from high computational complexity during construction, specifically $\mathcal{O}(N^3k)$, where $N$ and $k$ are hyperparameters. 
In this paper, we propose three methods---fast PLBF, fast PLBF++, and fast PLBF\#---that reduce the construction complexity to $\mathcal{O}(N^2k)$, $\mathcal{O}(Nk \log N)$, and $\mathcal{O}(Nk \log k)$, respectively. 
Fast PLBF preserves the original PLBF structure and memory efficiency. 
Although fast PLBF++ and fast PLBF\# may have different structures, we theoretically prove they are equivalent to PLBF under ideal data distribution. 
Furthermore, we theoretically bound the difference in memory efficiency between PLBF and fast PLBF++ for non-ideal scenarios. 
Experiments on real-world datasets demonstrate that fast PLBF, fast PLBF++, and fast PLBF\# are up to 233, 761, and 778 times faster to construct than original PLBF, respectively. 
Additionally, fast PLBF maintains the same data structure as PLBF, and fast PLBF++ and fast PLBF\# achieve nearly identical memory efficiency.
\end{abstract}

\begin{keywords}
learned Bloom filter, learned index, membership query, optimization, dynamic programming
\end{keywords}

\section{Introduction}
Bloom filter is a memory-efficient probabilistic data structure for approximate membership queries.
While Bloom filters produce false positives due to hash collisions, they do not produce false negatives. Because of this favorable property, they are widely used in various memory-constrained applications such as networks, databases, etc. 
There is a trade-off between the size of the Bloom filter and the false positive rate (FPR); the lower the FPR, the larger the filter needed. 
According to \citet{pagh2005optimal}, a data structure that stores a set of $n$ elements with an FPR of $F$ needs at least $n \log_{2}(1/F)$ bits of memory. 
However, in practical applications, the data set and/or query has a certain structure, and thus we may be able to achieve better efficiency than this theoretical limit.
Innovatively, Kraska et al. (2018) integrate a machine learning model with a Bloom filter and use this model as a prefilter, further reducing the space requirement. Such a Bloom filter is called a learned Bloom filter (LBF) and has been actively studied.
In particular, partitioned learned Bloom filter (PLBF)~\citep{vaidya2021partitioned} solves an optimization problem using dynamic programming to find the optimal configuration of the prefilter and the backup filters, which allows almost full utilization of the machine learning performance.

However, PLBF requires a lot of time for its construction. This is because $\mathcal{O}(N^3 k)$ of computation is required to solve the optimization problem, where $N$ and $k$ are hyperparameters of PLBF. As $N$ and $k$ increase, the PLBF becomes more memory-efficient but the construction time becomes unrealistically long.

We propose three methods to speed up the construction of PLBF while maintaining their excellent memory efficiency: fast PLBF, fast PLBF++, and fast PLBF\#. The computational complexity for solving the optimization problem of fast PLBF, fast PLBF++, and fast PLBF\# is $\mathcal{O}(N^2 k)$, $\mathcal{O}(Nk \log N)$, and $\mathcal{O}(Nk\log k)$, respectively. 
Fast PLBF has the same data structure as PLBF, that is, it achieves the same memory efficiency as PLBF. Although fast PLBF++ and fast PLBF\# do not necessarily have the same data structure as PLBF, we theoretically guarantee that they have the same data structures as PLBF under an intuitive and ideal condition on the distribution. Furthermore, we theoretically bound the difference of memory efficiency between (fast) PLBF and fast PLBF++/\#.



We evaluate our methods using real-world data sets and artificial data sets.
The results show that (i) fast PLBF, fast PLBF++, and fast PLBF\# can be constructed at most 233, 761, and 778 times faster than PLBF, (ii) fast PLBF achieves exactly the same memory efficiency as PLBF, and (iii) fast PLBF++ and fast PLBF\# achieve almost the same memory efficiency as PLBF, and the difference is within limits guaranteed by our theorem.



\begin{table}[t]
    \centering
    \begin{tabular}{@{}lll@{}}
        \toprule
        Method & Construction Time & Memory-Efficiency Comparison with PLBF\\
        \midrule
        PLBF & $\mathcal{O}(N^3 k)$ & - \\
        Fast PLBF & $\mathcal{O}(N^2 k)$ & Always identical \\
        Fast PLBF++ & $\mathcal{O}(Nk \log N + Nk^2)$ & Identical under ideal conditions \\
        & $\rightarrow \mathcal{O}(Nk \log N)$ & + Theoretical bound under non-ideal conditions \\
        Fast PLBF\# & $\mathcal{O}(Nk\log k)$ & Identical under ideal conditions \\
        \bottomrule
    \end{tabular}
    \caption{Comparison of PLBF~\citep{vaidya2021partitioned} and our proposed methods. The second and third rows show the results of our previously published conference paper~\citep{sato2023fast}. The fourth and fifth rows show the new contributions made in this paper.}
    \label{tab:comparison}
\end{table}

This study builds on a previous shorter conference paper~\citep{sato2023fast}.
We significantly extend this work and make several further contributions, as summarized in \cref{tab:comparison}.
First, we have improved the worst-case computational complexity of fast PLBF++ from $\mathcal{O}(Nk\log N + Nk^2)$ to $\mathcal{O}(Nk\log N)$ by carefully observing the optimization problem and modifying parts of the algorithm.
Second, we have extended the theoretical guarantee of the memory efficiency of fast PLBF++. Previously, memory efficiency was only guaranteed under ideal circumstances. However, it is now guaranteed even under non-ideal conditions.
Third, we have proposed and given a theoretical guarantee for fast PLBF\#, which has an even smaller computational complexity than fast PLBF++. 
Furthermore, we have confirmed that the theoretical guarantee fits well with the experimental results.

The remainder of this paper is organized as follows.
In \cref{sec: related work}, we present the related work. 
In \cref{sec: preliminaries}, we give preliminaries consisting of two parts. 
First, we describe PLBF, including its architecture, the optimization problem it addresses, and its solution. 
Next, we introduce the \textit{matrix problem} and the related algorithms used in the core of our acceleration.
In \cref{sec: methods}, we propose our three methods for fast PLBF construction.
In \cref{sec: fast plbf}, we propose fast PLBF and give a proof that it has exactly the same memory efficiency as PLBF. 
In \cref{sec: fast plbf pp} and \cref{sec: fast plbf sharp}, we propose fast PLBF++ and fast PLBF\#, respectively, and give theoretical guarantees for the difference in memory efficiency between them and PLBF. 
In \cref{sec: experiments}, we present extensive experiments that demonstrate these methods' effectiveness and our theorems' validity.

\section{Related Work}
\label{sec: related work}
Our work is in the context of LBF, which uses machine learning to improve the performance of the Bloom filter.
We first introduce the Bloom filter and its variants in \cref{sec: related work: Bloom filter}, and then present related work on LBF in \cref{sec: related work: lbf}.

\subsection{Bloom Filter and Its Variants}
\label{sec: related work: Bloom filter}
The Bloom filter~\citep{bloom1970space} is a traditional data structure that answers approximate membership queries. An approximate membership query is a query that determines whether a set $\mathcal{S}$ contains an element $q$, while allowing false positives with some probability. Bloom filters are often used in contexts such as networks~\citep{broder2004network, tarkoma2011theory, geravand2013bloom} and databases~\citep{chang2008bigtable, goodrich2011invertible, lu2012bloomstore}, where memory constraints are tight and a small rate of false positives is tolerable.

The Bloom filter is a data structure consisting of an $m$-bit bit array, $\bm{b}$, and $k$ independent hash functions, $h_1, h_2, \dots, h_k$, each of which maps an input to an integer value between $1$ and $m$.
To set up a Bloom filter, all bits of $\bm{b}$ are initially set to $0$, and then for each element $x ~ (\in \mathcal{S})$ and for each $i ~ (\in \{1, 2, \dots, k\})$, set the bit $b_{h_i(x)}$ to 1.
To check whether an element $q$ is in the set $\mathcal{S}$, the Bloom filter checks the bits at positions $h_1(q), h_2(q), \dots, h_k(q)$ in the bit array $\bm{b}$. If any of these bits is $0$, then $q$ is definitely not in the set $\mathcal{S}$ (no false negatives occur). If all of these bits are $1$, the Bloom filter indicates that $q$ is in the set $\mathcal{S}$. However, due to the possibility of hash collisions, this may not be correct, leading to a false positive. The number of bits the Bloom filter requires to achieve an FPR of $F ~ (\in (0, 1))$ is 
\begin{equation}
    (\log_2 e) \cdot |\mathcal{S}| \log_2\frac{1}{F}.
\end{equation}
It is known that the lower bound for the number of bits required for a given FPR of $F$ is $|\mathcal{S}| \log_2 \left(1/F\right)$, so this means that a Bloom filter uses $\log_2 e \approx 1.44$ times the number of bits of the theoretical lower bound.

Several variants of Bloom filters have been proposed.
One of the most representative variants is the Cuckoo filter~\citep{fan2014cuckoo}.
The Cuckoo filter is more memory-efficient than the original Bloom filter and supports dynamic addition and removal of elements.
In addition, various derivatives such as Vacuum filter~\citep{wang2019vacuum}, Xor filter~\citep{graf2020xor}, and Ribbon filter~\citep{peter2021ribbon} have been proposed with the aim of achieving memory efficiency close to the theoretical lower bound and high query speeds.
A variant of the Bloom filter with this lower bound has also been proposed~\citep{pagh2005optimal}, but it is so complex that it is difficult to implement in practice.

\subsection{Learned Bloom Filter}
\label{sec: related work: lbf}

\begin{figure}[t]
    \centering
    \includegraphics[width=0.8\columnwidth]{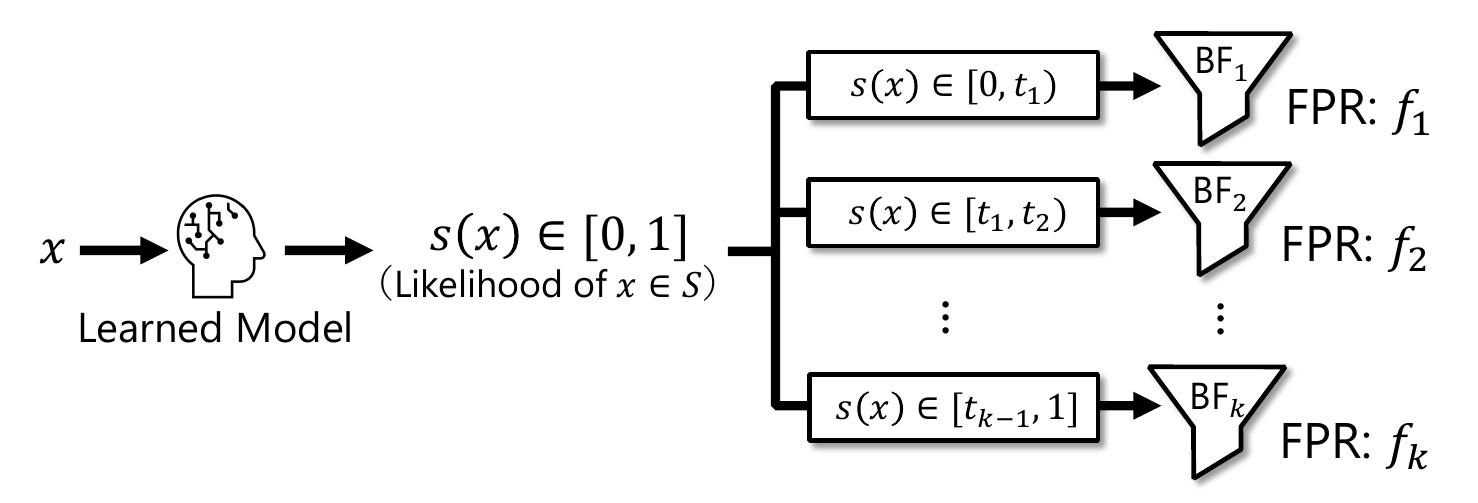}
    \caption{PLBF partitions the score space into $k$ regions and assigns backup Bloom filters with different FPRs to each region.}
    \label{fig: PLBF_idea}
\end{figure}

Bloom filter and B-tree (and their variants) are commonly employed as indexing structures in database systems.
\citet{kraska2018case} showed that combining these index data structures with machine learning models has the potential to achieve memory savings, and they named these new index data structures ``learned indexes.''
The main idea behind the learned index is based on the following two points: (i) classical index data structures are essentially ``models'' that make some kind of prediction, and (ii) machine learning models have the potential to exploit the structural properties of data and queries more efficiently than conventional index data structures.
In fact, Kraska et al. showed that by replacing B-trees with machine learning models, it is possible to achieve query response speeds 1.5 to 3 times faster than B-trees while saving nearly 100 times less memory experimentally.
This plain learning-augmented B-tree did not support insertion or deletion, and was limited to the case of an one-dimensional array.
However, inspired by the research of Kraska et al., various proposals have been made, including those that support insertion and deletion~\citep{ferragina2020pgm, ding2020alex, dai2020wisckey, zhang2022plin} and those that extend to multi-dimensional cases~\citep{nathan2020learning, ding2020alex, li2020lisa, gu2023rlr}.
In addition, there is a lot of research that integrates machine learning with other data structures, such as count-min sketches~\citep{hsu2019learning, zhang2020learned} and binary search trees~\citep{lin2022learning}.

LBF is a type of learned index that extends a Bloom filter with machine learning.
The original LBF proposed by \citet{kraska2018case} consists of a single machine learning model and a single backup Bloom filter.
The machine learning model is trained to solve the binary classification task of determining whether an input element is in the set $\mathcal{S}$ or not.
This training is done using the set $\mathcal{S}$ and the set $\mathcal{Q}$, where $\mathcal{Q}$ consists of elements that are not in the set $\mathcal{S}$.
In many real-world applications, $\mathcal{Q}$ can be obtained from past queries that are not in $\mathcal{S}$.
When this LBF responds to a query, if the machine learning model predicts that the query is contained in the set, the LBF immediately responds that the query is contained in the set. Otherwise, the backup Bloom filter is consulted, and its result is used as the final answer.
By using this hybrid approach, the LBF retains the important property of Bloom filters of having no false negatives, while taking advantage of the machine learning model's ability to efficiently capture the structure of $\mathcal{S}$ and $\mathcal{Q}$.

\section{Preliminaries}
\label{sec: preliminaries}

This chapter gives the preliminaries. 
First, we give a detailed explanation of PLBF, one of the state-of-the-art LBFs (\cref{sec: preliminaries PLBF}). 
Next, we introduce the \textit{matrix problem} and the classes of matrices related to the problem (\cref{sec: preliminaries MP}).

\subsection{Partitioned Learned Bloom Filter}
\label{sec: preliminaries PLBF}

\begin{figure}[t]
    \centering
    \includegraphics[width=0.8\columnwidth]{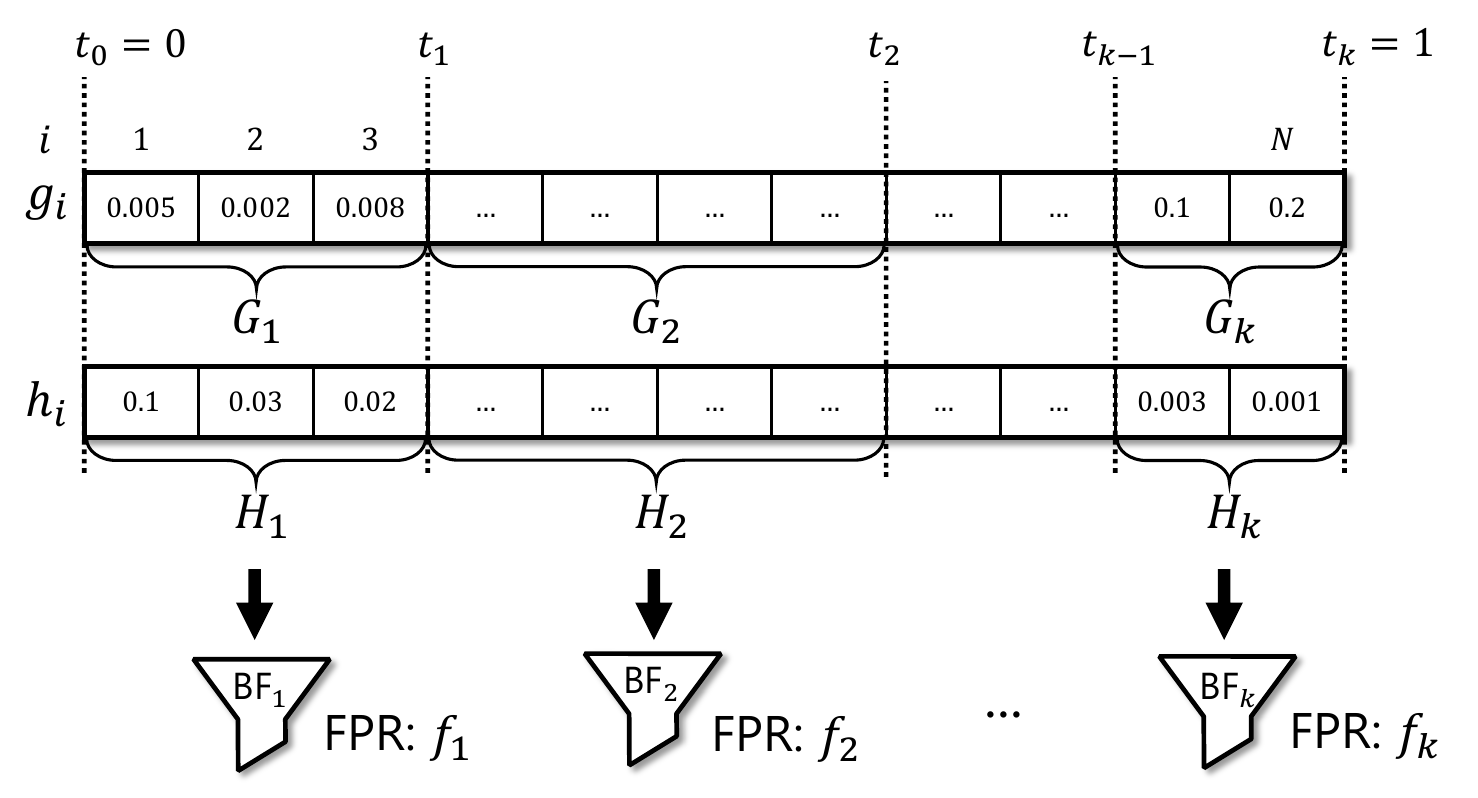}
    \caption{PLBF divides the score space into $N$ segments and then clusters the $N$ segments into $k$ regions. PLBF uses dynamic programming to find the optimal way to cluster segments into regions.}
    \label{fig: PLBF_division_score_space}
\end{figure}


\begin{algorithm}[t]
    \caption{PLBF~\citep{vaidya2021partitioned}}
    \label{alg: PLBF}
\begin{algorithmic}
    \State{\bfseries Input:}
    \State{$\bm{g} \in \mathbb{R}^N$ : probabilities that the keys are contained in each segment} \leftskip=1em
    \State{$\bm{h} \in \mathbb{R}^N$ : probabilities that the non-keys are contained in each segment}
    \State{$F \in (0,1)$ : target overall FPR}
    \State{$k \in \mathbb{N}$ : number of regions}
    \State{\bfseries Output:} \leftskip=0em
    \State{$\bm{t} \in \mathbb{R}^{k+1}$ : threshold boundaries of each region} \leftskip=1em
    \State{$\bm{f} \in \mathbb{R}^{k}$ : FPRs of each region}
    \State{\bfseries Algorithm:} \leftskip=0em
    \State{\textsc{ThresMaxDivDP}$(\bm{g}, \bm{h}, j, k)$ :} \leftskip=1em
    \State {constructs $\mathrm{DP}_\mathrm{KL}^{j}$ and calculates the optimal thresholds by tracing the transitions backward from $\mathrm{DP}_\mathrm{KL}^{j}[j-1][k-1]$} \leftskip=2em
    \State {\textsc{OptimalFPR}$(\bm{g},\bm{h},\bm{t},F,k)$ :} \leftskip=1em
    \State {returns the optimal FPRs for each region under the given thresholds} \leftskip=2em
    \State {\textsc{SpaceUsed}$(\bm{g},\bm{h},\bm{t},\bm{f})$ :} \leftskip=1em
    \State {returns the total size of the backup Bloom filters for the given thresholds and FPRs} \leftskip=2em
    \\

    \leftskip=0em
    \State $\mathrm{MinSpaceUsed} \leftarrow \infty$
    \State $\bm{t}_\mathrm{best} \leftarrow \mathrm{None}$
    \State $\bm{f}_\mathrm{best} \leftarrow \mathrm{None}$

    \For{$j = k, k+1, \dots, N$}
        \State $\bm{t} \leftarrow \textsc{ThresMaxDivDP}(\bm{g}, \bm{h}, j, k)$   
        \State $\bm{f} \leftarrow \textsc{OptimalFPR}(\bm{g},\bm{h},\bm{t},F,k)$      
        \If{$\mathrm{MinSpaceUsed} > \textsc{SpaceUsed}(\bm{g},\bm{h},\bm{t},\bm{f})$}
            \State $\mathrm{MinSpaceUsed} \leftarrow \textsc{SpaceUsed}(\bm{g},\bm{h},\bm{t},\bm{f})$
            \State $\bm{t}_\mathrm{best} \leftarrow \bm{t}$
            \State $\bm{f}_\mathrm{best} \leftarrow \bm{f}$
        \EndIf
    \EndFor
    \State \Return $\bm{t}_\mathrm{best}, \bm{f}_\mathrm{best}$
\end{algorithmic}
\end{algorithm}

\begin{algorithm}[t]
    \caption{OptimalFPR (Algorithm 1 by \citet{vaidya2021partitioned})}
    \label{alg: OptimalFPR}
\begin{algorithmic}
    \State{\bfseries Input:}
    \State{$\bm{G} \in \mathbb{R}^k$ : probabilities that the keys are contained in each region} \leftskip=1em
    \State{$\bm{H} \in \mathbb{R}^k$ : probabilities that the non-keys are contained in each region}
    \State{$F \in (0,1)$ : target overall FPR}
    \State{$k \in \mathbb{N}$ : number of regions}
    \State{\bfseries Output:} \leftskip=0em
    \State{$\bm{f} \in \mathbb{R}^{k}$ : FPRs of each region} \leftskip=1em
    \\

    \leftskip=0em
    \For{$i = 1,2, \dots,k$}
        \State $f_i \leftarrow G_i F / H_i$
    \EndFor
    \While{$\exists ~ i : f_i > 1$}
        \State $G_{f=1} \leftarrow 0$
        \State $H_{f=1} \leftarrow 0$
        \For{$i = 1,2, \dots,k$}
            \If{$f_i \geq 1$}
                \State $f_i \leftarrow 1$
                \State $G_{f=1} \leftarrow G_{f=1} + G_i$
                \State $H_{f=1} \leftarrow H_{f=1} + H_i$
            \EndIf
        \EndFor
        \For{$i = 1,2, \dots,k$}
            \If{$f_i < 1$}
                \State $f_i \leftarrow \left((F-H_{f=1})G_i\right) / \left((1 - G_{f=1})H_i\right)$
            \EndIf
        \EndFor
    \EndWhile
    \State \Return $\bm{f}$
\end{algorithmic}
\end{algorithm}

First, we define terms to describe PLBF~\citep{vaidya2021partitioned} and explain its design.
Let $\mathcal{S}$ be a set of elements for which the Bloom filter is to be built, and let $\mathcal{Q}$ be the set of elements not included in $\mathcal{S}$ that is used when constructing PLBF ($\mathcal{S} \cap \mathcal{Q} = \varnothing$).
The elements included in $\mathcal{S}$ are called keys, and those not included are called non-keys.
$\mathcal{Q}$ is employed to approximate the distribution of non-key queries.
To build PLBF, a machine learning model is trained to predict whether a given element $x$ is included in the set $\mathcal{S}$ or $\mathcal{Q}$.
For a given element $x$, the machine learning model outputs a score $s(x) ~ (\in [0,1])$.
The score $s(x)$ indicates ``how likely is $x$ to be included in the set $\mathcal{S}$.''
PLBF partitions the score space $[0,1]$ into $k$ regions and assigns backup Bloom filters with different FPRs to each region (\cref{fig: PLBF_idea}).
Given a target overall FPR, $F ~ (\in (0,1))$, PLBF optimizes $\bm{t} ~ (\in \mathbb{R}^{k+1})$ and $\bm{f} ~ (\in \mathbb{R}^{k})$ to minimize the total memory usage.
Here, $\bm{t}$ is a vector of thresholds for partitioning the score space into $k$ regions, and $\bm{f}$ is a vector of FPRs for each region, satisfying $t_0 = 0 < t_1 < t_2 < \dots < t_k = 1$ and $0 < f_i \leq 1 ~ (i=1,2, \dots, k)$.

Next, we explain the optimization problem and its solution designed by PLBF.
PLBF finds values of $\bm{t}$ and $\bm{f}$ that minimize the total memory usage of the backup Bloom filter, under the condition that the expected value of the overall FPR is $F$.
Here, we define $\bm{G} ~ (\in \mathbb{R}^{k})$ and $\bm{H} ~ (\in \mathbb{R}^{k})$ as follows:
\begin{equation}
    G_i = \Pr[t_i \leq s(x) < t_{i+1} | x \in \mathcal{S}], ~~ H_i = \Pr[t_i \leq s(x) < t_{i+1} | x \in \mathcal{Q}].
\end{equation}
$G_i$ and $H_i$ represent the probability that the score of a key and non-key, respectively, falls within the $i$-th region.
Once $\bm{t}$ is determined, $\bm{G}$ and $\bm{H}$ can be obtained using $\mathcal{S}$, $\mathcal{Q}$, and the learned model.
The expected value of the overall FPR is $\sum_{i=1}^{k} H_i f_i$.
The total memory usage of the backup Bloom filter is given by $\sum_{i=1}^{k} c |\mathcal{S}| G_i  \log_{2} (1/f_i)$, where $c$ is a constant determined by the type of backup Bloom filters.
Therefore, the optimization problem designed by PLBF is expressed as follows:
\begin{equation}
\label{equ:prob}
    \begin{aligned}
    & \underset{\bm{f}, \bm{t}} {\text{minimize}} && 
    \sum_{i=1}^{k} c |\mathcal{S}| G_i \log_{2}\left(\frac{1}{f_i}\right) \\
    &\text{subject to} && \left\{ \,
        \begin{aligned}
            & \sum_{i=1}^{k} H_i f_i \leq F \\
            & t_0 = 0 < t_1 < t_2 < \dots < t_k = 1 \\
            & f_i \leq 1 ~~~~~~ (i=1,2,\dots, k).
        \end{aligned}
        \right.
    \end{aligned}
\end{equation}
The original PLBF paper solved the problem by ignoring the condition $f_i \leq 1$.
However, by introducing the set of indices $\mathcal{I}_{f=1}$, where $f_i = 1$, we can obtain a more general form of $\bm{f}$ as follows (see the appendix for details):
\begin{equation}
    f_i = 
    \begin{dcases}
    1 & (i \in \mathcal{I}_{f=1}) \\
    \frac{(F - H_{f=1})G_i}{(1 - G_{f=1})H_i} & (i \notin \mathcal{I}_{f=1}),
    \end{dcases}
\end{equation}
where
\begin{equation}
\label{equ:define_G_f=1_H_f=1}
    G_{f=1} = \sum_{i \in \mathcal{I}_{f=1}} G_i, ~~ H_{f=1} = \sum_{i \in \mathcal{I}_{f=1}} H_i.
\end{equation}
Then, the total backup Bloom filter memory usage, that is, the objective function of \cref{equ:prob}, is
\begin{equation}
\label{equ:minterm}
c |\mathcal{S}| (1-G_{f=1})\log_{2}\left(\frac{1-G_{f=1}}{F-H_{f=1}}\right) -
c |\mathcal{S}| \sum_{i\notin\mathcal{I}_{f=1}} G_i \log_{2}\left(\frac{G_i}{H_i}\right).
\end{equation}

Next, we explain how PLBF finds the optimal $\bm{t}$ and $\bm{f}$.
PLBF divides the score space $[0,1]$ into $N(>k)$ segments and then finds the optimal $\bm{t}$ and $\bm{f}$ using DP.
Deciding how to cluster $N$ segments into $k$ consecutive regions corresponds to determining the threshold $\bm{t}$ (\cref{fig: PLBF_division_score_space}).
We denote the probabilities that the key and non-key scores are contained in the $i$-th \textbf{segment} by $g_i$ and $h_i$, respectively.
Then, the probabilities that the key and non-key scores are contained in the $i$-th \textbf{region}, that is, $G_i$ and $H_i$, are then calculated by summing the $g$ and $h$ values of the segments within each region (for example, $g_1+g_2+g_3=G_1$ in \cref{fig: PLBF_division_score_space}).

PLBF makes the following assumption to find the optimal thresholds $\bm{t}$.
\begin{assumption}
\label{ass: PLBFkthregion}
For optimal $\bm{t}$ and $\bm{f}$, $\mathcal{I}_{f=1}=\varnothing$ or $\mathcal{I}_{f=1}=\{k\}$.
\end{assumption}
In other words, PLBF assumes that there is at most one region for which $f=1$, and if there is one, it is the region with the highest score output by the machine learning model.
Using this assumption, PLBF finds the optimal solution by trying all possible threshold values $t_{k-1}$.
That is, PLBF finds the thresholds $\bm{t}$ that minimize memory usage by solving the following problem for each $j=k,k+1, \dots, N$;
we find a way to cluster the $1$st to $(j-1)$-th segments into $k-1$ regions that maximizes
\begin{equation}
     \sum_{i=1}^{k-1} G_i \log_{2}\left(\frac{G_i}{H_i}\right).
\end{equation}

PLBF solves this problem by building a $j \times k$ DP table $\mathrm{DP}_\mathrm{KL}^{j}[p][q]$ ($p=0,1, \dots, j-1$ and $q=0,1, \dots, k-1$) for each $j=k,k+1, \dots, N$;
$\mathrm{DP}_\mathrm{KL}^{j}[p][q]$ denotes the maximum value of $\sum_{i=1}^{q} {G_i}\log_{2} \left(G_i / H_i \right)$ that can be obtained by clustering the $1$st to $p$-th segments into $q$ regions.
To construct PLBF, one must find a clustering method that achieves $\mathrm{DP}_\mathrm{KL}^{j}[j-1][k-1]$.
$\mathrm{DP}_\mathrm{KL}^{j}$ can be computed recursively as follows:
\begin{align}
&~ \mathrm{DP}_\mathrm{KL}^{j}[p][q] \\
= &~ \begin{dcases}
    0 & (p=0 \land q=0) \\
    - \infty & ((p=0 \land q>0) \lor (p>0 \land q=0)) \\
    \max_{i=1,2, \dots, p} \left(
        \mathrm{DP}_\mathrm{KL}^{j}[i-1][q-1] + 
        d_\mathrm{KL}(i, p)
    \right) & (\mathrm{else}),
    \end{dcases}
\end{align}
where the function $d_\mathrm{KL}(i_l,i_r)$ is the following function defined for integers $i_l$ and $i_r$ satisfying $1 \leq i_l \leq i_r \leq N$:
\begin{equation}
    d_\mathrm{KL}(i_l,i_r) = \left(\sum_{i=i_l}^{i_r} g_i\right) \log_{2} \left(
    \frac{\sum_{i=i_l}^{i_r} g_i}{\sum_{i=i_l}^{i_r} h_i}
    \right).
\end{equation}
The time complexity to construct this DP table is $\mathcal{O}(j^2 k)$.
Then, by tracing the recorded transitions backward from $\mathrm{DP}_\mathrm{KL}^{j}[j-1][k-1]$, we obtain the best clustering with a time complexity of $\mathcal{O}(k)$.
As the DP table is constructed for each $j=k,k+1, \dots, N$, the overall complexity is $\mathcal{O}(N^3k)$.
We can divide the score space more finely with a larger $N$ and thus obtain a $\bm{t}$ closer to the optimum.
However, the time complexity increases rapidly with increasing $N$.

We show the PLBF algorithm in \cref{alg: PLBF}, and the \textsc{OptimalFPR} algorithm, which is used by PLBF, in \cref{alg: OptimalFPR}.
For details of \textsc{SpaceUsed}, please refer to Equation (2) in the original PLBF paper~\citep{vaidya2021partitioned}.
The worst case time complexity of \textsc{OptimalFPR} is $\mathcal{O}(k^2)$ because this algorithm repeatedly solves the relaxation problem up to $k$ times with a complexity of $\mathcal{O}(k)$ until there are no more $i$s such that $f_i > 1$.
The time complexity of \textsc{SpaceUsed} is $\mathcal{O}(k)$.
As the DP table is constructed with a time complexity of $\mathcal{O}(j^2 k)$ for each $j=k,k+1, \dots, N$, the overall complexity is $\mathcal{O}(N^3k)$.

\subsection{Matrix Problem}
\label{sec: preliminaries MP}

\begin{figure}[t]
    \begin{minipage}{0.3\columnwidth}
        \centering
        \includegraphics[width=\columnwidth]{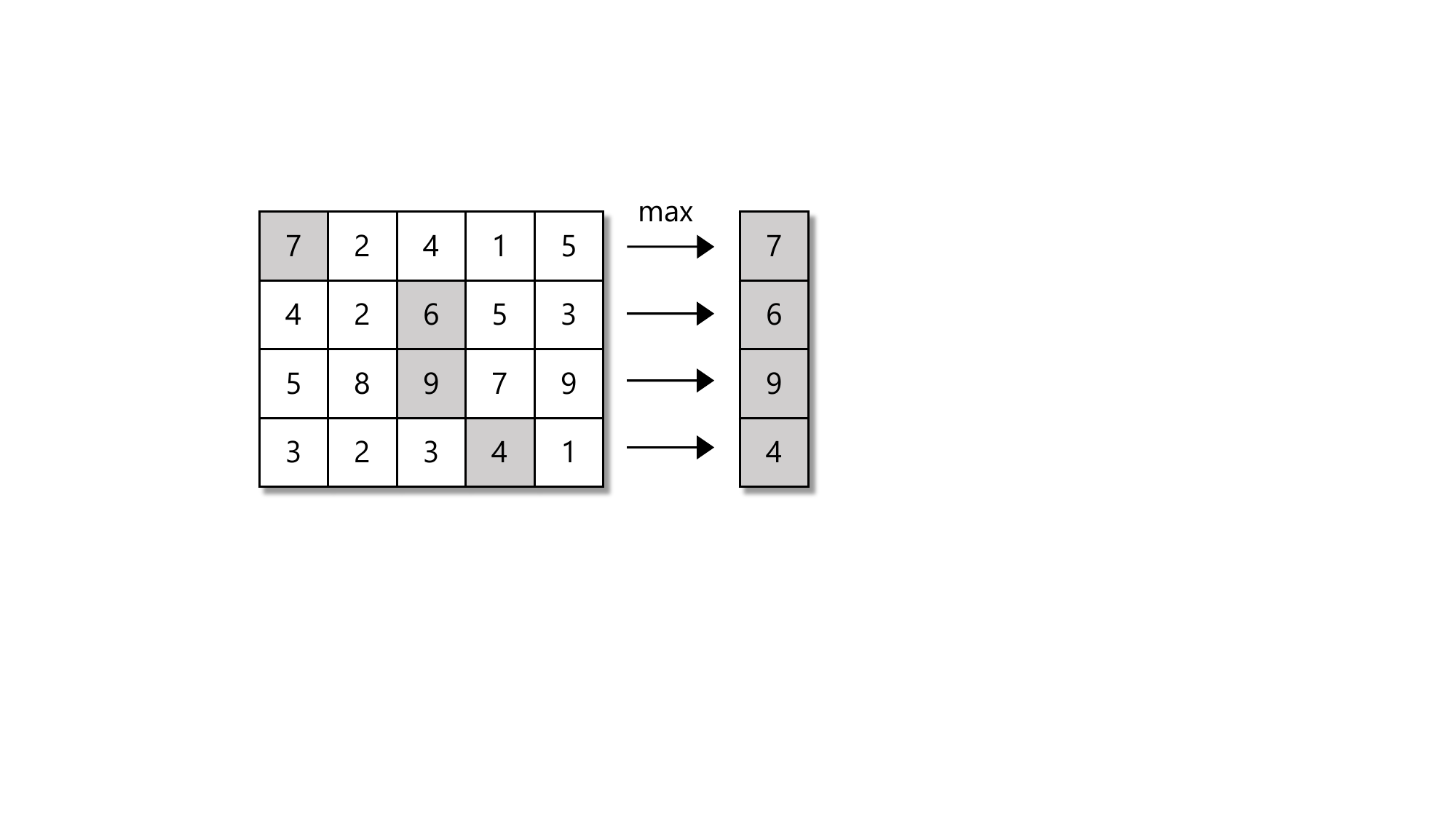}
        \caption{Example of a \textit{matrix problem} for a \textit{monotone matrix}.}
        \label{fig: matr_prob_on_mono}
    \end{minipage}
    \hfill
    \begin{minipage}{0.64\columnwidth}
        \centering
        \subfigure[]{
            \includegraphics[width=0.28\columnwidth]{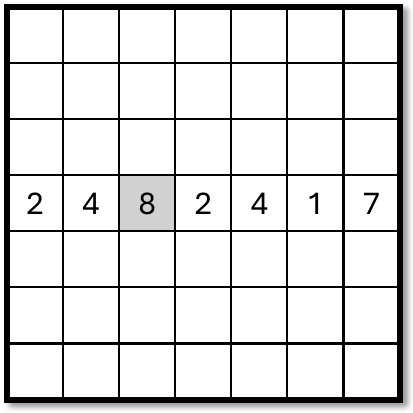}
            \label{fig: monotone_maxima_a}
        }
        \subfigure[]{
            \includegraphics[width=0.28\columnwidth]{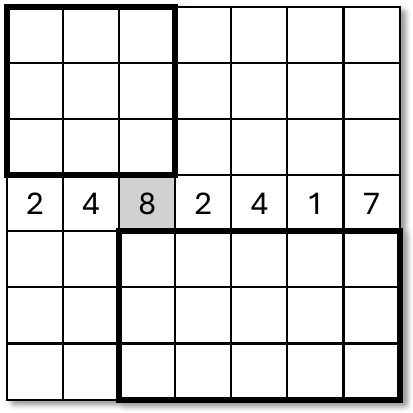}
            \label{fig: monotone_maxima_b}
        }
        \subfigure[]{
            \includegraphics[width=0.28\columnwidth]{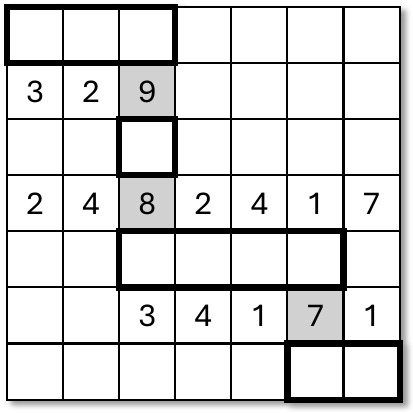}
            \label{fig: monotone_maxima_c}
        }
        \caption{Monotone maxima. \subref{fig: monotone_maxima_a} exhaustive search on the middle row, and \subref{fig: monotone_maxima_b} narrowing the search area \subref{fig: monotone_maxima_c} are repeated recursively.}
        \label{fig: monotone_maxima}
    \end{minipage}
\end{figure}


Here, we define the \textit{matrix problem}, \textit{monotone matrix}, and \textit{totally monotone matrix} following \citet{aggarwal1987geometric}.
\begin{definition}[\textit{matrix problem}]
\label{def: matrix problem}
Let $B$ be an $n \times m$ real matrix, and let $J: \{1, 2, \dots, n\} \to \{1, 2, \dots, m\}$ be a function defined on $B$. For each row index $i$, $J(i)$ denotes the smallest column index $j$ such that $B_{i, j}$ is the maximum value in the $i$-th row of $B$.
Finding the value of $J(i)$ for all $i ~ (\in \{1,2,\dots,n\})$ is called a \textit{matrix problem}.
\end{definition}
\begin{definition}[\textit{monotone matrix}]
\label{def: monotone matrix}
An $n \times m$ real matrix $B$ is called a \textit{monotone matrix} if $J(i) \leq J(i')$ for any $i$ and $i'$ that satisfy $1 \leq i < i' \leq n$.
\end{definition}
\begin{definition}[\textit{totally monotone matrix}]
An $n \times m$ real matrix $B$ is called a \textit{totally monotone matrix} if every submatrix of $B$ is monotone.
\end{definition}
It is known that the definition of \textit{totally monotone matrix} is equivalent to the following: all $2 \times 2$ sub-matrices are monotone, that is, for any $1\leq i < i' \leq n$ and $1\leq j < j' \leq m$, $A_{i,j} < A_{i,j'} \Rightarrow A_{i',j} < A_{i',j'}$ holds.

An example of a \textit{matrix problem} for a \textit{monotone matrix} is shown in \cref{fig: matr_prob_on_mono}.
Solving the \textit{matrix problem} for a general $n \times m$ matrix requires $\mathcal{O}(nm)$ time complexity because all matrix values must be checked.
Meanwhile, if the matrix is known to be a \textit{monotone matrix}, the \textit{matrix problem} for this matrix can be solved with time complexity of $\mathcal{O}(n + m \log n)$ using the monotone maxima algorithm~\citep{aggarwal1987geometric}.
The exhaustive search for the middle row and the refinement of the search range are repeated recursively (\cref{fig: monotone_maxima}).
In addition, the SMAWK algorithm can solve \textit{matrix problem} in $\mathcal{O}(n + m)$ for a \textit{totally monotone matrix}. This algorithm efficiently solves \textit{matrix problem} by recursively reducing the number of rows and columns to be considered, using the properties of completely monotonic matrices to maintain the necessary elements.


\section{Methods}
\label{sec: methods}

In this section, we propose three methods to speed up the construction of PLBF. 
First, we show that the computation of building the DP table $\mathcal{O}(N)$ times in PLBF is redundant, and propose a method called fast PLBF that builds the DP table only one time and reuses it efficiently (\cref{sec: fast plbf}). The computational complexity of building PLBF is $\mathcal{O}(N^3 k)$, but fast PLBF is $\mathcal{O}(N^2 k)$. We then propose fast PLBF++ (\cref{sec: fast plbf pp}). In this method, we reduce the construction time to $\mathcal{O}(Nk\log N)$ by (i) replacing OptimalFPR with a method that has a smaller worst-case computational complexity, and (ii) using the monotone maxima algorithm to speed up the construction of DP table. Finally, we propose fast PLBF\# (\cref{sec: fast plbf sharp}). This algorithm reduces the construction time to $\mathcal{O}(Nk\log k)$ by using the SMAWK algorithm instead of the monotone maxima. Although fast PLBF++ and fast PLBF\# use approximations and do not necessarily achieve the same data structure as PLBF, we guarantee that, under an ideal condition, fast PLBF++ and fast PLBF\# will have the same structure as PLBF. Additionally, we provide a theoretical guarantee for fast PLBF++ even in non-ideal conditions.

\subsection{Fast PLBF}
\label{sec: fast plbf}

\begin{algorithm}[t]
    \caption{Fast PLBF}
    \label{alg: FastPLBF}
\begin{algorithmic}
    \State {\bfseries Algorithm:}
    \State {$\textsc{MaxDivDP}(\bm{g}, \bm{h}, N, k)$ :}  \leftskip=1em
    \State {constructs $\mathrm{DP}_\mathrm{KL}^{N}$} \leftskip=2em
    \State {$\textsc{ThresMaxDiv}(\mathrm{DP}_\mathrm{KL}^{N}, j, k)$ :} \leftskip=1em
    \State {traces the transitions backward from $\mathrm{DP}_\mathrm{KL}^{N}[j-1][k-1]$ and finds the optimal thresholds} \leftskip=2em
    \\

    \leftskip=0em
    \State $\mathrm{MinSpaceUsed} \leftarrow \infty$
    \State $\bm{t}_\mathrm{best} \leftarrow \mathrm{None}$
    \State $\bm{f}_\mathrm{best} \leftarrow \mathrm{None}$

    \State $\mathrm{DP}_\mathrm{KL}^{N} \leftarrow \textsc{MaxDivDP}(\bm{g}, \bm{h}, N, k)$     
    
    \For{$j = k,k+1, \dots, N$}
        \State $\bm{t} \leftarrow \textsc{ThresMaxDiv}(\mathrm{DP}_\mathrm{KL}^{N}, j, k)$   
        \State $\bm{f} \leftarrow \textsc{OptimalFPR}(\bm{g},\bm{h},\bm{t},F,k)$      
        \If{$\mathrm{MinSpaceUsed} > \textsc{SpaceUsed}(\bm{g},\bm{h},\bm{t},\bm{f})$}
            \State $\mathrm{MinSpaceUsed} \leftarrow \textsc{SpaceUsed}(\bm{g},\bm{h},\bm{t},\bm{f})$
            \State $\bm{t}_\mathrm{best} \leftarrow \bm{t}$
            \State $\bm{f}_\mathrm{best} \leftarrow \bm{f}$
        \EndIf
    \EndFor
    \State \Return $\bm{t}_\mathrm{best}, \bm{f}_\mathrm{best}$
\end{algorithmic}
\end{algorithm}

We propose fast PLBF, which constructs the same data structure as PLBF more quickly than PLBF by omitting the redundant construction of DP tables.
Fast PLBF uses the same design as PLBF and finds the optimal $\bm{t}$ and $\bm{f}$ to minimize memory usage.

PLBF constructs a DP table for each $j=k,k+1, \dots, N$.
We found that this computation is redundant and that we can also use the last DP table $\mathrm{DP}_\mathrm{KL}^{N}$ for $j=k,k+1, \dots, N-1$.
This is because the maximum value of $\sum_{i=1}^{k-1} G_i \log_{2}\left(G_i / H_i \right)$ when clustering the $1$st to $(j-1)$-th segments into $k-1$ regions is equal to $\mathrm{DP}_\mathrm{KL}^{N}[j-1][k-1]$.
We can obtain the best clustering by tracing the transitions backward from $\mathrm{DP}_\mathrm{KL}^{N}[j-1][k-1]$.
The time complexity of tracing the transitions is $\mathcal{O}(k)$, which is faster than constructing the DP table.

The pseudo-code for fast PLBF construction is provided in \cref{alg: FastPLBF}.
The time complexity of building $\mathrm{DP}_\mathrm{KL}^{N}$ is $\mathcal{O}(N^2k)$, and the worst-case complexity of subsequent computations is $\mathcal{O}(Nk^2)$.
Because $N > k$, the total complexity is $\mathcal{O}(N^2k)$, which is faster than $\mathcal{O}(N^3k)$ for PLBF, although fast PLBF constructs the same data structure as PLBF.

\subsection{Fast PLBF++}
\label{sec: fast plbf pp}

We propose fast PLBF++, which can be constructed even faster than fast PLBF.
The acceleration of fast PLBF++ consists of two components.
First, fast PLBF++ improves the worst-case computational complexity of \textsc{OptimalFPR}, the algorithm for finding the optimal FPR for each backup Bloom filter (\cref{sec: speed up optimalfpr})
Second, fast PLBF++ accelerates the construction of the DP table $\mathrm{DP}_\mathrm{KL}^{N}$, that is, \textsc{MaxDivDP}, by taking advantage of a characteristic that DP tables often have (\cref{sec: speed up maxdivdp}).
We provide theoretical guarantees on the memory efficiency of fast PLBF++ under both ideal conditions (\cref{sec: fast plbf pp ideal}) and non-ideal conditions (\cref{sec: fast plbf pp non ideal}).

\subsubsection{Speed up \textsc{OptimalFPR}}
\label{sec: speed up optimalfpr}
We improve the worst-case complexity of PLBF's \textsc{OptimalFPR} from $\mathcal{O}(k^2)$ to $\mathcal{O}(k \log k)$.
The original \textsc{OptimalFPR} algorithm repeatedly solves the relaxation problem, where the condition $f_i \leq 1$ is ignored, until there are no more $i$ satisfying $f_i > 1$.
Since each iteration takes a complexity of $\mathcal{O}(k)$ and the repetition is at most $k$ times, the worst-case complexity of the original \textsc{OptimalFPR} algorithm is $\mathcal{O}(k^2)$.
However, by carefully analyzing the optimization problem, we can improve the worst-case complexity to $\mathcal{O}(k \log k)$.

First, we prove the following theorem.
\begin{theorem}
    There exists an optimal solution $\bm{f}$ to the optimization problem (Equation~\ref{equ:prob}) that satisfies
    \begin{equation}
        \frac{G_i}{H_i} < \frac{G_j}{H_j}
    \end{equation}
    for all $i, ~ j$ such that $f_i < 1, ~ f_j = 1$. (Here, when $H_i=0$, $G_i/H_i=\infty$, which is greater than any finite value.)
\end{theorem}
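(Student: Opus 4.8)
The plan is to prove this by a local exchange (perturbation) argument. Since the claimed inequality only involves $\bm f$ and the quantities $\bm G,\bm H$, it suffices to fix $\bm G,\bm H$ at the values induced by an optimal choice of thresholds $\bm t$ and work with the resulting minimization over $\bm f$ alone; an optimal $\bm f$ for that inner problem, together with the optimal $\bm t$, is then an optimal solution of (Equation~\ref{equ:prob}). Throughout I will use $\sum_i G_i=\sum_i H_i=1$. Starting from an arbitrary optimal $\bm f$ with $\mathcal I_{f=1}=\{i:f_i=1\}$, I will show that any pair $i\notin\mathcal I_{f=1}$, $j\in\mathcal I_{f=1}$ with $G_i/H_i\ge G_j/H_j$ lets us strictly decrease the objective, contradicting optimality — apart from a few degenerate regions which the ``there exists'' in the statement exists to absorb.

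First I would collect structural facts about an optimal $\bm f$. Because $\sum_i H_if_i=1>F$ when every $f_i=1$, the set $\{i:f_i<1\}$ is nonempty, and since $\log_2(1/f_i)$ is strictly decreasing the budget constraint $\sum_i H_if_i\le F$ must be tight at an optimum (otherwise raise some $f_i<1$ slightly to lower the objective). For $i\notin\mathcal I_{f=1}$ the bound $f_i\le 1$ is inactive, so Karush--Kuhn--Tucker stationarity with the (positive) multiplier $\mu$ on the tight budget constraint gives $G_i/(H_if_i)=:\Lambda$, a constant independent of $i$ over $i\notin\mathcal I_{f=1}$; this is exactly the closed form $f_i=\frac{(F-H_{f=1})G_i}{(1-G_{f=1})H_i}$ already derived in the preliminaries, with $\Lambda=\frac{1-G_{f=1}}{F-H_{f=1}}$. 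In particular $f_i<1$ forces $G_i/H_i=\Lambda f_i<\Lambda$ for every $i\notin\mathcal I_{f=1}$.

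The exchange step: fix $i\notin\mathcal I_{f=1}$ and $j\in\mathcal I_{f=1}$ with $H_i,H_j>0$, and for $\delta\ge 0$ set $\tilde f_i=f_i+\delta/H_i$, $\tilde f_j=1-\delta/H_j$, leaving the remaining coordinates unchanged. This keeps $H_if_i+H_jf_j$, hence the whole budget sum, constant, so it stays feasible for $\delta$ small enough that $\tilde f_i\le 1$. Differentiating the objective (in units of $c|\mathcal S|$) at $\delta=0$ gives $\Delta'(0)=\frac{1}{\ln 2}\bigl(\frac{G_j}{H_j}-\frac{G_i}{H_if_i}\bigr)=\frac{1}{\ln 2}\bigl(\frac{G_j}{H_j}-\Lambda\bigr)$. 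If $G_i/H_i\ge G_j/H_j$ for such a pair then $\frac{G_j}{H_j}\le\frac{G_i}{H_i}<\Lambda$, so $\Delta'(0)<0$ and a small $\delta>0$ strictly lowers the objective, contradicting optimality. Hence at any optimum $G_i/H_i<G_j/H_j$ for every such pair with $H_i,H_j>0$.

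It remains to dispose of the degenerate regions, which I expect to be the only real bookkeeping obstacle (together with making the stationarity characterization of $f_i$ fully rigorous, i.e.\ invoking the closed form from the preliminaries). A region with $H_i=0$ and $G_i>0$ has its cost term $G_i\log_2(1/f_i)$ strictly minimized at $f_i=1$ while contributing nothing to the budget, so $f_i=1$ at any optimum and such a region never plays the ``$f_i<1$'' role; the convention $G_j/H_j=\infty$ when $H_j=0$ makes the inequality automatic whenever $j$ plays the ``$f_j=1$'' role with $H_j=0$; and since the value of $f_i$ on a region with $G_i=0$ is immaterial to the objective and relaxing it only loosens the budget, among the optimal solutions we may pick one that sets $f_i<1$ on every such region (keeping optimality), after which $G_i/H_i=0<G_j/H_j$ for every $j$ with $f_j=1$ — which must then have $G_j>0$. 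Assembling these cases with the exchange argument yields an optimal $\bm f$ for which the displayed inequality holds for all relevant $i,j$, which is what the theorem claims.
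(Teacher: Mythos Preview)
Your proposal is correct and close in spirit to the paper's argument, but with a different emphasis. The paper's proof first splits off the trivial case $\sum_{i:G_i>0}H_i\le F$ (where setting $f_i=1$ exactly on $\{i:G_i>0\}$ is optimal and the claim is immediate), and in the remaining case simply invokes the full KKT characterization from the appendix: there is a threshold $T=(1-G_{f=1})/(F-H_{f=1})$ with $G_i/H_i\ge T$ for $i\in\mathcal I_{f=1}$ (when $H_i>0$) and $G_i/H_i<T$ for $i\notin\mathcal I_{f=1}$, which yields the claim at once. You use the same closed form on $i\notin\mathcal I_{f=1}$ to get $G_i/H_i<\Lambda$, but instead of citing complementary slackness for the constraint $f_j\le1$ you re-derive $G_j/H_j\ge\Lambda$ by a direct budget-preserving exchange; this is more elementary and is self-contained if one grants only the preliminaries' closed form rather than the appendix's full KKT analysis. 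One small caveat: your assertion that the budget is necessarily tight at an optimum fails precisely in the paper's first case (when every $i$ with $f_i<1$ has $G_i=0$, the objective is already $0$ and the budget can be slack); your cleanup step for $G_i=0$ regions does cover this situation, but you should flag explicitly that the KKT/exchange machinery is only invoked in the non-trivial case.
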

\begin{proof}
    First, in the case of $\sum_{i \in \{j \mid G_j > 0\}} H_i \leq F$, one of the optimal solution $\bm{f}$ is
    \begin{equation}
        f_i = 
        \begin{dcases}
        0 & (G_i = 0) \\
        1 & (G_i > 0),
        \end{dcases}
    \end{equation}
    and it is clear that the claim holds.

    Otherwise, in the case of $\sum_{i \in \{j \mid G_j > 0\}} H_i > F$, from \cref{equ:H_i=0_and_I_f=1}, \cref{equ:H_i>0_and_I_f=1}, and \cref{equ:not_I_f=1}, there exists an optimal solution $\bm{f}$ and a threshold $T ~ (\in \mathbb{R})$ that satisfy
    \begin{equation}
        H_i = 0 ~ \Rightarrow ~ f_i=1,
    \end{equation}
    \begin{equation}
        H_i > 0 ~ \land ~ \frac{G_i}{H_i} \geq T ~ \Rightarrow ~ f_i=1,
    \end{equation}
    \begin{equation}
        H_i > 0 ~ \land ~ \frac{G_i}{H_i} < T ~ \Rightarrow ~ f_i<1.
    \end{equation}
    Therefore, it is clear that the claim holds.
\end{proof}
This theorem indicates that only some $i$ with the highest $G_i / H_i$ are included in the set $\mathcal{I}_{f=1}$. 
As a result, there are at most $k$ potential candidate sets $\mathcal{I}_{f=1}$ that could form the optimal solution. 

To find the optimal set, we first sort the values of $G_i / H_i$ in descending order.
This sorting operation has a worst-case time complexity of $\mathcal{O}(k \log k)$.
Once sorted, we evaluate the objective function for each candidate set $\mathcal{I}_{f=1}$ sequentially.
To do this efficiently, we maintain and update the following values as we examine each candidate set: $G_{f=1} ~ (= \sum_{i \in \mathcal{I}_{f=1}} G_i)$, $H_{f=1} ~ (= \sum_{i \in \mathcal{I}_{f=1}} H_i)$, and $\sum_{i \notin \mathcal{I}_{f=1}} G_i \log_{2}\left(G_i / H_i \right)$.
This allows us to evaluate \cref{equ:minterm} in $\mathcal{O}(1)$ time for each candidate set.
Therefore, the total worst-case complexity is $\mathcal{O}(k \log k)$.

\subsubsection{Speed up \textsc{MaxDivDP}}
\label{sec: speed up maxdivdp}

\begin{figure}[t]
    \centering
    \hfill
    \subfigure[]{
        \includegraphics[width=0.35\columnwidth]{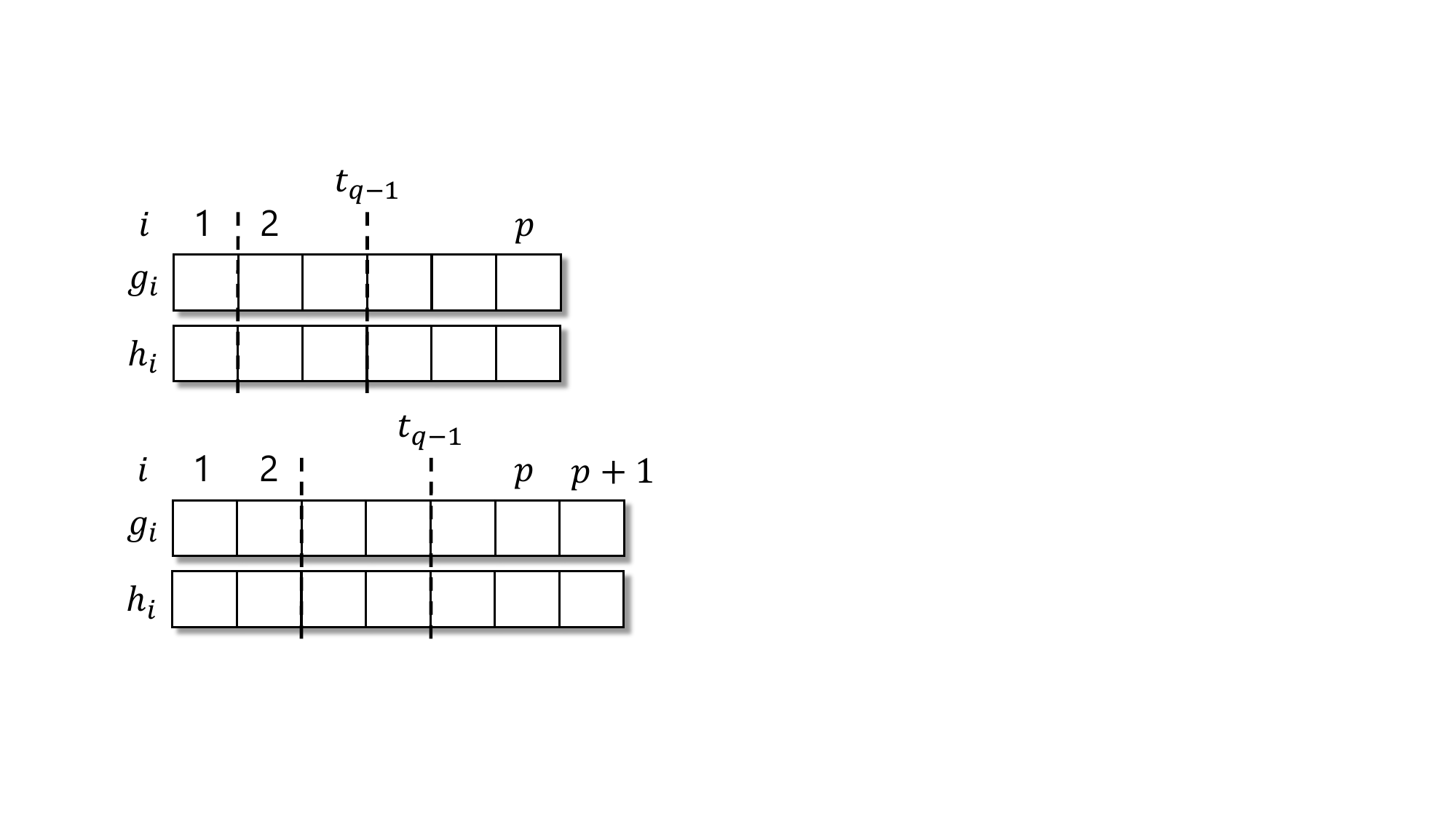}
        \label{fig: TransMonoIdea_monotone}
    }
    \hfill
    \subfigure[]{
        \includegraphics[width=0.35\columnwidth]{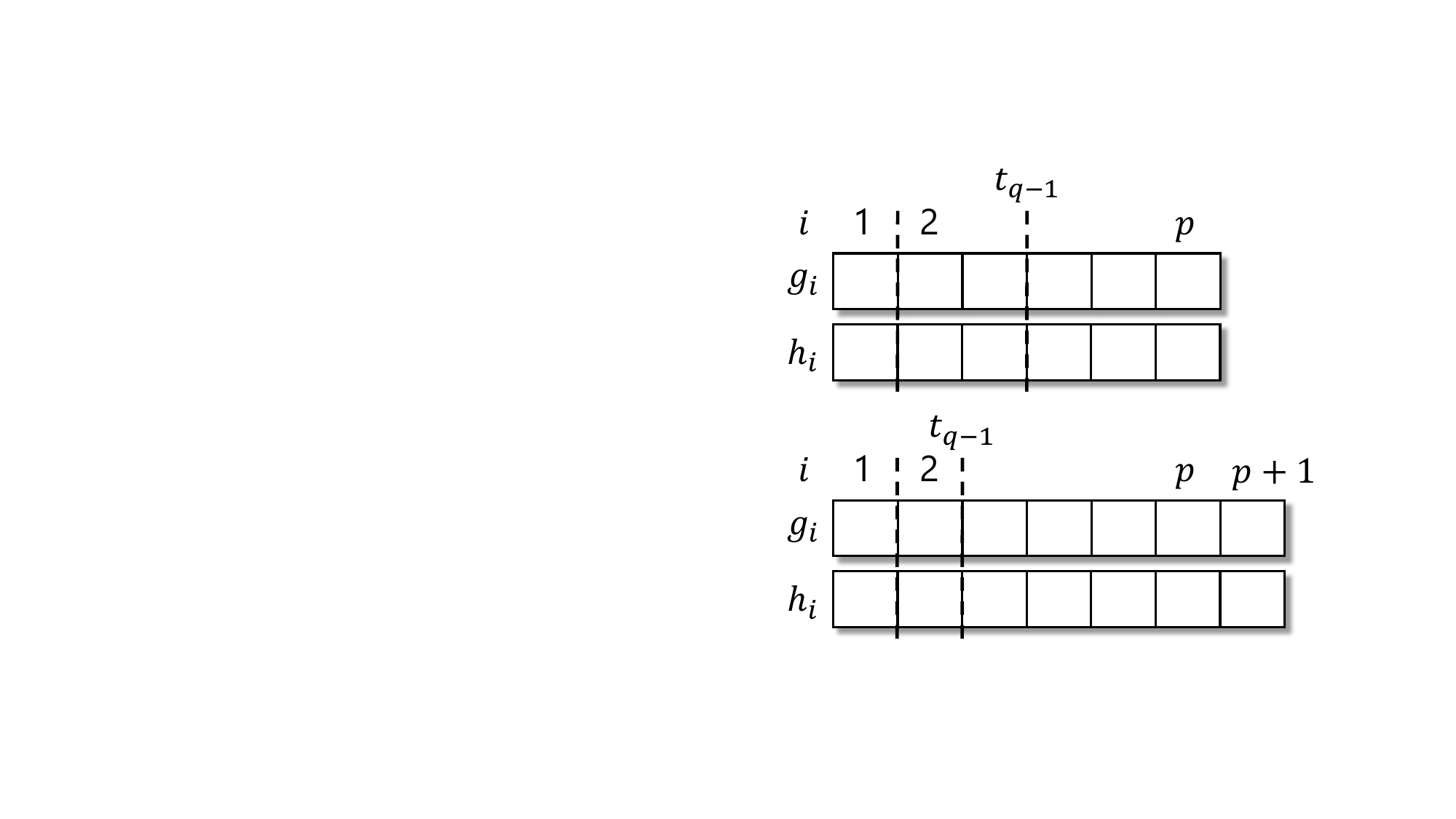}
        \label{fig: TransMonoIdea_not_monotone}
    }
    \hfill
    \caption{When the number of regions is fixed at $q$ and the number of segments is increased by $1$, the optimal $t_{q-1}$ remains unchanged or increases if $A$ is a \textit{monotone matrix} \subref{fig: TransMonoIdea_monotone}. When $A$ is not a \textit{monotone matrix} \subref{fig: TransMonoIdea_not_monotone}, the optimal $t_{q-1}$ may decrease.}
    \label{fig: TransMonoIdea}
\end{figure}

\begin{figure}[t]
    \centering
    \includegraphics[width=0.75\columnwidth]{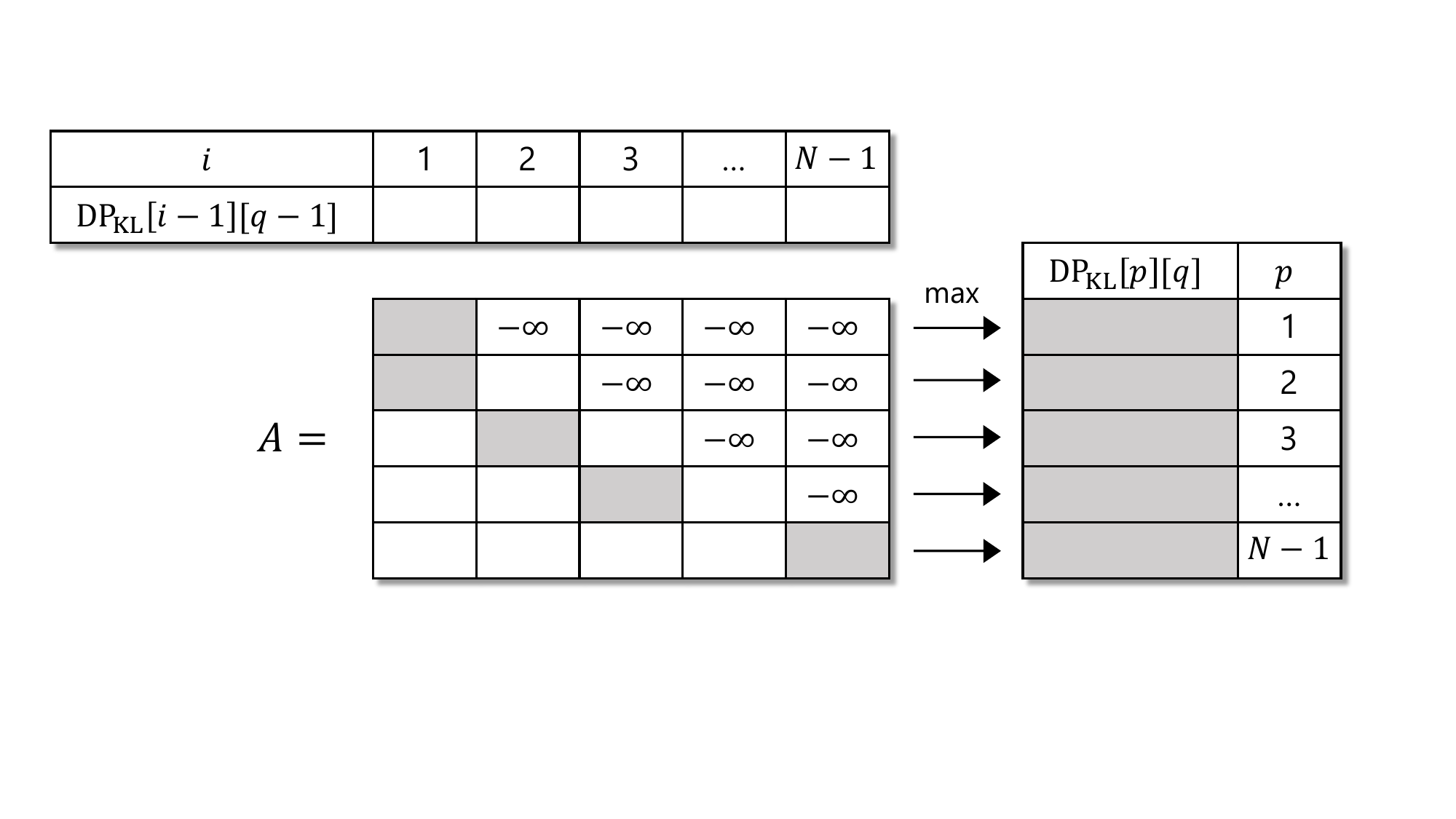}
    \caption{Computing $\mathrm{DP}_\mathrm{KL}[p][q] ~ (p=1,2, \dots, N-1)$ from $\mathrm{DP}_\mathrm{KL}[p][q-1] ~ (p=0,1, \dots, N-2)$ via the matrix $A$. The computation is the same as solving the \textit{matrix problem} for matrix $A$. When the score distribution is \textit{ideal}, $A$ is a \textit{monotone matrix}.}
    \label{fig: matr_prob_PLBF}
\end{figure}

Fast PLBF++ speed up \textsc{MaxDivDP} based on the following intuition;
it is more likely that when the number of regions is fixed at $q$ and the number of segments is increased by $1$, the optimal $t_{q-1}$ remains unchanged or increases, as in \cref{fig: TransMonoIdea_monotone}, than that $t_{q-1}$ decreases, as in \cref{fig: TransMonoIdea_not_monotone}.
Fast PLBF++ takes advantage of this characteristic to construct the DP table with less complexity $\mathcal{O}(Nk\log N)$.

First, we define the terms to describe fast PLBF++.
For simplicity, $\mathrm{DP}_\mathrm{KL}^{N}$ is denoted as $\mathrm{DP}_\mathrm{KL}$ in this section.
The $(N-1) \times (N-1)$ matrix $A$ is defined as follows:
\begin{equation}
\label{equ:ADif}
    A_{p,i} = \begin{dcases}
        \mathrm{DP}_\mathrm{KL}[i-1][q-1] + 
        d_\mathrm{KL}(i, p) & (i = 1, 2, \dots, p) \\
        - \infty & (\mathrm{else}).
    \end{dcases}
\end{equation}
Then, from the definition of $\mathrm{DP}_\mathrm{KL}$, 
\begin{equation}
    \label{equ:DP_A}
    \mathrm{DP}_\mathrm{KL}[p][q] = \max_{i=1,2, \dots, N-1} A_{p,i}.
\end{equation}
The matrix $A$ represents the intermediate calculations involved in determining $\mathrm{DP}_\mathrm{KL}[p][q]$ from $\mathrm{DP}_\mathrm{KL}[p][q-1]$ (\cref{fig: matr_prob_PLBF}). 
When the intuition that $t_{q-1}$ does not decrease, as discussed at the beginning of this section, holds, $A$ is a \textit{monotone matrix}.

While fast PLBF checks almost all elements of the matrix $A$ and thus takes $\mathcal{O}(N^2)$ computations, fast PLBF++ uses the monotone maxima algorithm to obtain an approximation of the maximum value of each row of the matrix $A$ in $\mathcal{O}(N \log N)$ computations. This is much faster than PLBF and fast PLBF, but it involves certain approximations, so it does not necessarily achieve the same optimal memory efficiency as those methods.

\subsubsection{Theoretical Guarantees Under Ideal Conditions}
\label{sec: fast plbf pp ideal}

Here, we demonstrate that the memory efficiency of fast PLBF++ is equivalent to that of PLBF under ideal conditions. The acceleration of fast PLBF++ is attributed to two factors. The first factor, speed up \textsc{OptimalFPR}, involves no approximation, so we focus on proving that there is no degradation due to approximation in the second factor, speed up \textsc{MaxDivDP}, under ideal conditions.

First, we define an \textit{ideal score distribution} as follows.
\begin{definition}
A score distribution is called \textit{ideal} if
\begin{equation}
    \frac{g_1}{h_1} \leq \frac{g_2}{h_2} \leq \dots \leq \frac{g_N}{h_N}.
\end{equation}
\end{definition}
An \textit{ideal score distribution} implies that the probability of $x ~ (\in \mathcal{S})$ and the score $s(x)$ are ideally well correlated.
In other words, an \textit{ideal score distribution} means that the machine learning model learns the distribution ideally.


Fast PLBF++ does not necessarily have the same data structure as PLBF because $A$ is not necessarily a \textit{monotone matrix}.
However, as the following theorem shows, we can prove that $A$ is a \textit{monotone matrix} and thus fast PLBF++ has the same data structure as (fast) PLBF under certain conditions.

First, the following lemma holds.
\begin{lemma}
\label{lem: dkl_monge}
If the score distribution is \textit{ideal}, then for all $1 \leq i < i'\leq p < p' \leq N$, it holds that
\begin{equation}
\label{equ:lemma target}
\left(d_\mathrm{KL}(i, p) + d_\mathrm{KL}(i', p')\right) - \left(d_\mathrm{KL}(i, p') + d_\mathrm{KL}(i', p)\right) \geq 0.
\end{equation}
\end{lemma}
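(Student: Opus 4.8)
The plan is to collapse the four-term expression into a one-dimensional statement about the convex function $\eta(r) = r\log_2 r$ and then finish with a single application of the chord (Jensen) inequality.

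\textbf{Step 1 (block decomposition).} I would first observe that $d_\mathrm{KL}(i_l,i_r)$ can be written as $H\,\eta(G/H)$ with $G=\sum_{l=i_l}^{i_r} g_l$ and $H=\sum_{l=i_l}^{i_r} h_l$. For indices $1\le i<i'\le p<p'\le N$, group the segments into three consecutive, nonempty blocks $X=\{i,\dots,i'-1\}$, $Y=\{i',\dots,p\}$, $Z=\{p+1,\dots,p'\}$, with $h$-masses $H_1,H_2,H_3$ and $g/h$-ratios $r_1,r_2,r_3$. Since $X$ precedes $Y$ precedes $Z$, the \emph{ideal} ordering $g_1/h_1\le\dots\le g_N/h_N$ together with the mediant inequality (whose one-line proof I would include: $g_lh_m-g_mh_l=h_lh_m(g_l/h_l-g_m/h_m)\le 0$ when $l$ precedes $m$) yields $r_1\le r_2\le r_3$. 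The four intervals appearing in the lemma are $[i,p]=X\cup Y$, $[i',p']=Y\cup Z$, $[i,p']=X\cup Y\cup Z$ and $[i',p]=Y$, so each term becomes a weighted $\eta$-value: $d_\mathrm{KL}(i,p)=(H_1+H_2)\,\eta(\bar r_{12})$, $d_\mathrm{KL}(i',p')=(H_2+H_3)\,\eta(\bar r_{23})$, $d_\mathrm{KL}(i,p')=(H_1+H_2+H_3)\,\eta(\bar r_{123})$ and $d_\mathrm{KL}(i',p)=H_2\,\eta(r_2)$, where $\bar r_{12},\bar r_{23},\bar r_{123}$ are the $H$-weighted means of the corresponding $r$'s.

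\textbf{Step 2 (convexity).} The key observation is that, after this rewriting, the two sides of the claimed inequality are weighted sums of $\eta$-values with \emph{the same} total weight $W=H_1+2H_2+H_3$ and \emph{the same} weighted mean of the arguments, namely $\bar r=(H_1r_1+2H_2r_2+H_3r_3)/W$ --- both identities are a direct computation. On the left the two arguments are $\bar r_{12}$ and $\bar r_{23}$, whereas on the right both arguments $\bar r_{123}$ and $r_2$ lie in the closed interval $[\bar r_{12},\bar r_{23}]$, an immediate consequence of $r_1\le r_2\le r_3$. Applying the chord bound $\eta(s)\le\lambda(s)\,\eta(\bar r_{12})+(1-\lambda(s))\,\eta(\bar r_{23})$, valid for $s\in[\bar r_{12},\bar r_{23}]$ by convexity of $\eta$, to the two right-hand arguments and collecting coefficients gives an upper bound of the form $\alpha\,\eta(\bar r_{12})+\beta\,\eta(\bar r_{23})$; since the chord reproduces affine functions exactly, $\alpha,\beta$ satisfy $\alpha+\beta=W$ and $\alpha\bar r_{12}+\beta\bar r_{23}=W\bar r$, which force $\alpha=H_1+H_2$ and $\beta=H_2+H_3$ --- precisely the left-hand side. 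Hence LHS $\ge$ RHS, which is the lemma.

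\textbf{Main obstacle.} I expect the difficulty to be bookkeeping rather than conceptual: keeping the three weighted means and their convex-combination weights straight, and verifying the two conservation identities (equal total weight and equal weighted mean of the arguments on both sides) that make the chord-bound coefficients collapse onto the left-hand side. The only genuine case to dispatch is degeneracy: if $\bar r_{12}=\bar r_{23}$ the inequality is trivially an equality, and one must either assume $h_l>0$ for every $l$, so that each block has positive $H$-mass, or invoke the paper's convention $G/H=\infty$ when $H=0$ and treat those sub-cases by hand.
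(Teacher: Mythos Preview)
Your proof is correct and takes a genuinely different route from the paper's. The paper fixes the first two blocks, treats the four-term expression as a function $D(x,y)$ of the $g$- and $h$-mass of the third block $\{p+1,\dots,p'\}$, computes $\partial D/\partial x\ge 0$ and $\partial D/\partial y\le 0$ from the ideal ordering, reduces via these monotonicities to the boundary ray $x/y=u_1/v_1$, and only then applies Jensen's inequality to finish. Your argument instead rewrites each $d_\mathrm{KL}$ term as $H\,\eta(G/H)$ with $\eta(r)=r\log_2 r$, observes that the left- and right-hand sides are weighted $\eta$-sums with \emph{identical} total weight $H_1+2H_2+H_3$ and \emph{identical} weighted mean of their arguments, and then applies the chord bound directly to the two right-hand arguments $r_2,\bar r_{123}\in[\bar r_{12},\bar r_{23}]$. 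Because the chord is affine, the two conservation identities force the resulting coefficients to be exactly $(H_1+H_2)$ and $(H_2+H_3)$, recovering the left-hand side. Your approach is calculus-free and makes the underlying Monge/quadrangle structure transparent --- it is essentially a two-point majorization argument --- whereas the paper's derivative-then-infimum reduction is more computational and somewhat obscures why the inequality holds. The only place your sketch is looser than the paper's is in the degenerate cases (some $H_j=0$), which you correctly flag; the paper sidesteps this by working directly with the $g$- and $h$-sums rather than ratios.
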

\begin{proof}
We define $u_1$, $u_2$, $v_1$, $v_2$, $x$, and $y$ as follows:
\begin{equation}
\begin{split}
    u_1 \coloneqq \sum_{j=i'}^{p} g_j, ~~ 
    u_2 \coloneqq \sum_{j=i}^{p} g_j, ~~
    x \coloneqq \sum_{j=p+1}^{p'} g_j, \\
    v_1 \coloneqq \sum_{j=i'}^{p} h_j, ~~ 
    v_2 \coloneqq \sum_{j=i}^{p} h_j, ~~ 
    y \coloneqq \sum_{j=p+1}^{p'} h_j.
\end{split}
\end{equation}

Here, we define the left-hand side of \cref{equ:lemma target} as the function $D(x,y)$, that is,
\begin{align}
    &~ D(x,y) \\
\coloneqq&~ \left(d_\mathrm{KL}(i, p) + d_\mathrm{KL}(i', p')\right) - \left(d_\mathrm{KL}(i, p') + d_\mathrm{KL}(i', p)\right) \\
=&~ \left(
    u_2 \log_{2} \left(
        \frac{u_2}{v_2}
    \right) + 
    (u_1 + x) \log_{2} \left(
        \frac{u_1 + x}{v_1 + y}
    \right)
\right) - \\
    & ~~~~~~~~~~~~~~~~~~~~ \left(
    (u_2 + x) \log_{2} \left(
        \frac{u_2 + x}{v_2 + y}
    \right) +
    u_1 \log_{2} \left(
            \frac{u_1}{v_1}
        \right)
\right).
\end{align}

Since the score distribution is \textit{ideal}, we have $(u_2 + x)/(v_2 + y) \leq (u_1 + x)/(v_1 + y)$. Therefore,
\begin{align}
    \frac{\partial D}{\partial x} &= 
    \log_{2} \left(\frac{u_1 + x}{v_1 + y}\right) - 
    \log_{2} \left(\frac{u_2 + x}{v_2 + y}\right) \geq 0, \\
    \frac{\partial D}{\partial y} &= 
    \left(- \frac{u_1 + x}{v_1 + y} + \frac{u_2 + x}{v_2 + y}\right) \log_{2}(e) \leq 0.
\end{align}
Since the score distribution is \textit{ideal}, we also have $u_1/v_1 \leq x/y$. Thus,
\begin{align}
D(x,y) &\geq \inf_{0<x,0<y,\frac{u_1}{v_1} \leq \frac{x}{y}} D(x,y) \\
&\geq \inf_{0<x,0<y,\frac{u_1}{v_1} = \frac{x}{y}} D(x,y).
\end{align}
When $\frac{u_1}{v_1} =  \frac{x}{y}$, we can introduce a positive variable $z$ such that $x = z u_1$ and $y = z v_1$. Substituting this into the expression, we have 
\begin{align}
D(x,y) &\geq \inf_{z > 0} D(z u_1, z v_1) \\
&= \inf_{z > 0} \left(
        zu_1 \log_{2} \left(\frac{zu_1}{zv_1}\right) + 
        u_2 \log_{2} \left(\frac{u_2}{v_2}\right) -
        \left(zu_1 + u_2\right) \log_{2} \left(\frac{zu_1 + u_1}{zv_1 + v_1}\right)
    \right).
\end{align}
Finally, by applying Jensen's inequality~\citep{jensen1906fonctions}, we obtain $D(x, y) \geq 0$.
\end{proof}

Using Lemma~\ref{lem: dkl_monge}, we can prove the following theorem, which demonstrates that the memory efficiency of fast PLBF++ is equivalent to that of PLBF when the score distribution is \textit{ideal}.
\begin{theorem}
\label{thm: DPargTotallyMonotone}
If the score distribution is \textit{ideal}, $A$ is a \textit{totally monotone matrix}.
\end{theorem}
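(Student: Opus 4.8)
The plan is to verify total monotonicity of $A$ directly through the $2\times 2$ characterization recalled just before the theorem statement: it suffices to show that for all $1\le p<p'\le N-1$ and $1\le i<i'\le N-1$ we have $A_{p,i}<A_{p,i'}\Rightarrow A_{p',i}<A_{p',i'}$. So I would fix four such indices and assume the hypothesis $A_{p,i}<A_{p,i'}$, aiming to deduce the conclusion.

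The first step — which also disposes of the $-\infty$ entries of $A$ without any case analysis — is to observe that $A_{p,i}<A_{p,i'}$ forces $A_{p,i'}$ to be finite (nothing is strictly below $-\infty$), and by the definition in \cref{equ:ADif} this is only possible when $i'\le p$. Combining this with $i<i'$ and $p<p'$ gives $1\le i<i'\le p<p'\le N-1<N$, so all four entries $A_{p,i},A_{p,i'},A_{p',i},A_{p',i'}$ lie in the finite branch of \cref{equ:ADif}, and every pair of arguments handed to $d_\mathrm{KL}$ is admissible in the sense required by its definition.

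The second step exploits the additive structure of $A$: writing $A_{p',i'}-A_{p',i}$ and $A_{p,i'}-A_{p,i}$ out via \cref{equ:ADif}, the terms $\mathrm{DP}_\mathrm{KL}[i-1][q-1]$ and $\mathrm{DP}_\mathrm{KL}[i'-1][q-1]$ cancel in the difference of differences, leaving exactly
\[
(A_{p',i'}-A_{p',i})-(A_{p,i'}-A_{p,i})=\bigl(d_\mathrm{KL}(i, p)+d_\mathrm{KL}(i', p')\bigr)-\bigl(d_\mathrm{KL}(i, p')+d_\mathrm{KL}(i', p)\bigr).
\]
Since the score distribution is \textit{ideal} and $1\le i<i'\le p<p'\le N$, \cref{lem: dkl_monge} applies and tells us the right-hand side is $\ge 0$. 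Hence $A_{p',i'}-A_{p',i}\ge A_{p,i'}-A_{p,i}>0$, i.e.\ $A_{p',i}<A_{p',i'}$, which is the desired implication; this establishes total monotonicity.

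I do not expect a serious obstacle here: the genuine analytic content is already encapsulated in \cref{lem: dkl_monge} (a Monge/concavity-type inequality for $d_\mathrm{KL}$), so the theorem is essentially bookkeeping on top of it. The only thing that needs care is the presence of $-\infty$ entries in $A$, and the key simplification is that no separate handling of those cases is needed — the hypothesis $A_{p,i}<A_{p,i'}$ itself already pins down $i'\le p$ and thereby places the argument entirely in the finite, admissible regime. I would also remark, for use in \cref{sec: speed up maxdivdp} and \cref{sec: fast plbf sharp}, that total monotonicity in particular implies monotonicity of $A$, which is exactly what justifies running the monotone maxima (respectively SMAWK) algorithm on it.
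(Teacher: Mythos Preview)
Your argument is correct and follows essentially the same route as the paper: both reduce total monotonicity to the $2\times 2$ criterion, cancel the $\mathrm{DP}_\mathrm{KL}$ terms in the difference of differences, and invoke \cref{lem: dkl_monge} on the remaining $d_\mathrm{KL}$ combination. If anything, your treatment is slightly more careful than the paper's, which restricts attention to $1\le i<i'\le p<p'\le N-1$ without explicitly saying why the remaining index configurations (where some entries are $-\infty$) are vacuous; your observation that $A_{p,i}<A_{p,i'}$ forces $i'\le p$ fills exactly that small gap.
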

\begin{proof}

For any $i,i',p,p'$ satisfying $1\leq i < i' \leq p < p' \leq N-1$,
\begin{align}
 &~ \left(A_{p,i} + A_{p',i'}\right) - \left(A_{p',i} + A_{p,i'}\right) \\
=&~ \left(
        \left(
            \mathrm{DP}_\mathrm{KL}[i-1][q-1] +  d_\mathrm{KL}(i, p)
        \right) + 
        \left(
            \mathrm{DP}_\mathrm{KL}[i'-1][q-1] +  d_\mathrm{KL}(i', p')
        \right)
    \right) - \\
    & ~~~~~~~~~~ \left(
        \left(
            \mathrm{DP}_\mathrm{KL}[i-1][q-1] +  d_\mathrm{KL}(i, p')
        \right) + 
        \left(
            \mathrm{DP}_\mathrm{KL}[i'-1][q-1] +  d_\mathrm{KL}(i', p)
        \right)
    \right) \\
=&~ \left(
        d_\mathrm{KL}(i, p) + d_\mathrm{KL}(i', p')
    \right) - \left(
        d_\mathrm{KL}(i, p') + d_\mathrm{KL}(i', p)
    \right) \\
\geq&~ 0.
\end{align}
The last inequality is from the Lemma~\ref{lem: dkl_monge}.
Thus, $A_{p,i} < A_{p,i'} \Rightarrow A_{p',i} < A_{p',i'}$ holds.
Consequently, it has been demonstrated that every $2 \times 2$ submatrix is monotone, thereby proving that $A$ is a \textit{totally monotone matrix}.
\end{proof}

\subsubsection{Theoretical Guarantees Under Non-Ideal Conditions}
\label{sec: fast plbf pp non ideal}

Here, we provide a theoretical guarantee for cases where the distribution is not \textit{ideal}.
Specifically, we offer an upper bound on the difference in $\mathrm{DP}_\mathrm{KL}$ between solving the matrix problem greedily and solving it efficiently using monotone maxima.
For simplicity, we will assume that $N = 2^M ~ (M \in \mathbb{N})$.
In this case, when selecting the ``middle row'' in the monotone maxima algorithm, the number of rows to be considered will always be odd, allowing us to choose the exact middle row.
For example, in \cref{fig: monotone_maxima}, the number of rows is 7, corresponding to the case of $N = 8 = 2^3$.

First, we define the function $\Delta: \{0, 1,\dots, N\} \times \{0, 1,\dots, N\} \rightarrow \mathbb{R}$ as follows:
\begin{align}
&~ \Delta(p, p') \\
= &~ \begin{dcases}
    \max_{1 \leq i \leq i' \leq p} - \left(d_\mathrm{KL}(i, p) + d_\mathrm{KL}(i', p')\right) + \left(d_\mathrm{KL}(i, p') + d_\mathrm{KL}(i', p)\right) & (0 < p < p' < N) \\
    0 & (\mathrm{else}).
\end{dcases}
\end{align}
From Lemma~\ref{lem: dkl_monge}, when the distribution is \textit{ideal}, $\Delta(p, p') \leq 0$ for any $p,p'$.
Furthermore, we define $\delta: \{0, 1,\dots, N\} \rightarrow \mathbb{R}$ recursively as follows:
\begin{equation}
    \delta(p) = 
    \begin{dcases}
        0 & (p=0, N) \\
        \max(\delta(l) + \Delta(l, p),~0,~\delta(r) + \Delta(p, r)) & (\mathrm{else}).
    \end{dcases}
\end{equation}
In this definition, $l$ is the largest row index smaller than $p$ that has been processed before the $p$-th row by the monotone maxima algorithm (if no such row exists, we set $l = 0$).
Similarly, $r$ is the smallest row index greater than $p$ that has been processed before the $p$-th row by the monotone maxima algorithm (if no such row exists, we set $r = N$).
For example, when $N=8$, as shown in \cref{fig: monotone_maxima}, we have $l=2$ and $r=4$ for $p=3$, $l=0$ and $r=8$ for $p=4$.
Since $N = 2^M ~ (M \in \mathbb{N})$, we have $l = p - 2^m$ and $r = p + 2^m$, where $m$ is the largest non-negative integer such that $p$ is divisible by $2^m$.
Then, we define $\delta_\mathrm{max}$ as follows:
\begin{equation}
    \delta_\mathrm{max} = \max_{p=1,2,\dots,N-1} \delta(p).
\end{equation}
$\delta_\mathrm{max}$ is a value that depends on the training data and $N$.
In particular, when the distribution is \textit{ideal}, $\delta(p) = 0$ for all $p$, and therefore, $\delta_\mathrm{max}=0$.

Here, the following theorem holds.
\begin{theorem}
\label{thm: diff DPKL}
    Let $\mathrm{DP}_\mathrm{true}$ be the $\mathrm{DP}_\mathrm{KL}$ in the case of (fast) PLBF, that is, when calculating greedily without using monotone maxima.
    Furthermore, let $\mathrm{DP}_\mathrm{mm}$ be the $\mathrm{DP}_\mathrm{KL}$ in the case of fast PLBF++, that is, when calculating with monotone maxima.
    Then, for all $p=1,2,\dots,N-1$ and $q=1,2,\dots,k-1$,
    \begin{equation}
        0 \leq \mathrm{DP}_\mathrm{true}[p][q] - \mathrm{DP}_\mathrm{mm}[p][q] \leq q \delta_\mathrm{max}.
    \end{equation}
\end{theorem}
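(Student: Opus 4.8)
The plan is to prove the lower bound by a one-line induction and the upper bound by a nested induction that separates two error sources: the error inherited from column $q-1$ of the table, and the error freshly created by a single monotone-maxima pass on column $q$. Throughout, fix $q$, write $A^{(q)}$ for the matrix $A$ of \cref{equ:ADif} built from the already-computed row $\mathrm{DP}_\mathrm{mm}[\cdot][q-1]$, let $\widetilde{\mathrm{DP}}[p]=\max_i A^{(q)}_{p,i}$ be the \emph{exact} row maxima of this matrix (what \cref{equ:DP_A} would give), and let $\mathrm{DP}_\mathrm{mm}[p][q]=A^{(q)}_{p,\hat J(p)}$ be the value monotone maxima actually reports, attained at the column $\hat J(p)$ it selects. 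Both tables agree for $q=0$, and entries with $\mathrm{DP}_\mathrm{true}[p][q]=-\infty$ also satisfy $\mathrm{DP}_\mathrm{mm}[p][q]=-\infty$, so the substantive range is $p\ge q$; I will use that $\mathrm{DP}_\mathrm{mm}[a][b]$ and $\mathrm{DP}_\mathrm{true}[a][b]$ are finite exactly for $a\ge b$, that the $A^{(q)}$ matrices have the ``staircase'' shape $A^{(q)}_{p,i}=-\infty$ for $i>p$, and consequently $\hat J(p)\le p$. For the \textbf{lower bound}, induct on $q$: since $\mathrm{DP}_\mathrm{mm}[p][q]=\mathrm{DP}_\mathrm{mm}[\hat J(p)-1][q-1]+d_\mathrm{KL}(\hat J(p),p)$ and, by the inductive hypothesis, $\mathrm{DP}_\mathrm{mm}[\hat J(p)-1][q-1]\le\mathrm{DP}_\mathrm{true}[\hat J(p)-1][q-1]$, this is at most $\max_i\big(\mathrm{DP}_\mathrm{true}[i-1][q-1]+d_\mathrm{KL}(i,p)\big)=\mathrm{DP}_\mathrm{true}[p][q]$.

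The heart of the argument is the \textbf{key lemma}: a single pass loses at most $\delta(p)$ in row $p$, i.e. $\widetilde{\mathrm{DP}}[p]-\mathrm{DP}_\mathrm{mm}[p][q]\le\delta(p)$. The crucial observation is that the ``non-monotonicity'' of $A^{(q)}$ does not depend on $q$: for $i\le i'\le p<p'$ the $\mathrm{DP}_\mathrm{mm}[\cdot][q-1]$ terms cancel in $(A^{(q)}_{p,i}+A^{(q)}_{p',i'})-(A^{(q)}_{p,i'}+A^{(q)}_{p',i})$, which therefore equals $(d_\mathrm{KL}(i,p)+d_\mathrm{KL}(i',p'))-(d_\mathrm{KL}(i,p')+d_\mathrm{KL}(i',p))\ge-\Delta(p,p')$. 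Now induct over the order in which monotone maxima processes rows; when it reaches row $p$ its search is confined to the column interval $[\hat J(l),\hat J(r)]$, where $l,r$ are the already-processed neighbours from the definition of $\delta$ (with conventions $\hat J(0)=1$, $\hat J(N)=N-1$, $\delta(0)=\delta(N)=0$, and $\Delta$ set to $0$ whenever an argument is $0$ or $N$). Let $j^\star$ be the true argmax of row $p$. If $j^\star\in[\hat J(l),\hat J(r)]$ the pass is exact, so the loss is $0\le\delta(p)$. If $j^\star<\hat J(l)$, then $j^\star<\hat J(l)\le l<p$, and applying the displayed inequality to rows $l<p$ and columns $j^\star<\hat J(l)$ (all four entries finite) gives $A^{(q)}_{p,j^\star}-A^{(q)}_{p,\hat J(l)}\le\big(A^{(q)}_{l,j^\star}-A^{(q)}_{l,\hat J(l)}\big)+\Delta(l,p)\le\delta(l)+\Delta(l,p)$, using $A^{(q)}_{l,j^\star}\le\widetilde{\mathrm{DP}}[l]$, $A^{(q)}_{l,\hat J(l)}=\mathrm{DP}_\mathrm{mm}[l][q]$, and the inductive hypothesis at $l$; since $\hat J(l)$ lies in $p$'s search interval, $\mathrm{DP}_\mathrm{mm}[p][q]\ge A^{(q)}_{p,\hat J(l)}$, so $\widetilde{\mathrm{DP}}[p]-\mathrm{DP}_\mathrm{mm}[p][q]\le\delta(l)+\Delta(l,p)\le\delta(p)$. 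The case $j^\star>\hat J(r)$ is symmetric (rows $p<r$, columns $\hat J(r)<j^\star\le p$) and yields loss at most $\delta(r)+\Delta(p,r)\le\delta(p)$.

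For the \textbf{upper bound}, induct on $q$. Let $i^\star$ be the true argmax for $\mathrm{DP}_\mathrm{true}[p][q]$, so $\mathrm{DP}_\mathrm{true}[p][q]=\mathrm{DP}_\mathrm{true}[i^\star-1][q-1]+d_\mathrm{KL}(i^\star,p)$, while $\widetilde{\mathrm{DP}}[p]\ge\mathrm{DP}_\mathrm{mm}[i^\star-1][q-1]+d_\mathrm{KL}(i^\star,p)$; subtracting, $\mathrm{DP}_\mathrm{true}[p][q]-\widetilde{\mathrm{DP}}[p]\le\mathrm{DP}_\mathrm{true}[i^\star-1][q-1]-\mathrm{DP}_\mathrm{mm}[i^\star-1][q-1]\le(q-1)\delta_\mathrm{max}$ by the inductive hypothesis (the case $q=1$ being the equality of the base row). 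Adding the key lemma's bound $\widetilde{\mathrm{DP}}[p]-\mathrm{DP}_\mathrm{mm}[p][q]\le\delta(p)\le\delta_\mathrm{max}$ gives $\mathrm{DP}_\mathrm{true}[p][q]-\mathrm{DP}_\mathrm{mm}[p][q]\le q\,\delta_\mathrm{max}$, closing the induction; combined with the lower bound this is the claim.

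The \textbf{expected main obstacle} is the key lemma. One must describe precisely which column interval monotone maxima uses for each row (this is where the $N=2^M$ assumption and the $l,r$ bookkeeping enter), verify that every matrix entry invoked in the three cases lies in the finite staircase region so the non-monotonicity inequality is legitimate (in particular that $j^\star\le l$ in the left case and $\hat J(r)\le p$ in the right case, which both follow from the search-interval containment), and confirm that the induction runs over a well-founded order — the recursion order of monotone maxima, in which $l$ and $r$ genuinely precede $p$. Once the $\mathrm{DP}_\mathrm{mm}[\cdot][q-1]$ terms are seen to cancel, making each pass's loss governed by the $q$-independent quantity $\Delta$ (and hence by $\delta$), the two inductions decouple cleanly and everything else is routine bookkeeping; as a sanity check, in the \emph{ideal} case Lemma~\ref{lem: dkl_monge} forces $\Delta\le 0$, hence $\delta\equiv 0$ and $\delta_\mathrm{max}=0$, consistent with \cref{thm: DPargTotallyMonotone}.
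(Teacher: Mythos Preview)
Your proof is correct and follows essentially the same strategy as the paper: both establish the per-row key lemma $A^{(q)}_{p,J(p)}-A^{(q)}_{p,\hat J(p)}\le\delta(p)$ by induction over the monotone-maxima processing order (using the cancellation of the $\mathrm{DP}[\cdot][q-1]$ terms to reduce the row-to-row deviation to the $q$-independent quantity $\Delta$), and then accumulate across columns to reach $q\,\delta_\mathrm{max}$. Your presentation is slightly tighter --- you split the error explicitly into inherited and fresh parts via the intermediate quantity $\widetilde{\mathrm{DP}}$ and argue only at the single argmax $j^\star$ rather than over full index ranges as the paper does --- but the substance is identical.
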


\begin{proof}
$\mathrm{DP}_\mathrm{true}[p][q]$ is a DP table derived from the exact calculation of all transitions, whereas $\mathrm{DP}_\mathrm{mm}[p][q]$ is a DP table derived from calculations where some transitions are omitted.
Therefore, it is evident that $ 0 \leq \mathrm{DP}_\mathrm{true}[p][q] - \mathrm{DP}_\mathrm{mm}[p][q]$.
We will now prove the right-hand side of the inequality.

As the definition in \cref{def: matrix problem}, let $J(p)$ be the smallest column index $j$ such that $A_{p,j}$ is the maximum value in the $p$-th row of $A$.
Additionally, let $\hat{J}(p)$ be the column index $j$ such that the monotone maximum algorithm considers $A_{p,j}$ to be the maximum value in the $p$-th row of $A$.
$J(p)$ and $\hat{J}(p)$ always satisfy $A_{p,J(p)} - A_{p,\hat{J}(p)} \geq 0$.

First, we prove recursively that $A_{p,J(p)} - A_{p,\hat{J}(p)} \leq \delta(p)$ for all $p=1,2,\dots,N-1$.
We introduce virtual $0$-th and $N$-th row to the matrix $A$, and we assume $J(0)=\hat{J}(0)=0$, $J(N)=\hat{J}(N)=N-1$, and $A_{0,J(0)} - A_{0,\hat{J}(0)} = A_{N,J(N)} - A_{N,\hat{J}(N)} = 0$.
Now, assume that for all $q ~ (\in \{0,1,\dots,N\})$ that is a multiple of $2^{m+1}$, $A_{q,J(q)} - A_{q,\hat{J}(q)} \leq \delta(q)$ holds. This assumption is clearly true when $m+1=M$ because $N=2^M$ and we introduced the virtual $0$-th and $N$-th row.
Then, we can prove that for any $p ~ (\in \{0,1,\dots,N\})$ that is a multiple of $2^m$, $A_{p,J(p)} - A_{p,\hat{J}(p)} \leq \delta(p)$ holds.
In the following, let $p$ be a number that is a multiple of $2^m$, but not a multiple of $2^{m+1}$.
In this case, $l=p-2^m$ and $r=p+2^m$, where the range of indices to be searched by the monotone maximum algorithm in the $p$-th row is determined by the search result of the $l$-th and $r$-th rows.
We denote the minimum and maximum values of this range of indices as $L(p)$ and $R(p)$, which can be expressed as follows:
\begin{equation}
    L(p) = \begin{dcases}
        1 & (l = 0) \\
        \hat{J}(l) & (\mathrm{else})
    \end{dcases}, ~~~ 
    R(p) = \begin{dcases}
        N-1 & (r = N) \\
        \hat{J}(r) & (\mathrm{else}).
    \end{dcases}
\end{equation}

First, from the definition of $\hat{J}(p)$,
\begin{equation}
\label{equ:A diif (L(p) to R(p))}
    \max_{L(p) \leq j \leq R(p)} A_{p, j} - A_{p, \hat{J}(p)} = A_{p, \hat{J}(p)} - A_{p, \hat{J}(p)} = 0.
\end{equation}

Next, when $R(p) < p$,
\begin{align}
    \max_{R(p)< j \leq p} A_{p, j} - A_{p, \hat{J}(p)} 
\leq& \max_{R(p)< j \leq p} A_{p, j} - A_{p, \hat{J}(r)} \\
\leq& \max_{R(p)< j \leq p} A_{r, j} - A_{r, \hat{J}(r)} + \Delta(p, r) \\
\leq& A_{r, J(r)} - A_{r, \hat{J}(r)} + \Delta(p, r) \\
\label{equ:A diif (R(p) to p)}
\leq& \delta(r) + \Delta(p, r).
\end{align}
Here, the first and third inequality is due to $A_{p, \hat{J}(p)} = \max_{\hat{J}(l)\leq j \leq \hat{J}(r)} A_{p, j} \geq A_{p, \hat{J}(r)}$ and $A_{r, J(r)} = \max_{1\leq j \leq r} A_{r, j} \geq A_{r, J(r)}$.
The fourth inequality is based on the assumptions of induction (note that $r$ is multiple of $2^{m+1}$).
The second inequality can be shown as follows, using the definition of $A$ and $\Delta$:
\begin{align}
    & \max_{\hat{J}(r)< j \leq p} A_{p, j} - A_{p, \hat{J}(r)} \\
=& \max_{\hat{J}(r)< j \leq p} \left(\mathrm{DP}_\mathrm{KL}[j-1][q-1] + d_\mathrm{KL}(j, p) \right) - \\
& ~~~~~~~~~~~~~~~~~~~~ \left( \mathrm{DP}_\mathrm{KL}[\hat{J}(r)-1][q-1] + d_\mathrm{KL}(\hat{J}(r), p)\right) \\
\leq& \max_{\hat{J}(r)< j \leq p} \left(\mathrm{DP}_\mathrm{KL}[j-1][q-1] + d_\mathrm{KL}(j, r) \right) - \\
& ~~~~~~~~~~~~~~~~~~~~ \left( \mathrm{DP}_\mathrm{KL}[\hat{J}(r)-1][q-1] + d_\mathrm{KL}(\hat{J}(r), r)\right) + \Delta(p, r) \\
=& \max_{\hat{J}(r)< j \leq p} A_{r, j} - A_{r, \hat{J}(r)} + \Delta(p, r).
\end{align}
In the same way, we have
\begin{equation}
\label{equ:A diif (1 to L(p))}
    \max_{1\leq j < L(p)} A_{p, j} - A_{p, \hat{J}(p)} \leq \delta(l) + \Delta(l, p).
\end{equation}
Therefore, from \cref{equ:A diif (1 to L(p))}, \cref{equ:A diif (L(p) to R(p))}, and \cref{equ:A diif (R(p) to p)},
\begin{align}
    A_{p, J(p)} - A_{p, \hat{J}(p)} 
=& \max_{1\leq j \leq p}A_{p,j} -  A_{p, \hat{J}(p)} \\
\leq& \max\left(\delta(l) + \Delta(l, p), ~ 0, ~ \delta(r) + \Delta(p, r)\right) \\
=& \delta(p).
\end{align}
Thus, we have $A_{p, J(p)} - A_{p, \hat{J}(p)}$ for all $p ~ (\in \{1,2,\dots,N-1\})$.
This means that a deviation up to $\delta_\mathrm{max}$ is produced in row to row calculation in DP table.
Therefore, it has been shown that the maximum error in $\mathrm{DP}_\mathrm{KL}[p][q]$ is $q\delta_\mathrm{max}$.
\end{proof}
Since $\delta_\mathrm{max} = 0$ when the distribution is \textit{ideal}, the DP table generated in the fast PLBF++ construction is exactly the same as the DP table generated in the fast PLBF construction.
The conclusion from \cref{thm: DPargTotallyMonotone} covers only \textit{ideal} situations. However, \cref{thm: diff DPKL} theoretically bounds the difference even in \textit{non-ideal} situations by using $\delta_\mathrm{max}$, a value that, in a sense, indicates the deviation from the \textit{ideal} distribution.
The URL and malware data sets we used for our experiment have $\delta_\mathrm{max} = 0.028$ and $\delta_\mathrm{max} = 0.010$, respectively, when $N = 1024$. Considering $k \sim 10$, which is a typical setting, the difference in the DP table value is approximately up to $0.25$.
If $\mathcal{I}_{f=1} = \varnothing$, this suggests that the difference in memory usage between fast PLBF and fast PLBF++ is at most $0.25 \times \log_{2} (e) = 0.36$ bit per element. Additionally, it implies that for the same memory usage, the difference in the FPR is at most $2^{0.25} = 1.18$ times.

\subsection{Fast PLBF\#}
\label{sec: fast plbf sharp}

Fast PLBF\# uses the SMAWK algorithm instead of the monotone maxima. 
In addition, as with fast PLBF++, it replaces \textsc{OptimalFPR} with \textsc{FastOptimalFPR}, which has better worst-case computational complexity. Therefore, the DP using the SMAWK algorithm takes $\mathcal{O}(Nk)$ time, and for each of the $j = k, k+1,\dots, N$, determining the optimal FPR takes $\mathcal{O}(k\log k)$ operations, so the overall time taken is $\mathcal{O}(Nk) + \mathcal{O}(Nk \log k) = \mathcal{O}(Nk \log k)$.

The SMAWK algorithm is an algorithm that solves the problem of finding the maximum value of each row of a matrix.
While monotone maxima assumes that the matrix is a \textit{monotone matrix}, the SMAWK algorithm assumes that the matrix is a \textit{totally monotone matrix}.
The \textit{totally monotone} assumption is stronger than the \textit{monotone} assumption, that is, every \textit{totally monotone matrix} is a \textit{monotone matrix}.
Therefore, the SMAWK algorithm is an algorithm that relies on stronger assumptions than those of monotone maxima.
Although the SMAWK algorithm is more complicated than monotone maxima, it has less computational complexity.

From the proof of \cref{thm: DPargTotallyMonotone}, $A$ is a \textit{totally monotone matrix} when the distribution is \textit{ideal}.
Therefore, although the SMAWK algorithm relies on stronger assumptions than monotone maxima, it satisfies the assumptions of both algorithms under the condition that the distribution is \textit{ideal}.
As we will see in the next section, fast PLBF\# achieves almost the same memory efficiency as (fast) PLBF, just as fast PLBF++ does.


\section{Experimetns}
\label{sec: experiments}

\begin{figure}[t]
    \subfigure[Malicious URLs Data Set]{
        \includegraphics[width=0.48\columnwidth]{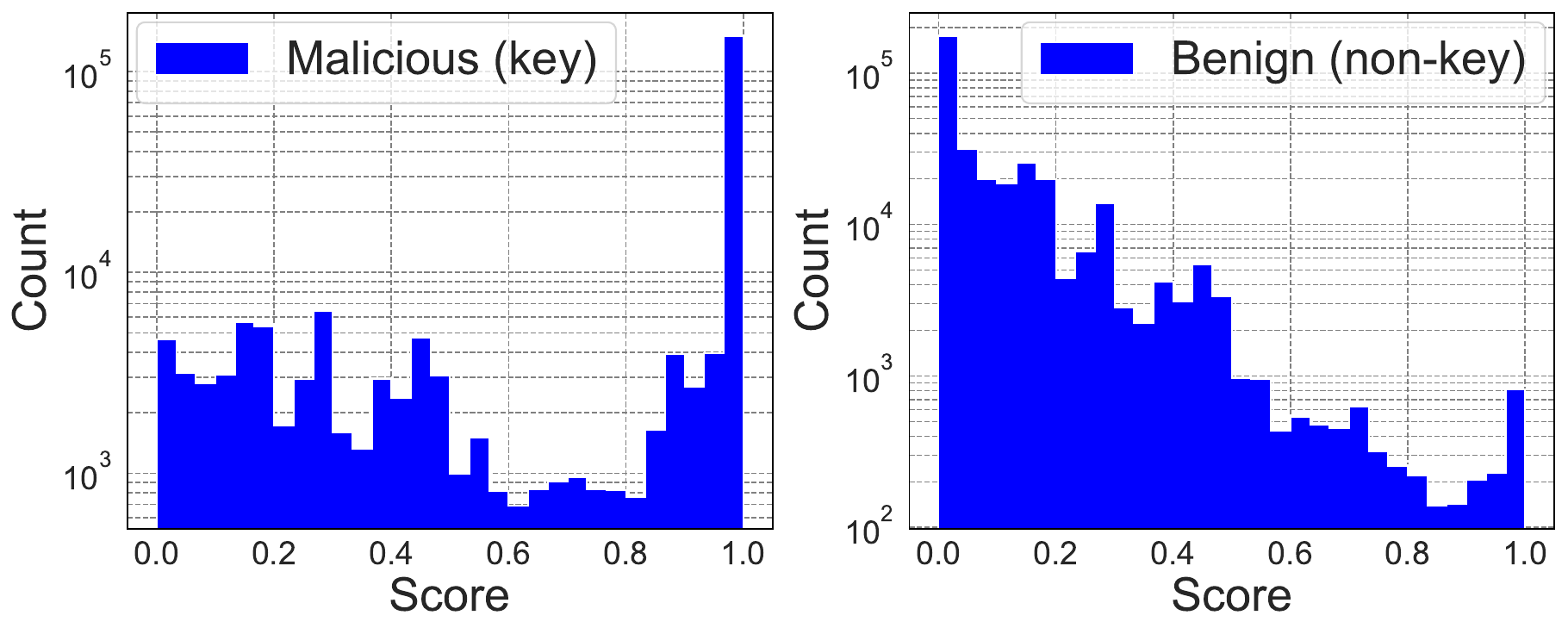}
        \label{fig: Murl_dist}
    }
    \hspace{-1.0em}
    \subfigure[EMBER Data Set]{
        \includegraphics[width=0.48\columnwidth]{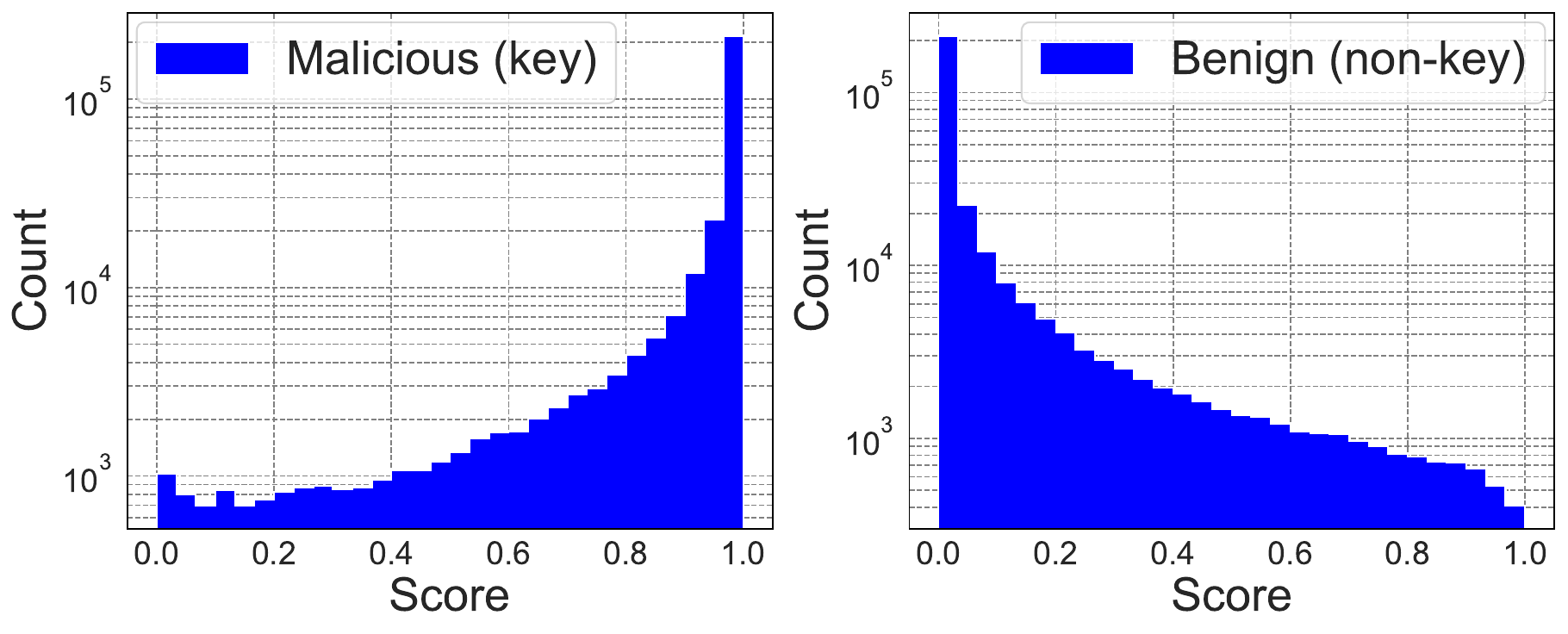}
        \label{fig: Malw_dist}
    }
    \caption{Histograms of the score distributions of keys and non-keys.}
    \label{fig: dist}
\end{figure}

\begin{figure}[t]
    \subfigure[Malicious URLs Data Set]{
        \label{fig: Murl_g_to_h_ratio}
        \includegraphics[width=0.48\columnwidth]{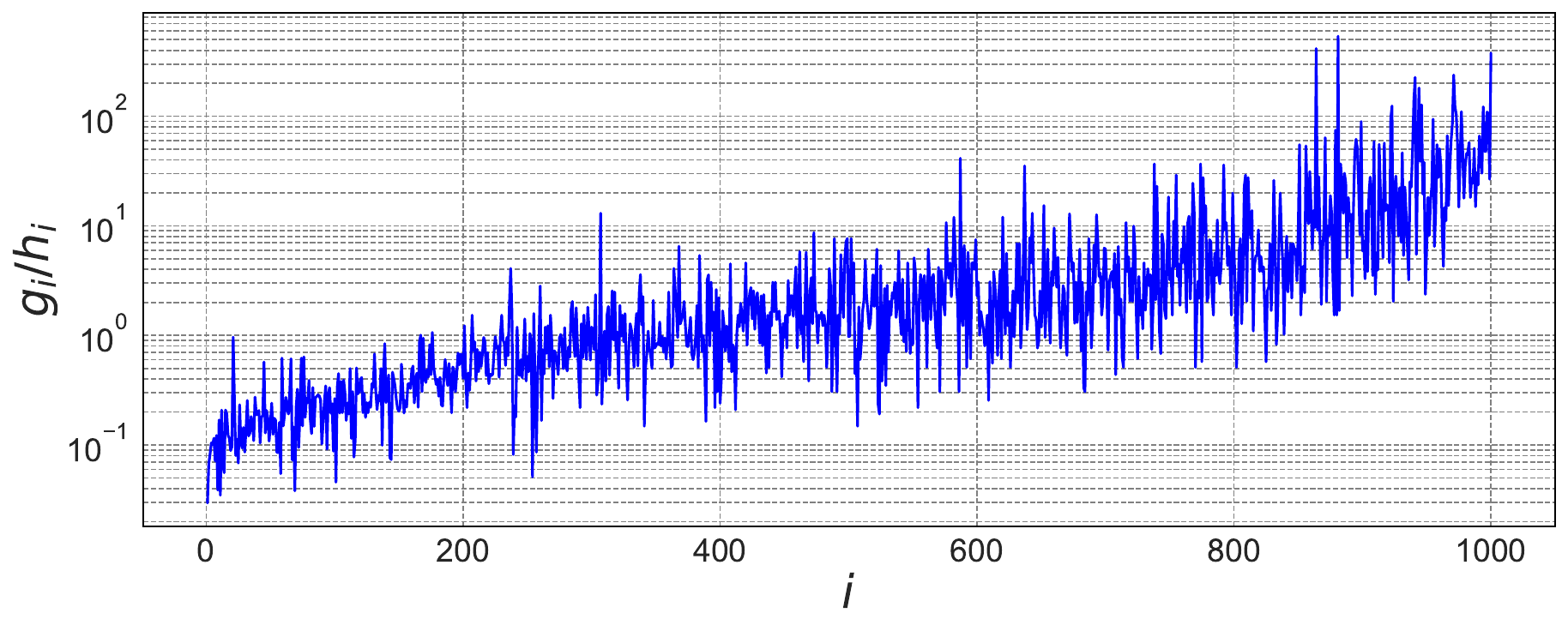}
    }
    \hspace{-1.0em}
    \subfigure[EMBER Data Set]{
        \label{fig: Malw_g_to_h_ratio}
        \includegraphics[width=0.48\columnwidth]{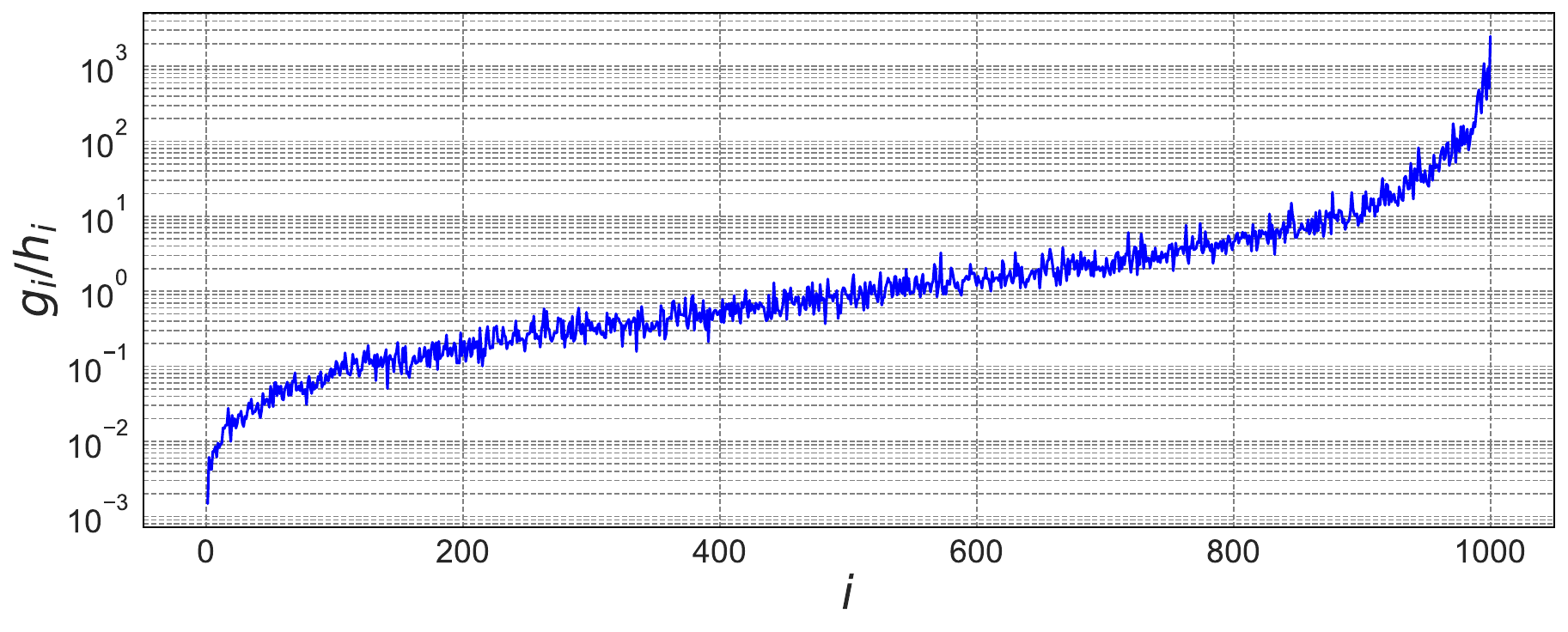}
    }
    \caption{Ratio of keys to non-keys.}
    \label{fig: g_to_h_ratio}
\end{figure}


This section evaluates the experimental performance of fast PLBF, fast PLBF++, and fast PLBF\#.
We compared the performances of our proposed methods with four baselines: Bloom filter~\citep{bloom1970space}, Ada-BF~\citep{dai2020adaptive}, sandwiched LBF~\citep{mitzenmacher2018model}, and PLBF~\citep{vaidya2021partitioned}. 
Similar to PLBF, Ada-BF is an LBF that partitions the score space into several regions and assigns different FPRs to each region.
However, Ada-BF relies heavily on heuristics for clustering and assigning FPRs.
Sandwiched LBF is a LBF that ``sandwiches'' a machine learning model with two Bloom filters.
This achieves better memory efficiency than the original LBF by optimizing the size of two Bloom filters.

To facilitate the comparison of different methods or hyperparameters results, we have slightly modified the original PLBF framework.
The original PLBF was designed to minimize memory usage under the condition of a given FPR.
However, this approach makes it difficult to compare the results of different methods or hyperparameters.
This is because both the FPR at test time and the memory usage vary depending on the method and hyperparameters, which often makes it difficult to determine the superiority of the results.
Therefore, in our experiments, we used a framework where the expected FPR is minimized under the condition of memory usage.
This makes it easy to obtain two results with the same memory usage and compare them by the FPR at test time.
See the appendix for more information on how this framework modification will change the construction method of PLBFs.

We evaluated the algorithms using the following two data sets.
\begin{itemize}
\item \textbf{Malicious URLs Data Set}: As in previous papers~\citep{dai2020adaptive, vaidya2021partitioned}, we used Malicious URLs data set~\citep{manu2021urlDataset}.
The URLs data set comprises 223,088 malicious and 428,118 benign URLs.
We extracted 20 lexical features such as URL length, use of shortening, number of special characters, etc.
We used all malicious URLs and 342,482 (80\%) benign URLs as the training set, and the remaining benign URLs as the test set.
\item \textbf{EMBER Data Set}: We used the EMBER data set~\citep{anderson2018ember} as in the PLBF research.
The data set consists of 300,000 malicious and 400,000 benign files, along with the features of each file.
We used all malicious files and 300,000 (75\%) benign files as the train set and the remaining benign files as the test set.
\end{itemize}

While any model can be used for the classifier, we used LightGBM~\citep{NIPS2017_6449f44a} because of its fast training and inference, memory efficiency, and accuracy.
The sizes of the machine learning model for the URLs and EMBER data sets are 312 Kb and 1.19 Mb, respectively.
The training time of the machine learning model for the URLs and EMBER data sets is 1.09 and 2.71 seconds, respectively.
The memory usage of LBF is the memory usage of the backup Bloom filters plus the size of the machine learning model.
\cref{fig: dist} shows a histogram of the score distributions for keys and non-keys in each data set.
We can see that the frequency of keys increases and that of non-keys decreases as the score increases.
In addition, \cref{fig: g_to_h_ratio} plots $g_i/h_i ~ (i=1,2, \dots, N)$ when $N=1,000$.
We can see that $g_i/h_i$ tends to increase as $i$ increases, but the increase is not monotonic; that is, the score distribution is not \textit{ideal}.

\subsection{Construction Time}
\label{sec: Construction time}

\begin{figure}[t]
    \begin{minipage}{\columnwidth}
        \centering
        \includegraphics[width=0.85\columnwidth]{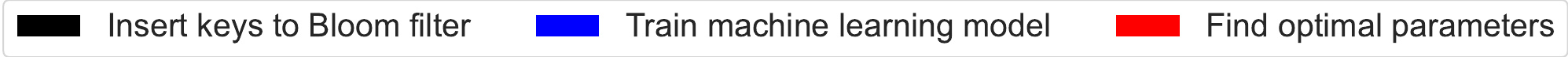}
    \end{minipage}

    \subfigure[Malicious URLs Data Set]{
        \label{fig: Murl_time_hist}
        \includegraphics[width=0.48\columnwidth]{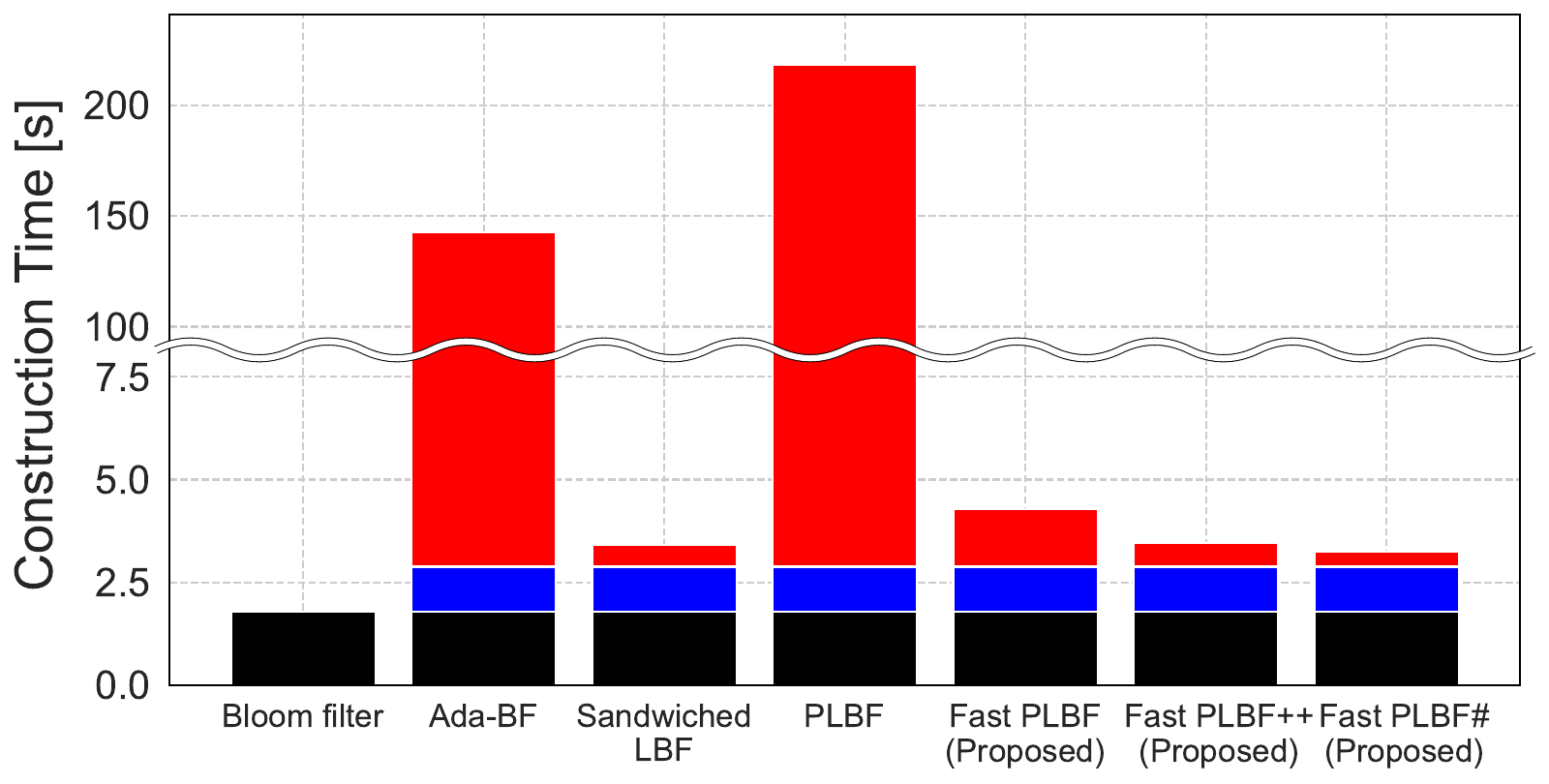}
    }
    \hspace{-1.0em}
    \subfigure[EMBER Data Set]{
        \label{fig: Malw_time_hist}
        \includegraphics[width=0.48\columnwidth]{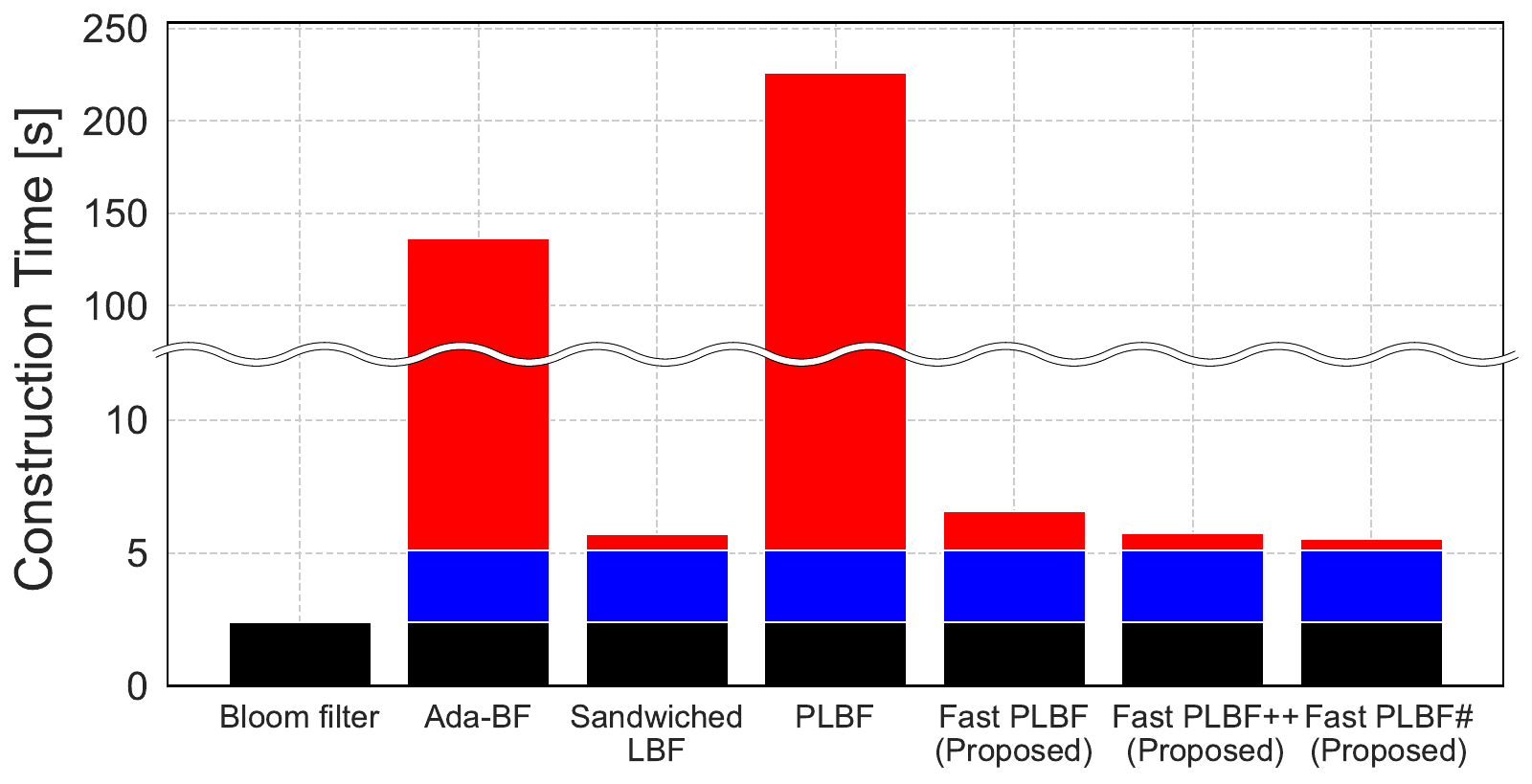}
    }
    \caption{Construction time.}
    \label{fig: time_hist}
\end{figure}

We compared the construction times of our proposed methods with those of existing methods.
Following the experiments in the original PLBF paper, hyperparameters for PLBFs were set to $N=1,000$ and $k=5$.

\cref{fig: time_hist} shows the construction time for each method.
The construction time for LBFs includes not only the time to insert keys into the Bloom filters but also the time to train the machine learning model and the time to compute the optimal parameters ($\bm{t}$ and $\bm{f}$ in the case of PLBF).

Ada-BF and sandwiched LBF use heuristics to find the optimal parameters, so they have shorter construction times than PLBF.
However, these methods demonstrate lower accuracy, as we will demonstrate in the following section.
PLBF has better accuracy but takes more than 3 minutes to find the optimal $\bm{t}$ and $\bm{f}$.
On the other hand, our proposed methods take less than 2 seconds.
Specifically, fast PLBF constructs 50.8 and 34.3 times faster than PLBF for the URLs and EMBER data sets, respectively.
Fast PLBF++ and fast PLBF\# exhibit even greater speed improvements; fast PLBF++ is 63.1 and 39.3 times faster, and fast PLBF\# is 67.2 and 40.8 times faster than PLBF for the URLs and EMBER data sets, respectively.
These construction times are comparable to those of sandwiched LBF, which relies heavily on heuristics and, as we will see in the next section, is much less accurate than PLBF.

\subsection{Memory Usage and FPR}
\label{sec: MemoryUsageAndFPR}

\begin{figure}[t]
    \begin{minipage}{\columnwidth}
        \centering
        \includegraphics[width=0.95\columnwidth]{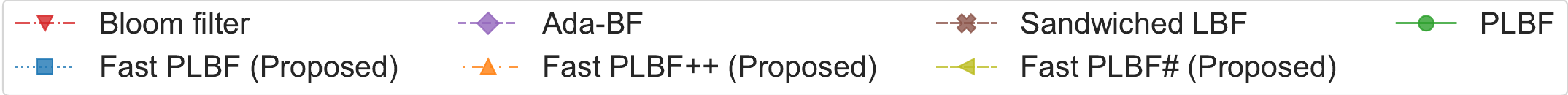}
    \end{minipage}
    
    \subfigure[Malicious URLs Data Set]{
        \label{fig: Murl_Memory_FPR}
        \includegraphics[width=0.48\columnwidth]{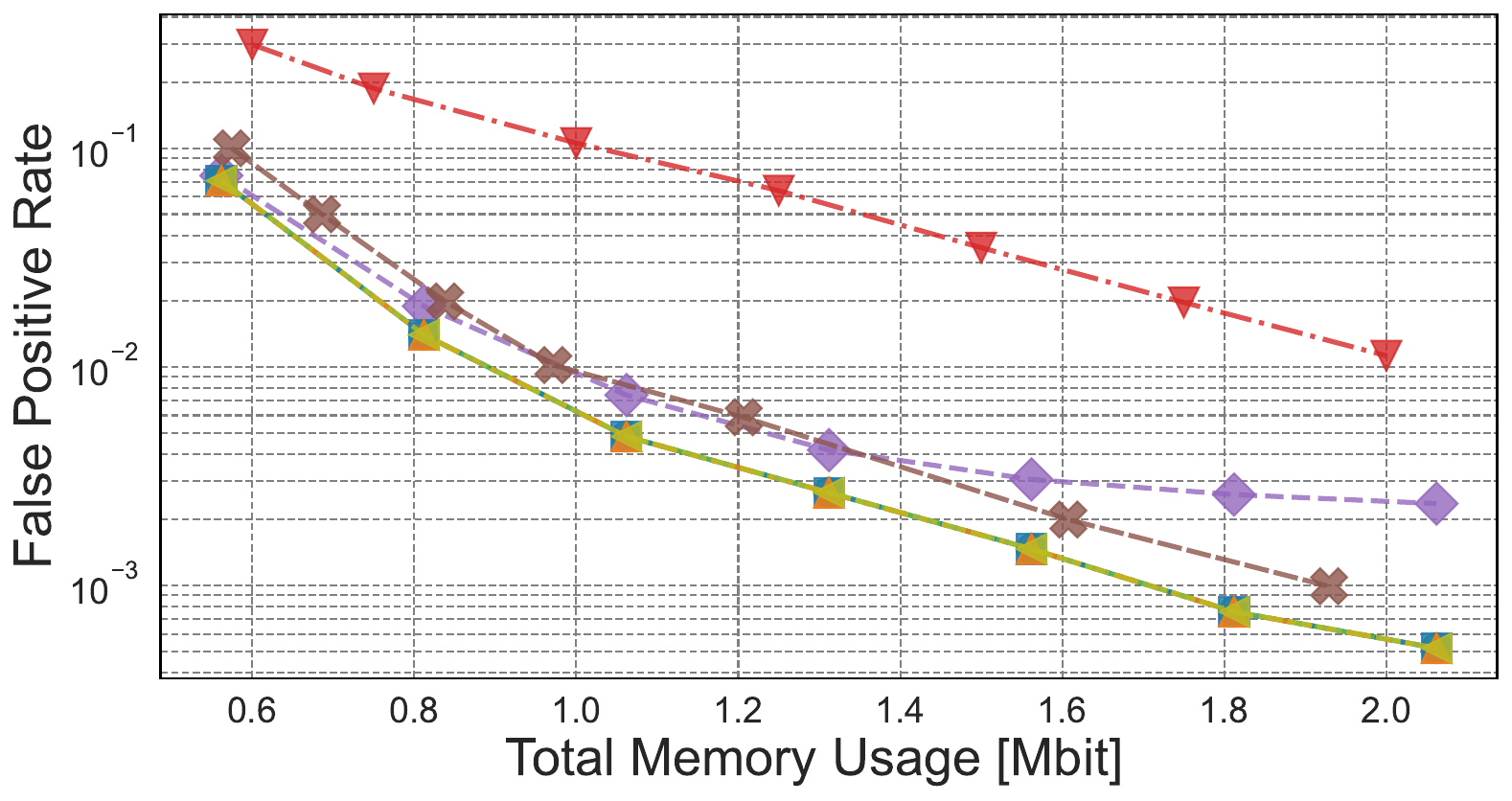}
    }
    \hspace{-1.0em}
    \subfigure[EMBER Data Set]{
        \label{fig: Malw_Memory_FPR}
        \includegraphics[width=0.48\columnwidth]{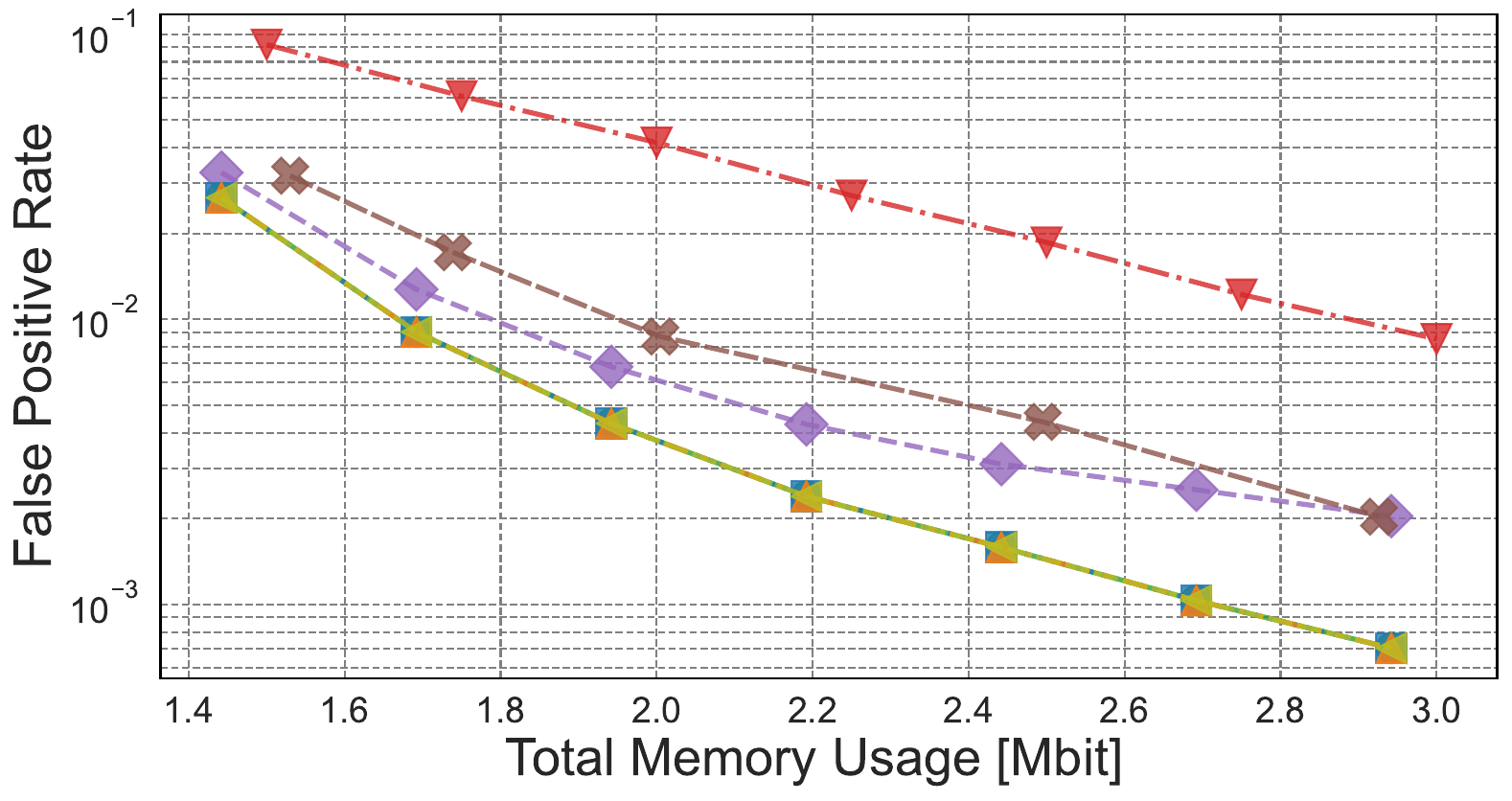}
    }
    \caption{Trade-off between memory usage and FPR.}
    \label{fig: memory_fpr}
\end{figure}

We compared the trade-off between memory usage and FPR for the proposed methods with Bloom filter, Ada-BF, sandwiched LBF, and PLBF.
Following the experiments in the original PLBF paper, hyperparameters for PLBFs were always set to $N=1,000$ and $k=5$.

\cref{fig: memory_fpr} shows each method's trade-off between memory usage and FPR.
PLBF, fast PLBF, fast PLBF++, and fast PLBF\# have better Pareto curves than the other methods for all data sets.
Fast PLBF constructs the same data structure as PLBF in all cases, so it always has exactly the same accuracy as PLBF.
In the hyperparameter settings here, fast PLBF++ and fast PLBF\# also have the same data structure as PLBF in all cases and, therefore, have exactly the same accuracy as PLBF.

\subsection{Ablation Study for Hyper-Parameters}
\label{ablation study for hyper-parameters}

\begin{figure}[t]
    \begin{minipage}{\columnwidth}
        \centering
        \includegraphics[width=0.95\columnwidth]{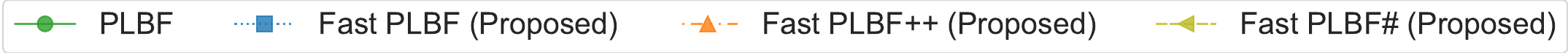}
    \end{minipage}
    \vspace{-1.5em}
    
    \subfigure[Malicious URLs Data Set]{
        \label{fig: Murl_N_Time_FPR}
        \includegraphics[width=0.48\columnwidth]{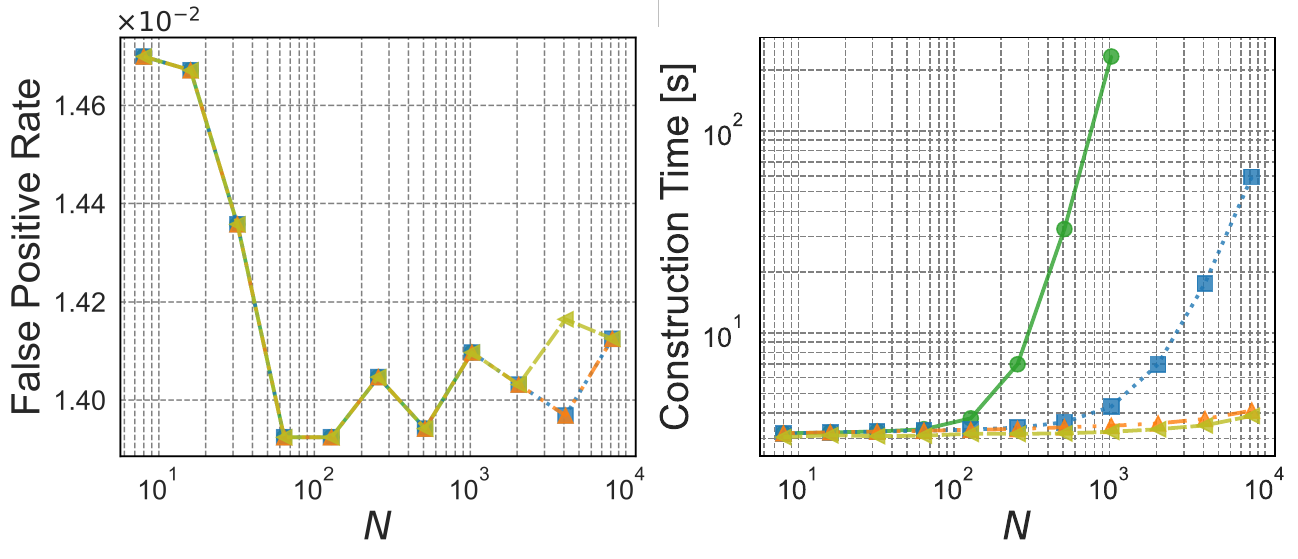}
    }
    \hspace{-1.0em}
    \subfigure[EMBER Data Set]{
        \label{fig: Malw_N_Time_FPR}
        \includegraphics[width=0.48\columnwidth]{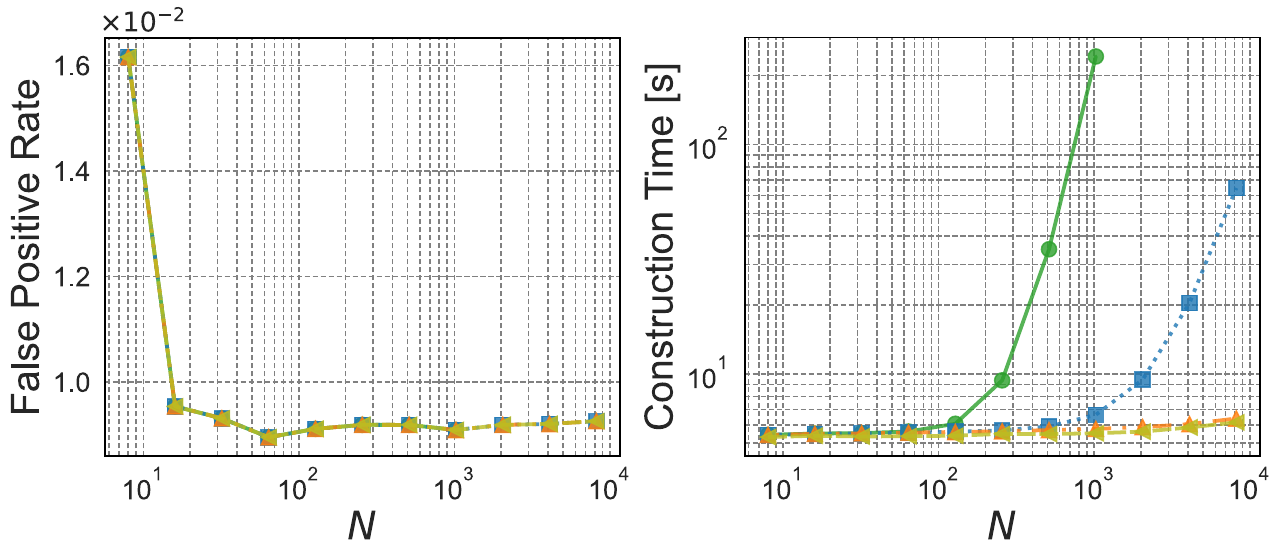}
    }
    \caption{Ablation study for hyper-parameter $N$.}
    \label{fig: construction time and accuracy with Varying n}
\end{figure}

\begin{figure}[t]
    \begin{minipage}{\columnwidth}
        \centering
        \includegraphics[width=0.95\columnwidth]{fig/legend_time_fpr.pdf}
    \end{minipage}
    \vspace{-1.5em}
    
    \subfigure[Malicious URLs Data Set]{
        \label{fig: Murl_K_Time_FPR}
        \includegraphics[width=0.48\columnwidth]{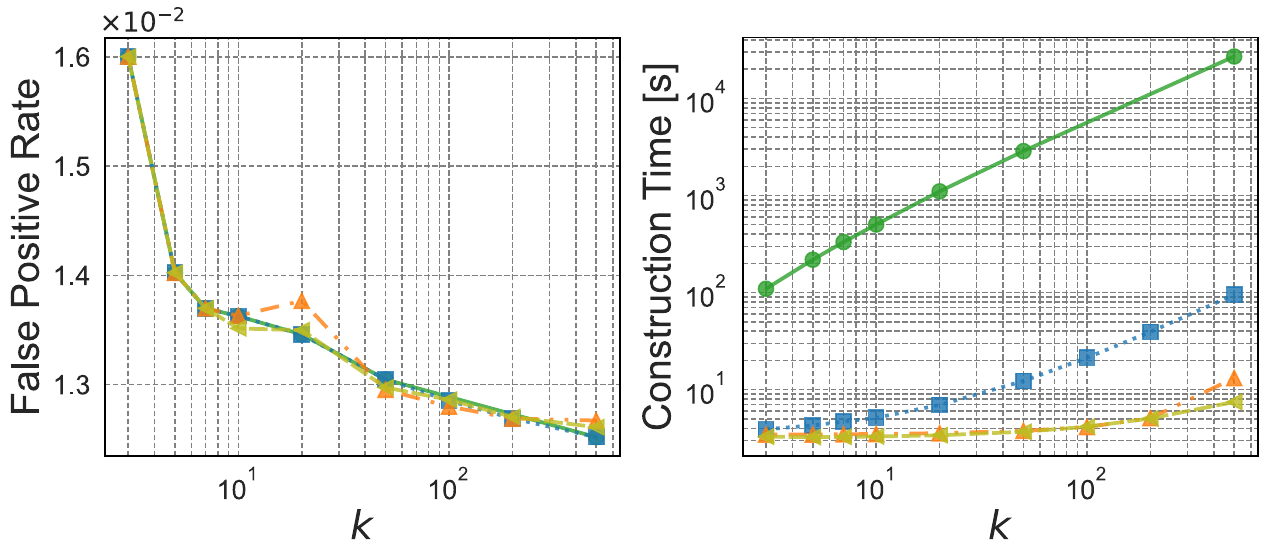}
    }
    \hspace{-1.0em}
    \subfigure[EMBER Data Set]{
        \label{fig: Malw_K_Time_FPR}
        \includegraphics[width=0.48\columnwidth]{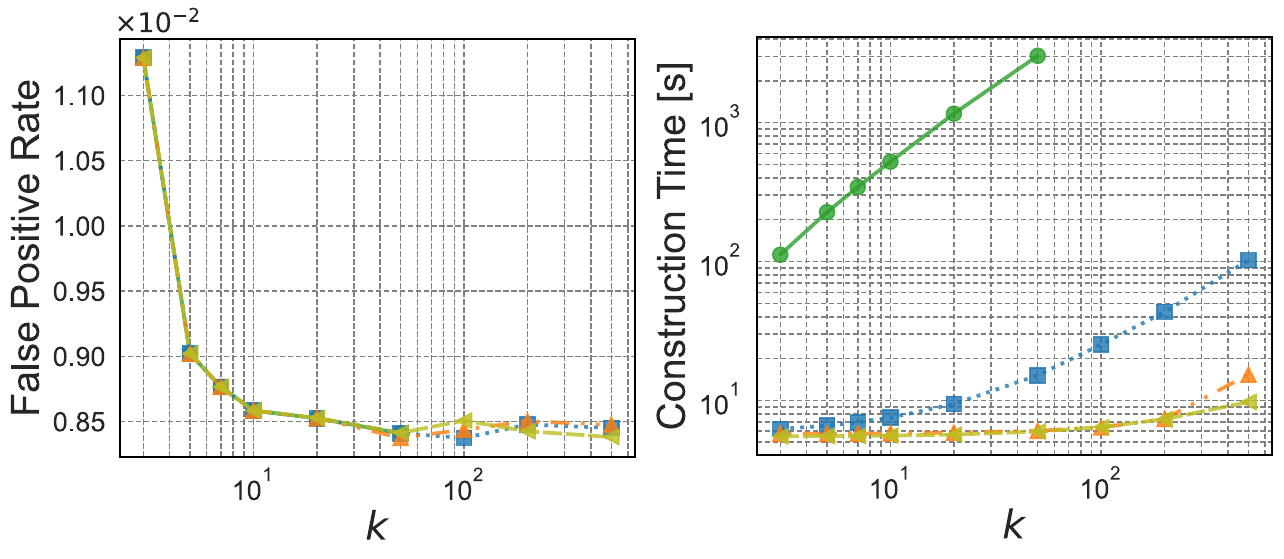}
    }
    \caption{Ablation study for hyper-parameter $k$.}
    \label{fig: construction time and accuracy with Varying k}
\end{figure}

The parameters of the PLBFs are total memory size, $N$, and $k$.
The total memory size is specified by the user, and $N$ and $k$ are hyperparameters that are determined by balancing construction time and accuracy.
In the previous sections, we set $N$ to 1,000 and $k$ to 5, following the original paper on PLBF.
However, in this section, we perform ablation studies for these hyperparameters to confirm that our proposed methods can construct accurate data structures quickly, no matter what hyperparameter settings are used.
We also confirm that the accuracy tends to be better and the construction time increases as $N$ and $k$ are increased, and that the proposed methods show a much slower increase in their construction time compared to PLBF.

\cref{fig: construction time and accuracy with Varying n} shows the FPR and construction time with various $N$ while the memory usage of the backup Bloom filters is fixed at 500 Kb and $k$ is fixed at 5.
For all four PLBFs, the FPR tends to decrease as $N$ increases.
Note that this is the FPR on test data, so it does not necessarily decrease monotonically.
Also, as $N$ increases, the PLBF construction time increases rapidly, but the fast PLBF construction time increases much more slowly than that, and for fast PLBF++ and fast PLBF\#, the construction time changes little.
This is because the construction time of PLBF is asymptotically proportional to $N^3$, while that of fast PLBF, fast PLBF++, and fast PLBF\# is proportional to $N^2$, $N \log N$, and $N$, respectively.
The experimental results show that the three proposed methods can achieve high accuracy without significantly changing the construction time with large $N$.

\cref{fig: construction time and accuracy with Varying k} shows the construction time and FPR with various $k$ while the backup Bloom filter memory usage is fixed at 500 Kb and $N$ is fixed at 1,000.
For all four PLBFs, the FPR tends to decrease as $k$ increases.
For the EMBER data set, the FPR stops decreasing at about $k=20$, while for the URLs data set, it continues to decrease even at about $k=500$.
Just as in the case of experiments with varying $N$, this decrease is not necessarily monotonic because this is the FPR on test data.
Fast PLBF always has the same accuracy as PLBF, but fast PLBF++ and fast PLBF\# have slightly different accuracy from PLBF; fast PLBF++ exhibits up to 1.022 and 1.007 times the FPR of PLBF, and fast PLBF\# exhibits up to 1.007 and 1.015 times FPR of PLBF for the URLs and EMBER data sets, respectively.
In addition, the construction times of all four PLBFs increase proportionally to $k$, but fast PLBF has a much shorter construction time than PLBF, and fast PLBF++ and fast PLBF\# have an even shorter construction time than fast PLBF.
When $k=50$, fast PLBF constructs 233 and 199 times faster, fast PLBF++ constructs 761 and 500 times faster, and fast PLBF\# constructs 778 and 507 times faster than PLBF for the URLs and EMBER data sets, respectively.
The experimental results indicate that by increasing $k$, the three proposed methods can achieve high accuracy without significantly affecting the construction time.

\subsection{Robustness of Fast PLBF++ and Fast PLBF\# to Non-Ideal Distributions}
\label{sec: robustness of fast plbf pp}

\begin{figure}[t]
    \centering
    \subfigure[$0$ swaps]{
        \label{fig: Arti_0_0_log_hist}
        \includegraphics[width=0.3\columnwidth]{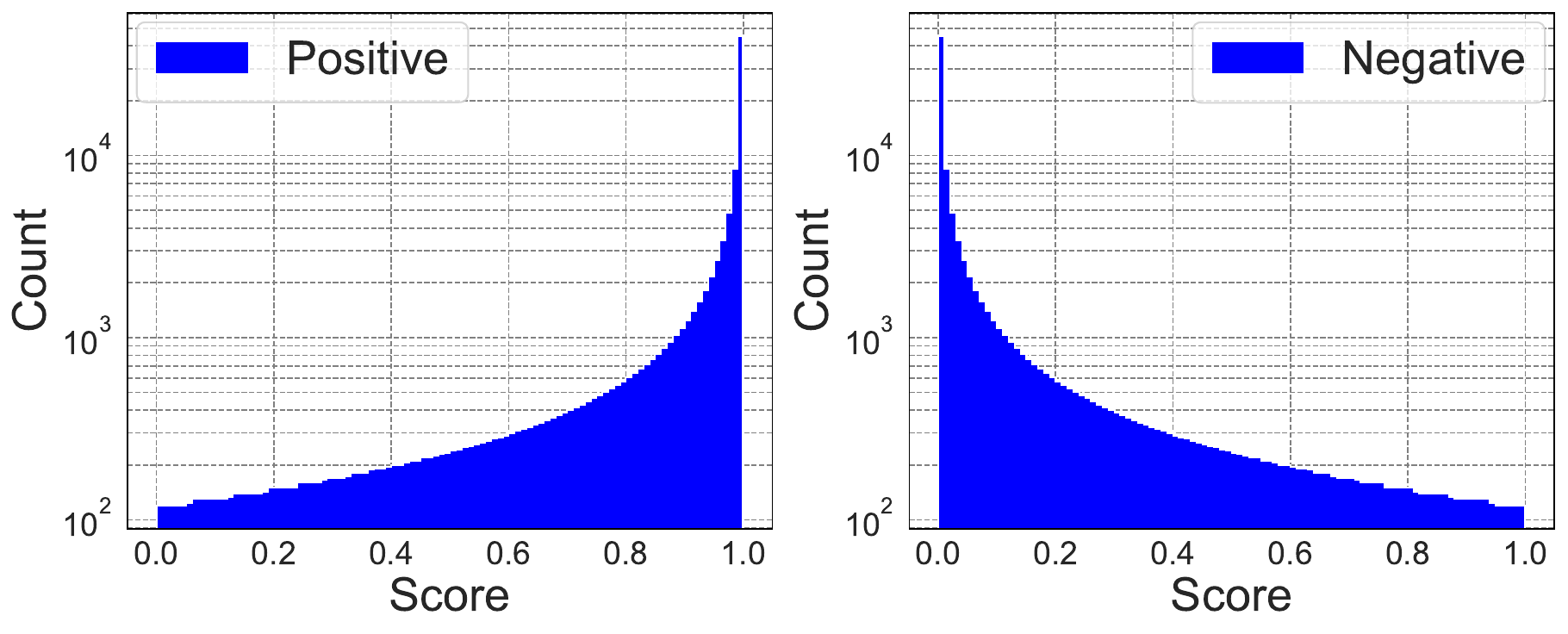}
    }
    \subfigure[$10$ swaps]{
        \label{fig: Arti_10_0_log_hist}
        \includegraphics[width=0.3\columnwidth]{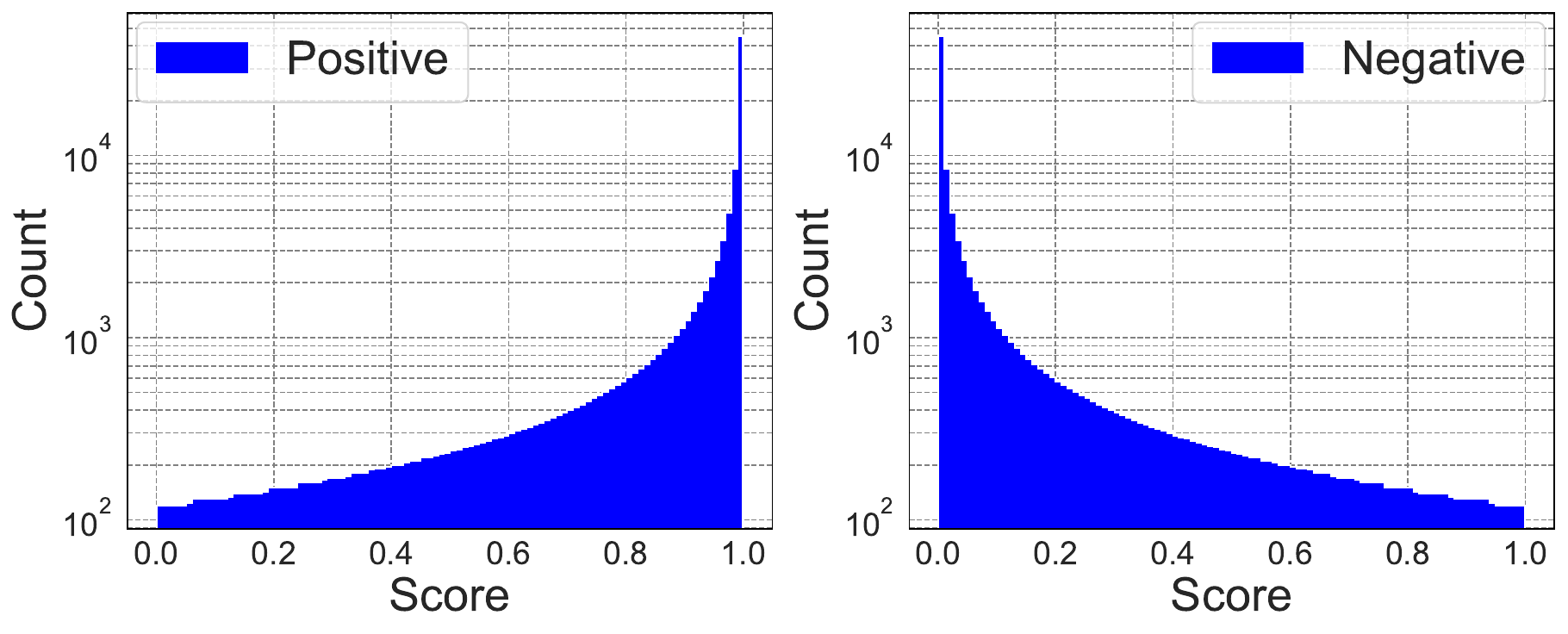}
    }
    \subfigure[$10^2$ swaps]{
        \label{fig: Arti_100_0_log_hist}
        \includegraphics[width=0.3\columnwidth]{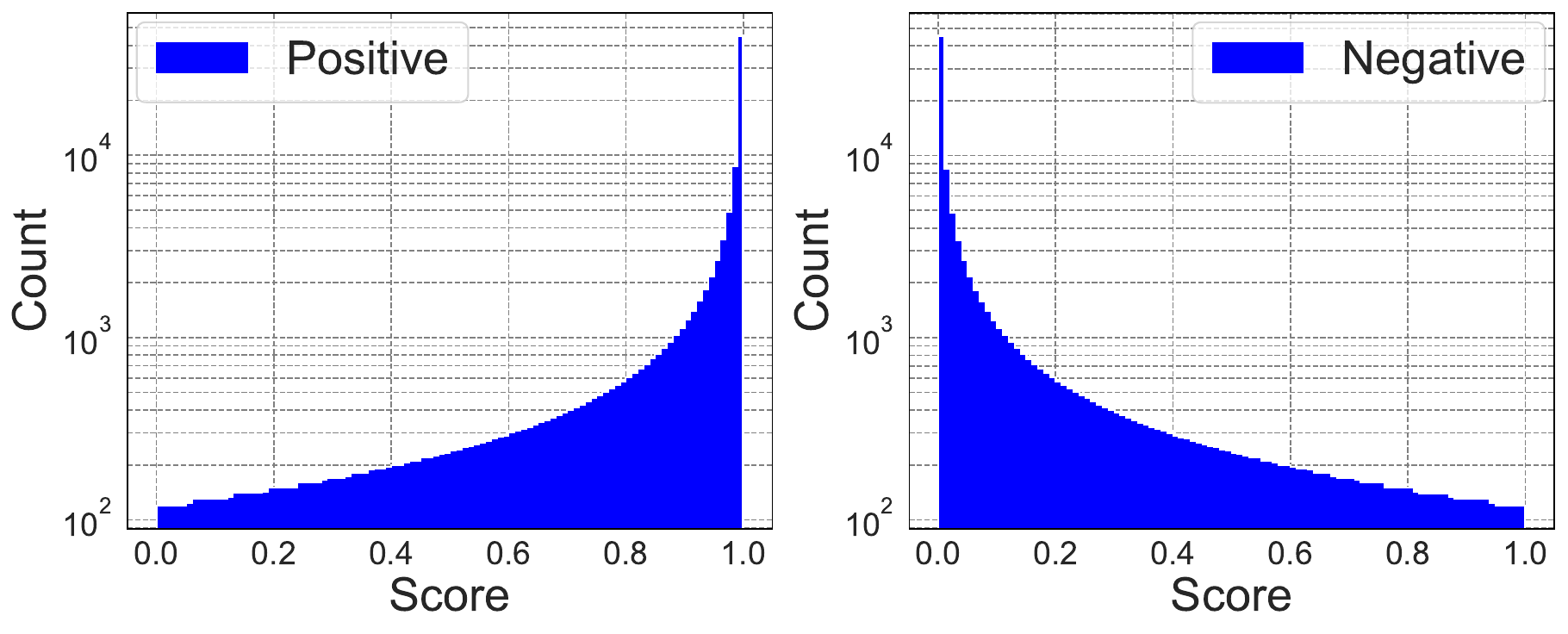}
    }
    \subfigure[$10^3$ swaps]{
        \label{fig: Arti_1000_0_log_hist}
        \includegraphics[width=0.3\columnwidth]{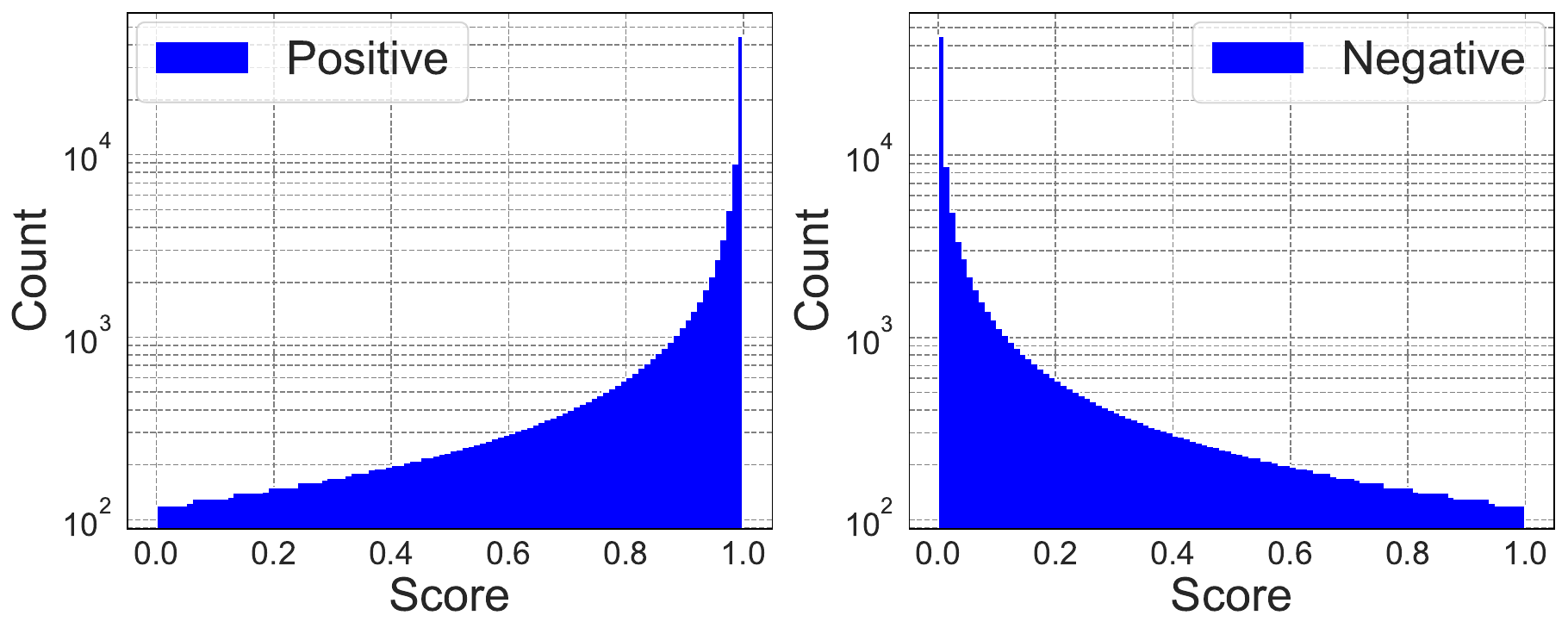}
    }
    \subfigure[$10^4$ swaps]{
        \label{fig: Arti_10000_0_log_hist}
        \includegraphics[width=0.3\columnwidth]{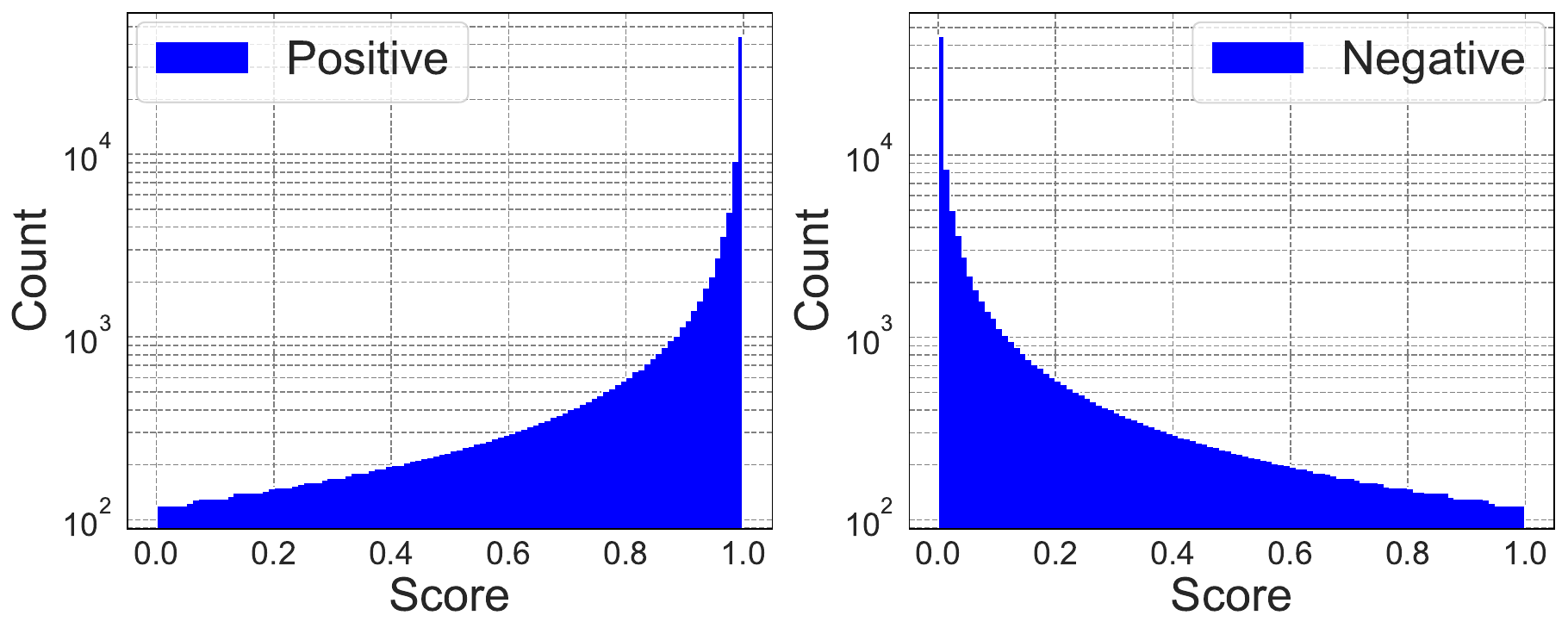}
    }
    \subfigure[$10^5$ swaps]{
        \label{fig: Arti_100000_0_log_hist}
        \includegraphics[width=0.3\columnwidth]{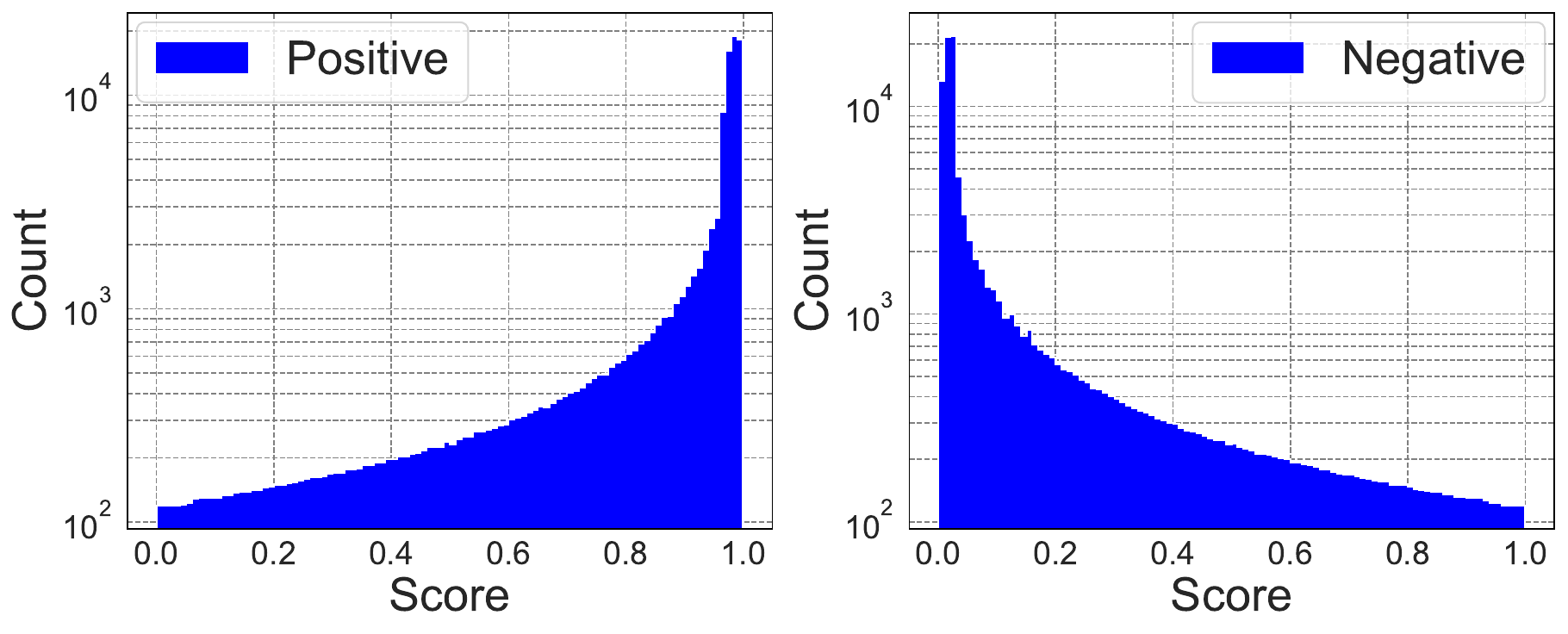}
    }
    \subfigure[$10^6$ swaps]{
        \label{fig: Arti_1000000_0_log_hist}
        \includegraphics[width=0.3\columnwidth]{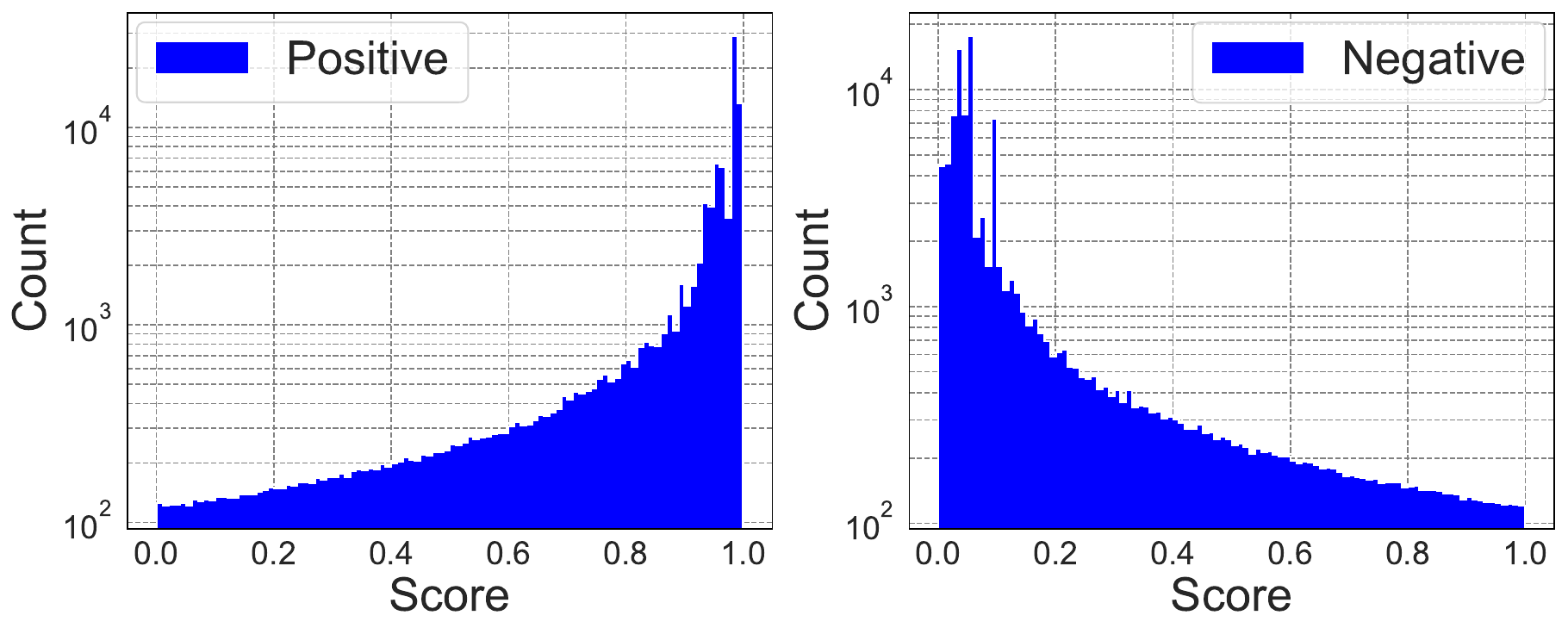}
    }
    \subfigure[$10^7$ swaps]{
        \label{fig: Arti_10000000_0_log_hist}
        \includegraphics[width=0.3\columnwidth]{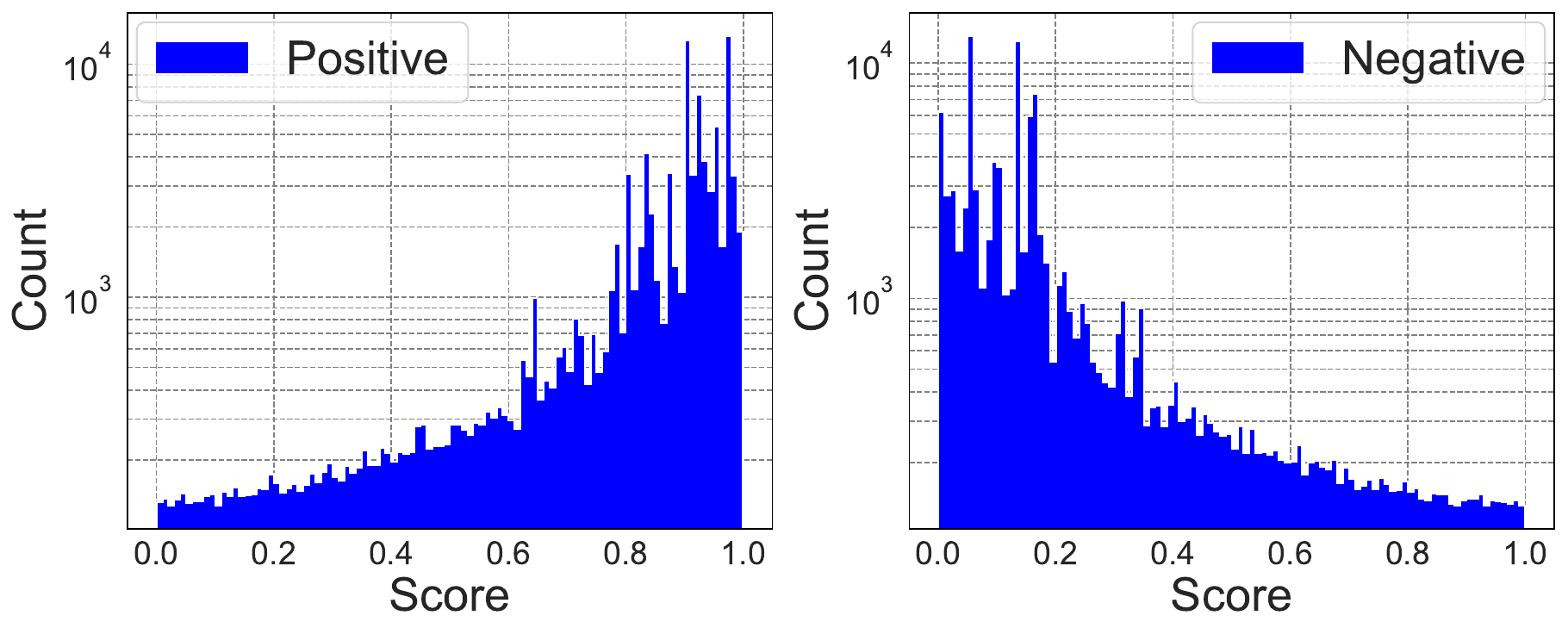}
    }
    \subfigure[$10^8$ swaps]{
        \label{fig: Arti_100000000_0_log_hist}
        \includegraphics[width=0.3\columnwidth]{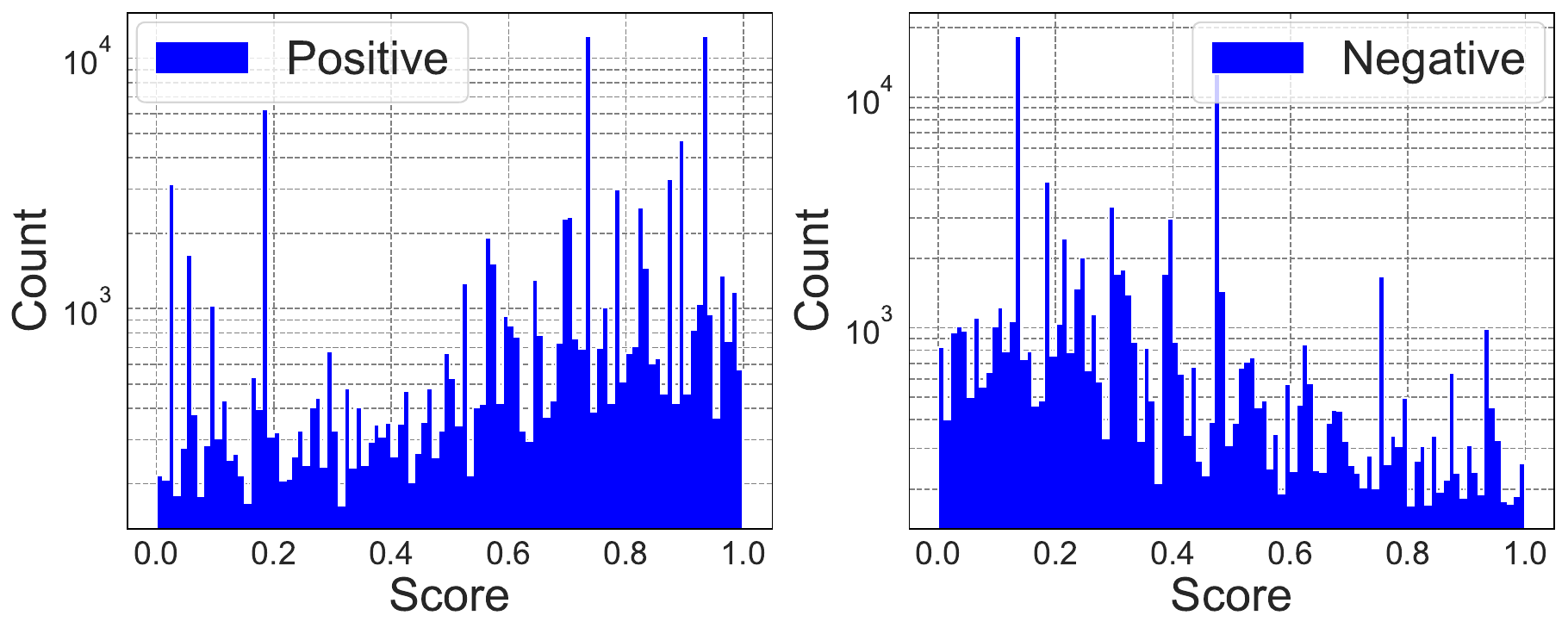}
    }
    \caption{Score distribution histograms of artificial data sets (seed 0).}
    \label{fig: Artificial_log_hist}
\end{figure}

\begin{figure}[t]
    \centering
    \subfigure[$0$ swaps]{
        \label{fig: Arti_0_0_pos_neg_ratio}
        \includegraphics[width=0.3\columnwidth]{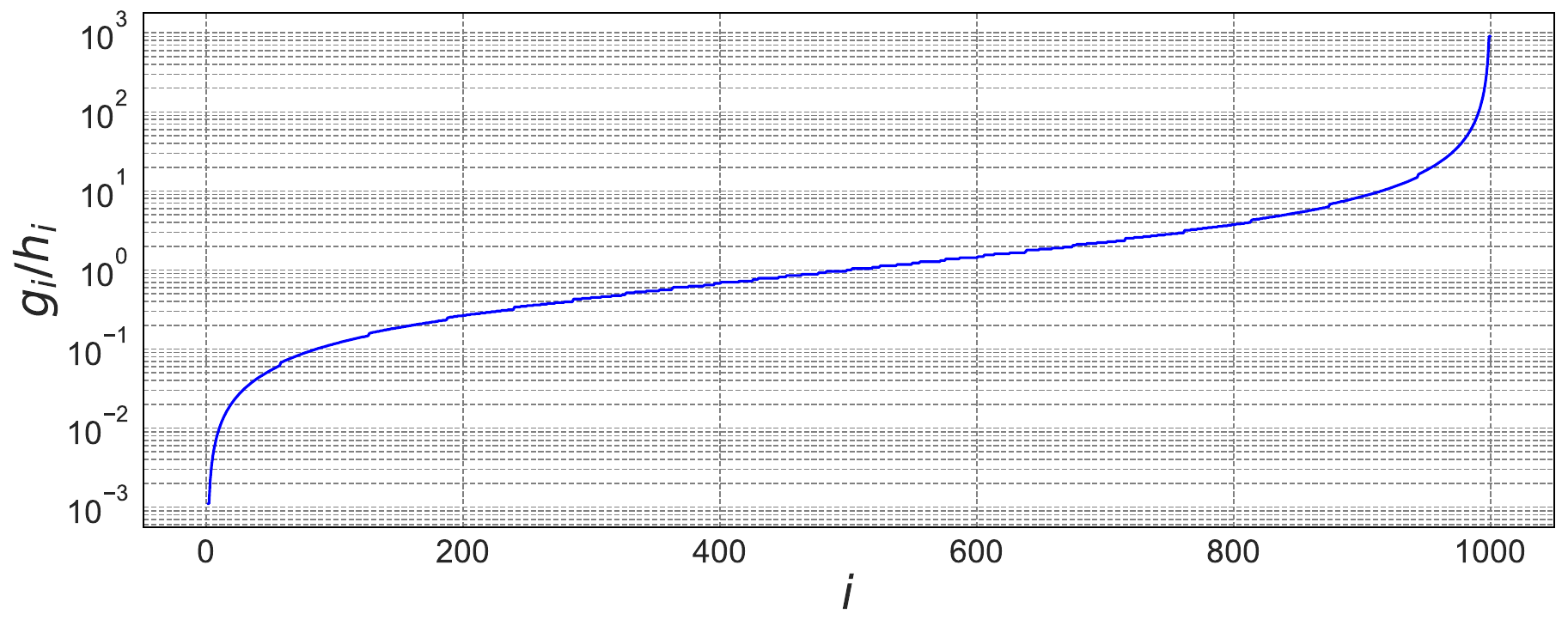}
    }
    \subfigure[$10$ swaps]{
        \label{fig: Arti_10_0_pos_neg_ratio}
        \includegraphics[width=0.3\columnwidth]{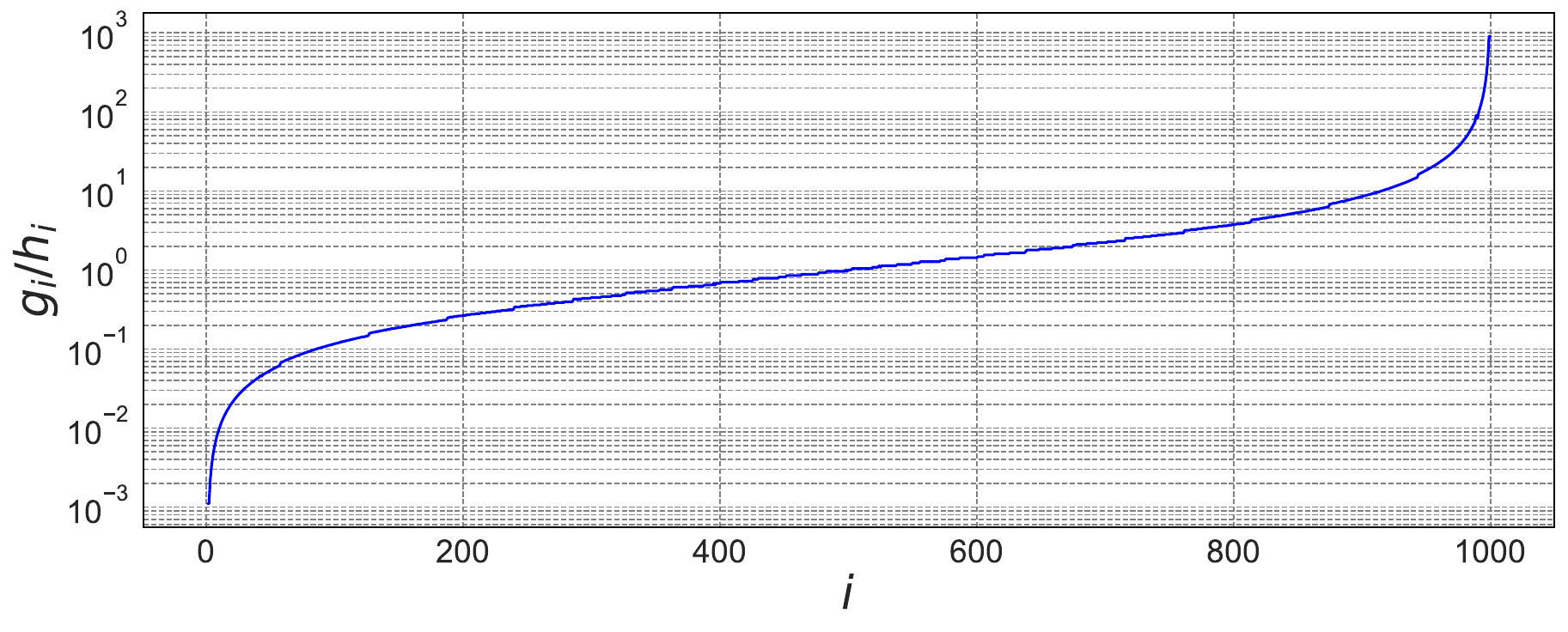}
    }
    \subfigure[$10^2$ swaps]{
        \label{fig: Arti_100_0_pos_neg_ratio}
        \includegraphics[width=0.3\columnwidth]{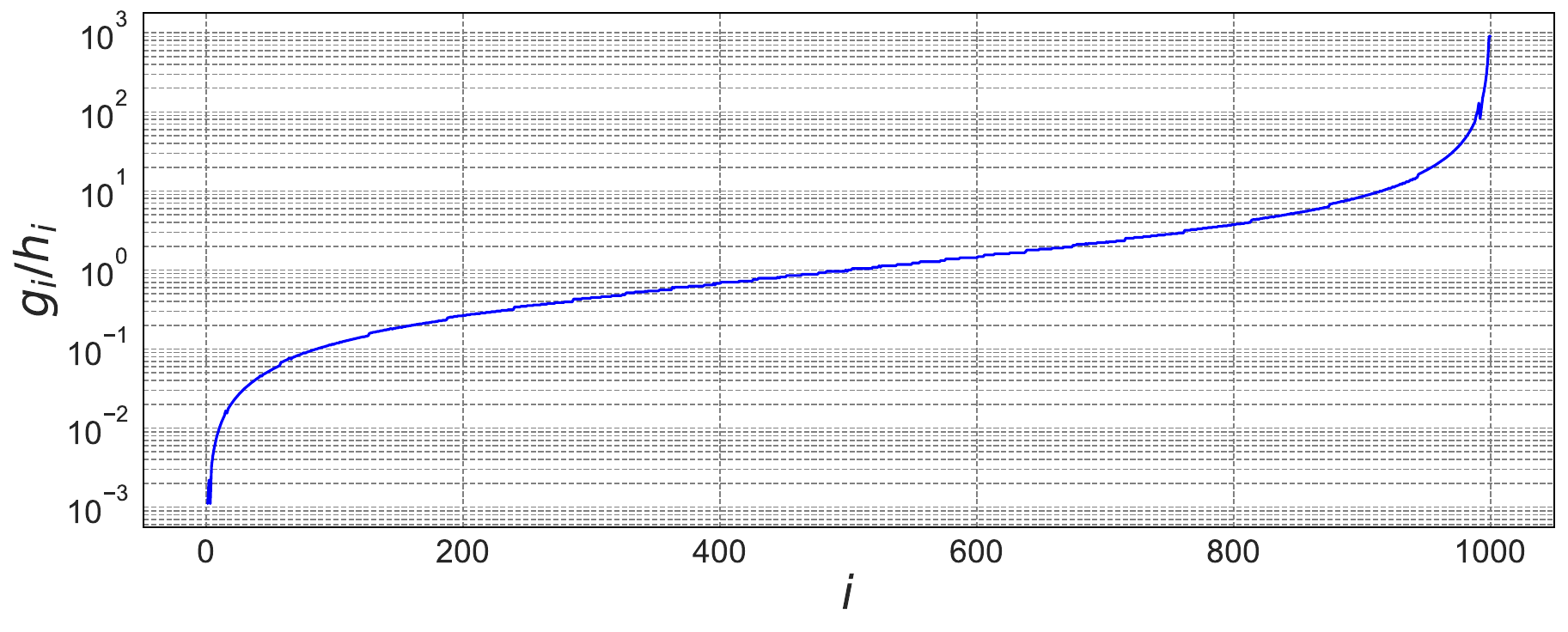}
    }
    \subfigure[$10^3$ swaps]{
        \label{fig: Arti_1000_0_pos_neg_ratio}
        \includegraphics[width=0.3\columnwidth]{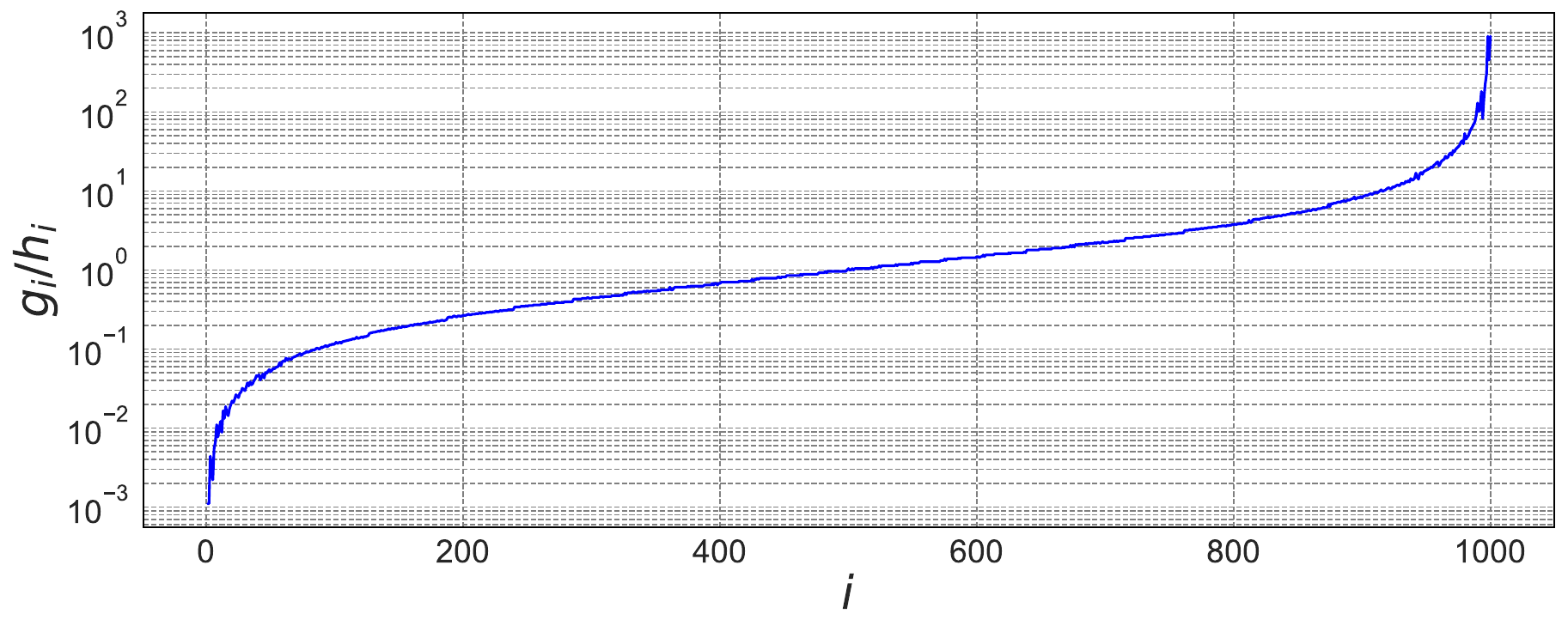}
    }
    \subfigure[$10^4$ swaps]{
        \label{fig: Arti_10000_0_pos_neg_ratio}
        \includegraphics[width=0.3\columnwidth]{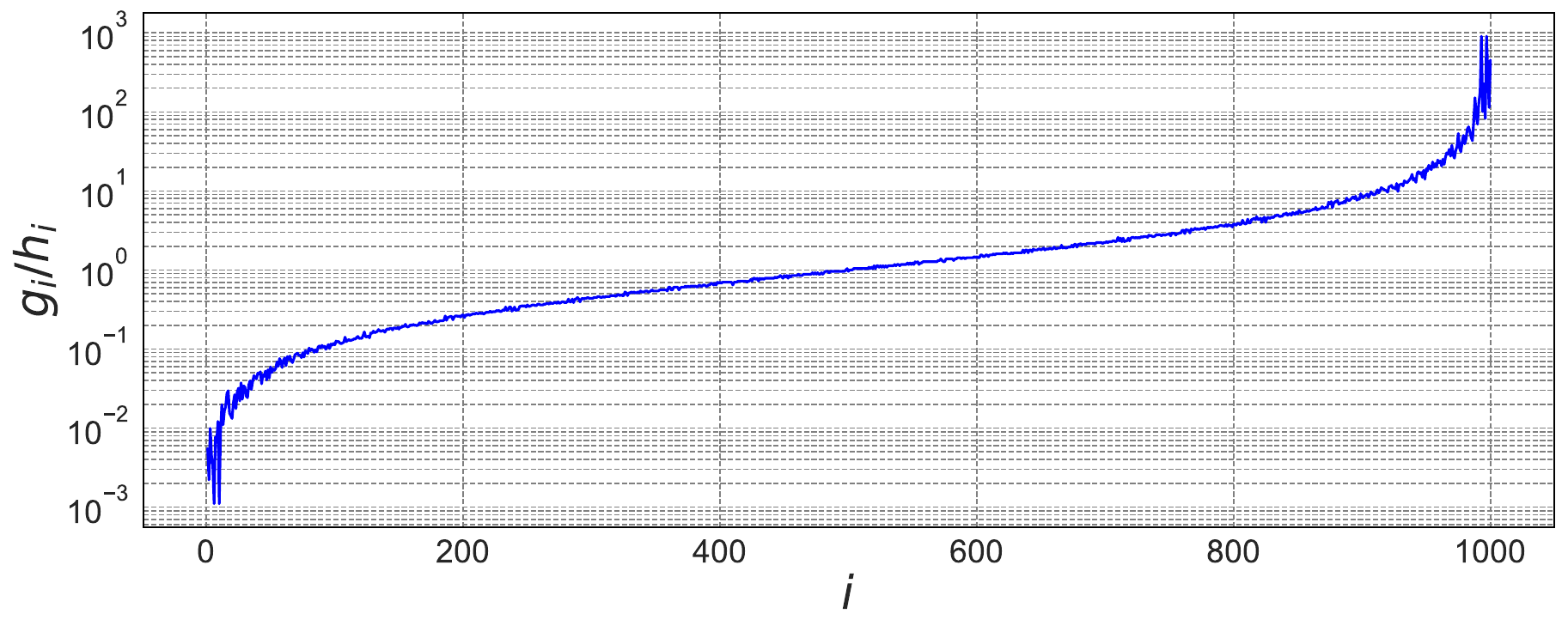}
    }
    \subfigure[$10^5$ swaps]{
        \label{fig: Arti_100000_0_pos_neg_ratio}
        \includegraphics[width=0.3\columnwidth]{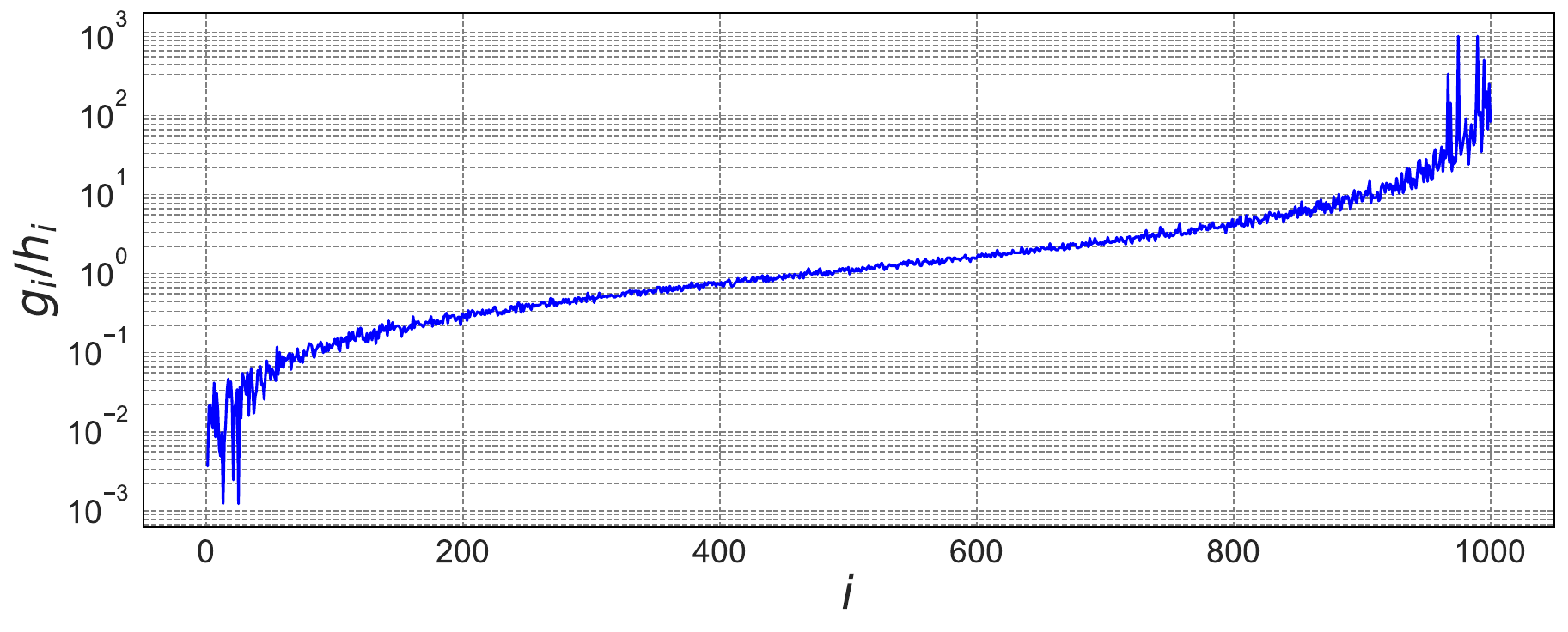}
    }
    \subfigure[$10^6$ swaps]{
        \label{fig: Arti_1000000_0_pos_neg_ratio}
        \includegraphics[width=0.3\columnwidth]{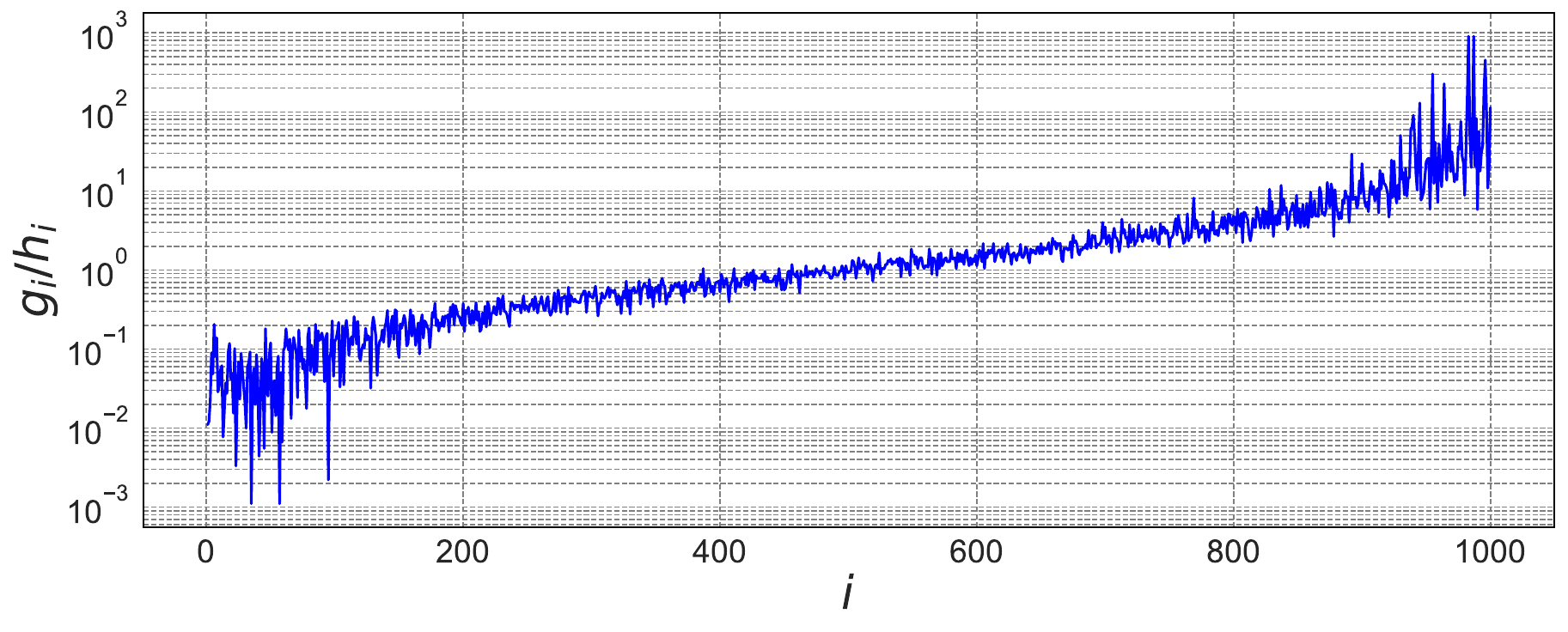}
    }
    \subfigure[$10^7$ swaps]{
        \label{fig: Arti_10000000_0_pos_neg_ratio}
        \includegraphics[width=0.3\columnwidth]{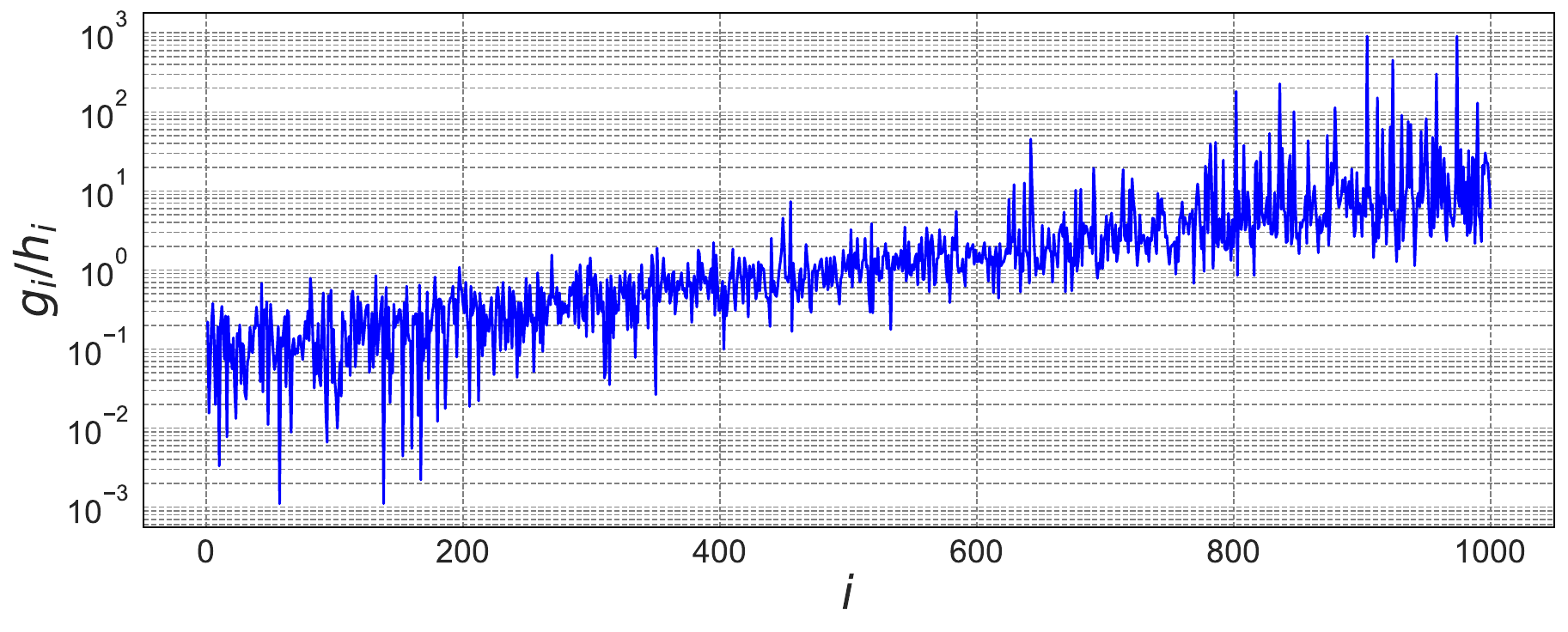}
    }
    \subfigure[$10^8$ swaps]{
        \label{fig: Arti_100000000_0_pos_neg_ratio}
        \includegraphics[width=0.3\columnwidth]{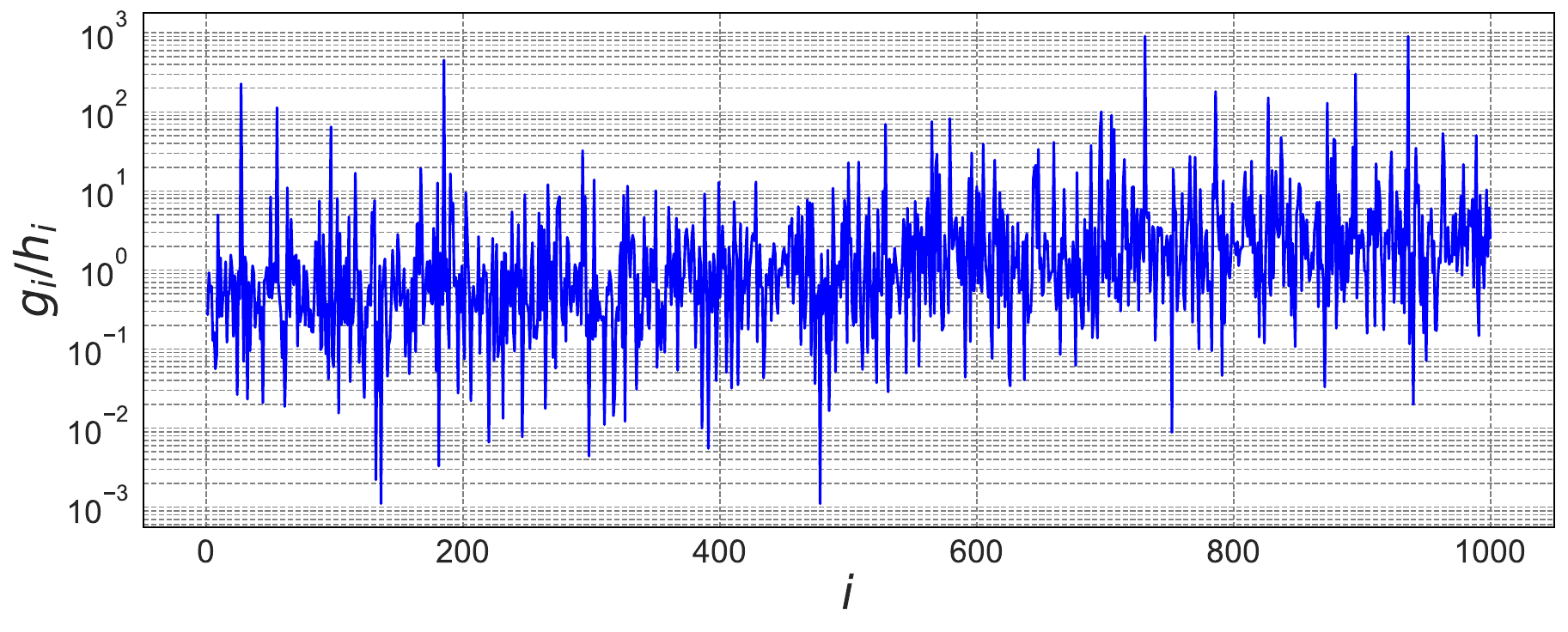}
    }
    \caption{Ratio of key to non-key of artificial data sets (seed 0).}
    \label{fig: Artificial_pos_neg_ratio}
\end{figure}

\begin{figure}[p]
    \centering
    \begin{minipage}{0.75\columnwidth}
        \centering
        \includegraphics[width=\columnwidth]{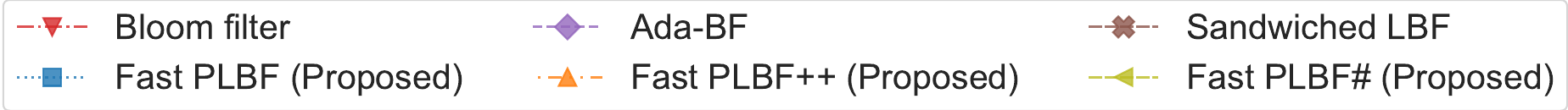}
    \end{minipage}
    
    \centering
    \subfigure[$0$ swaps]{
        \includegraphics[width=0.3\columnwidth]{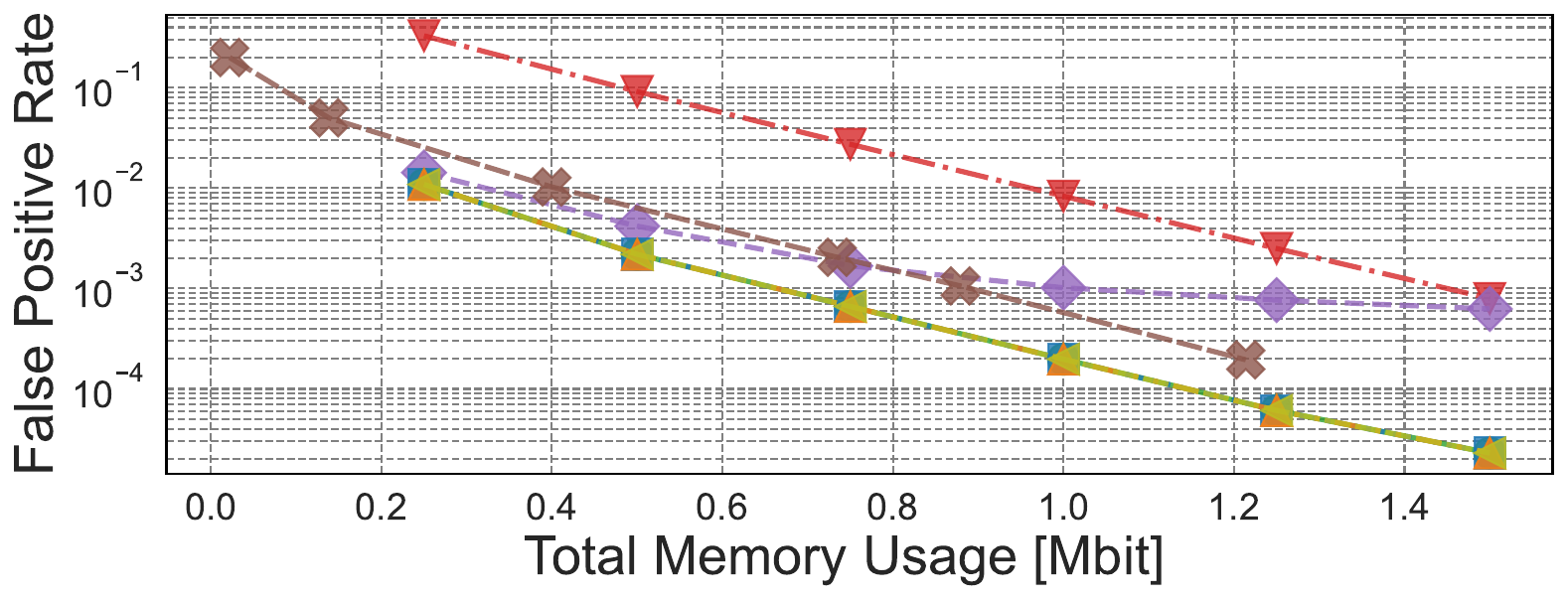}
    }
    \subfigure[$10$ swaps]{
        \includegraphics[width=0.3\columnwidth]{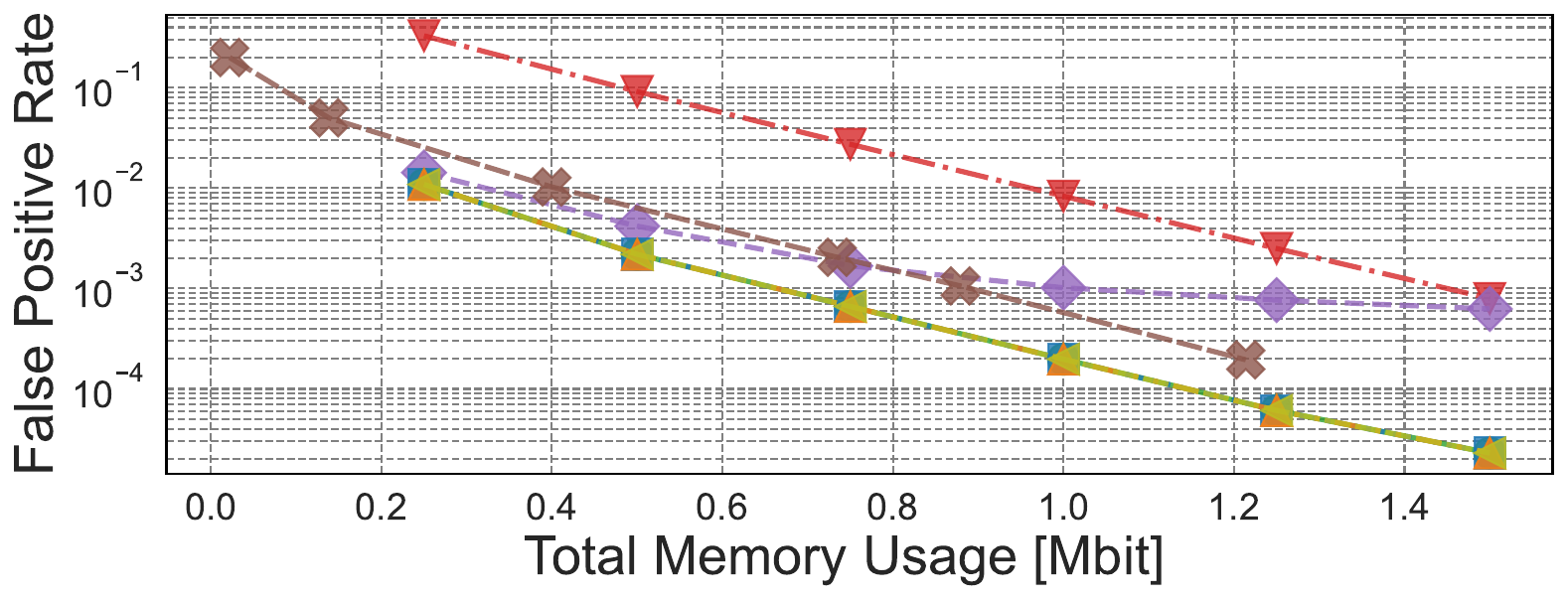}
    }
    \subfigure[$10^2$ swaps]{
        \includegraphics[width=0.3\columnwidth]{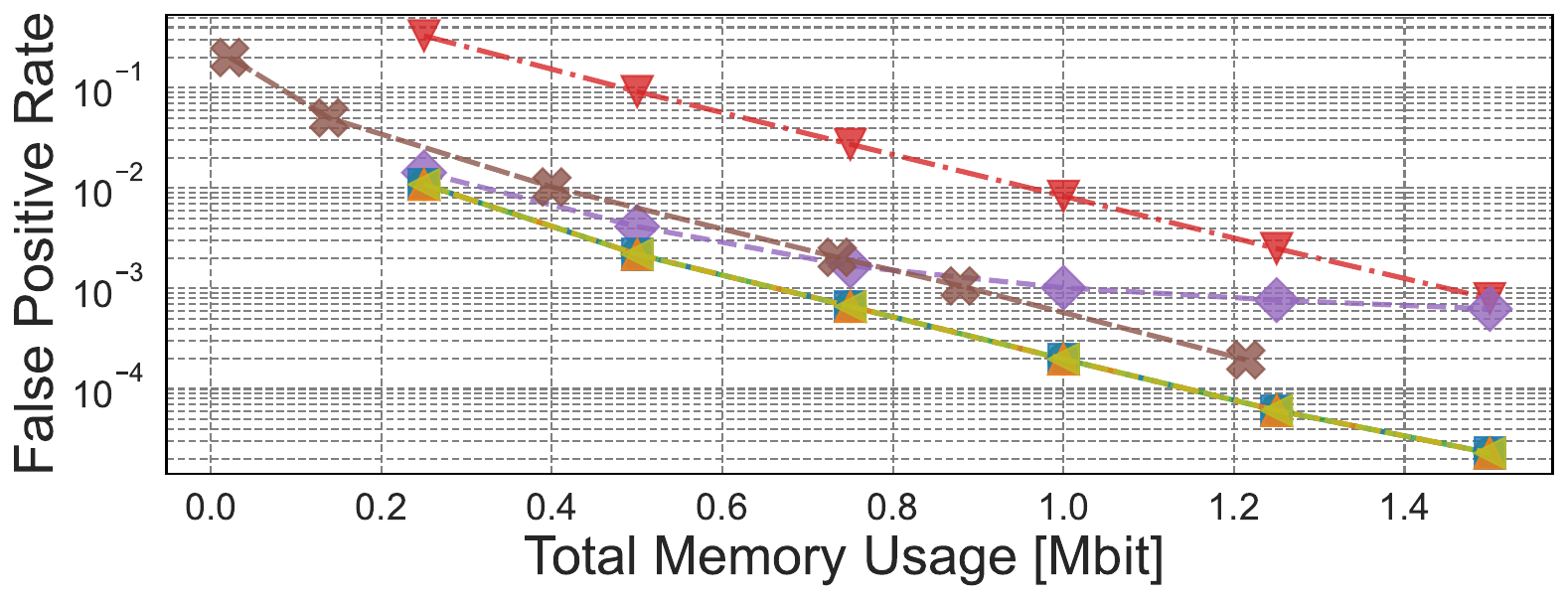}
    }
    \subfigure[$10^3$ swaps]{
        \includegraphics[width=0.3\columnwidth]{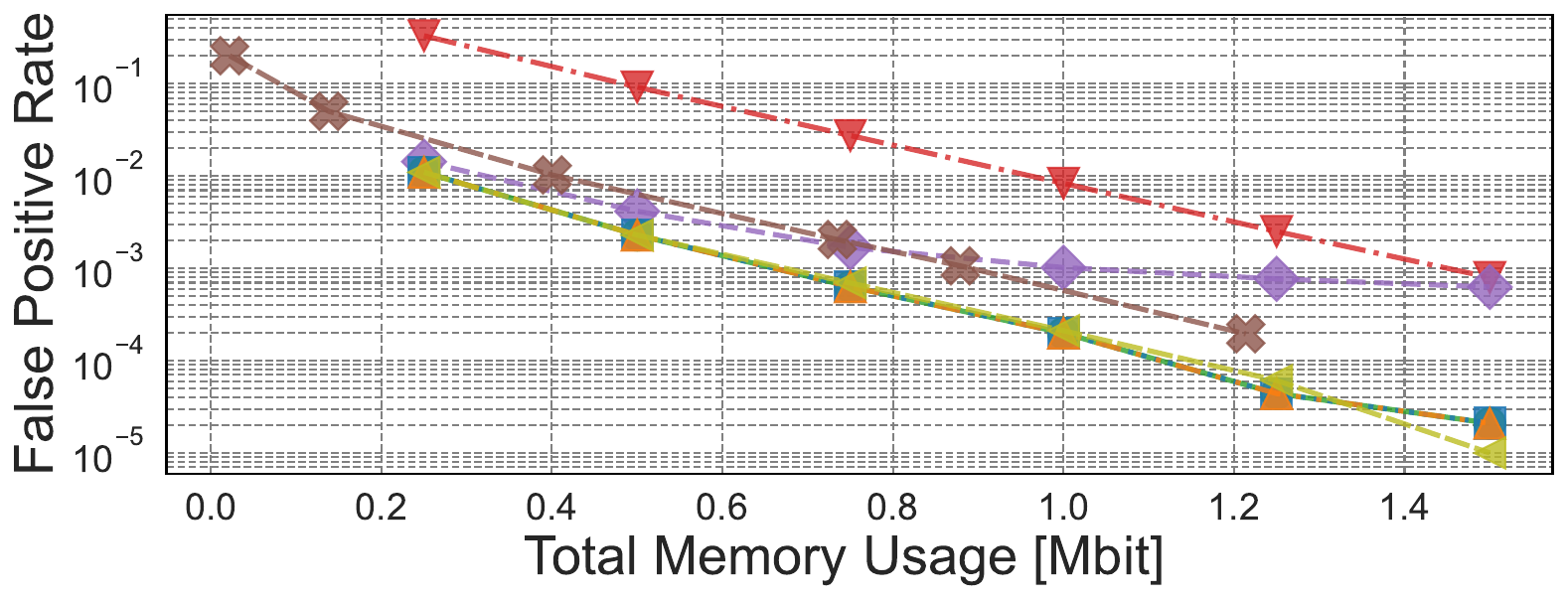}
    }
    \subfigure[$10^4$ swaps]{
        \includegraphics[width=0.3\columnwidth]{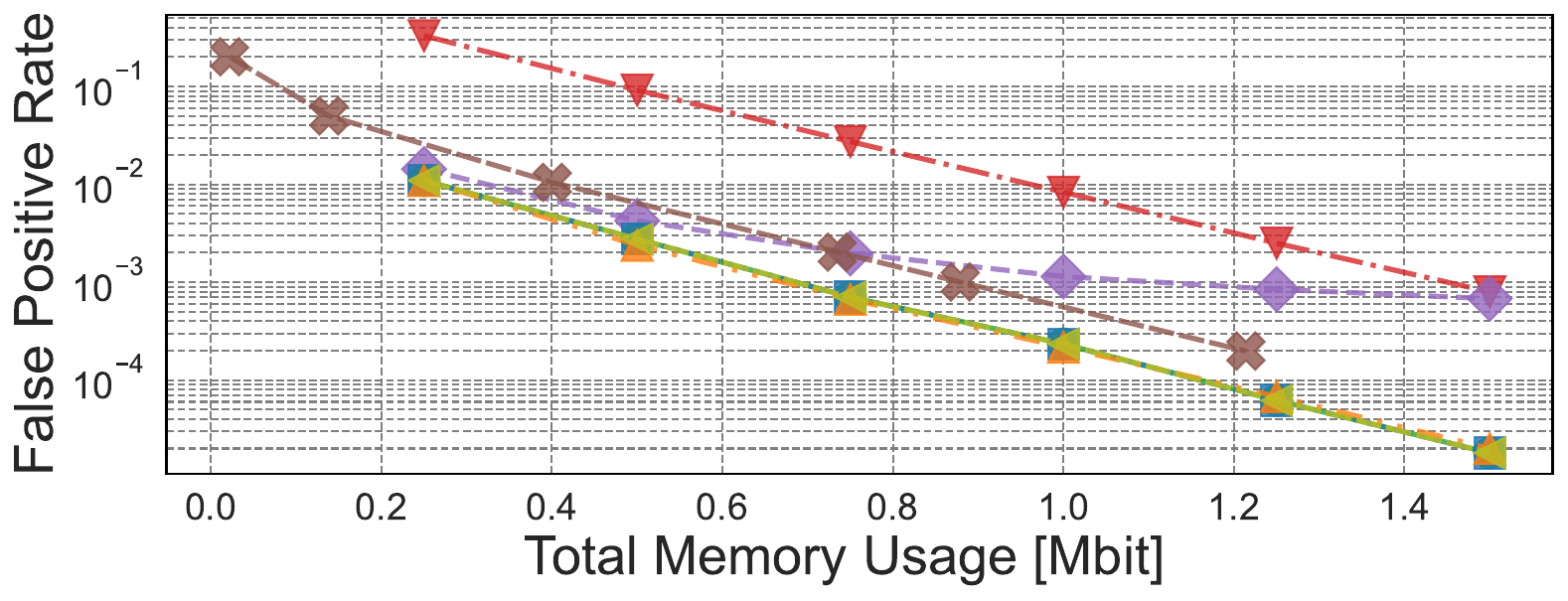}
    }
    \subfigure[$10^5$ swaps]{
        \includegraphics[width=0.3\columnwidth]{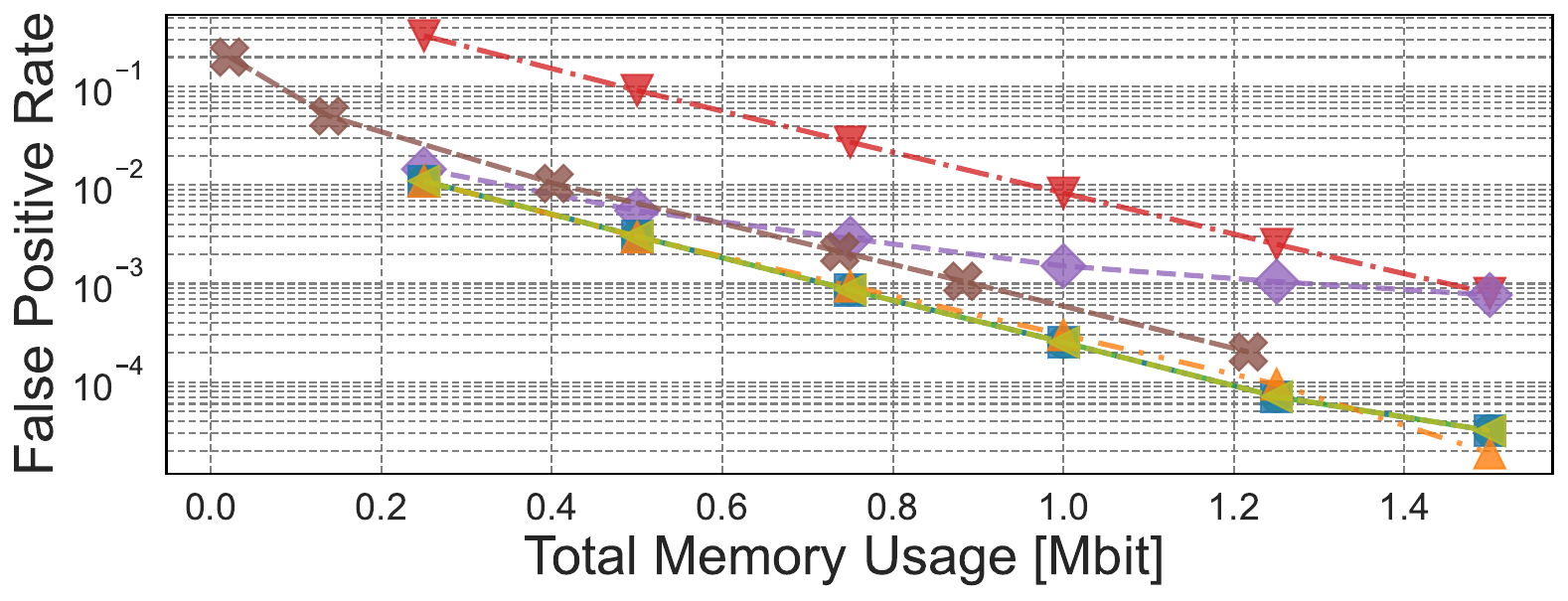}
    }
    \subfigure[$10^6$ swaps]{
        \includegraphics[width=0.3\columnwidth]{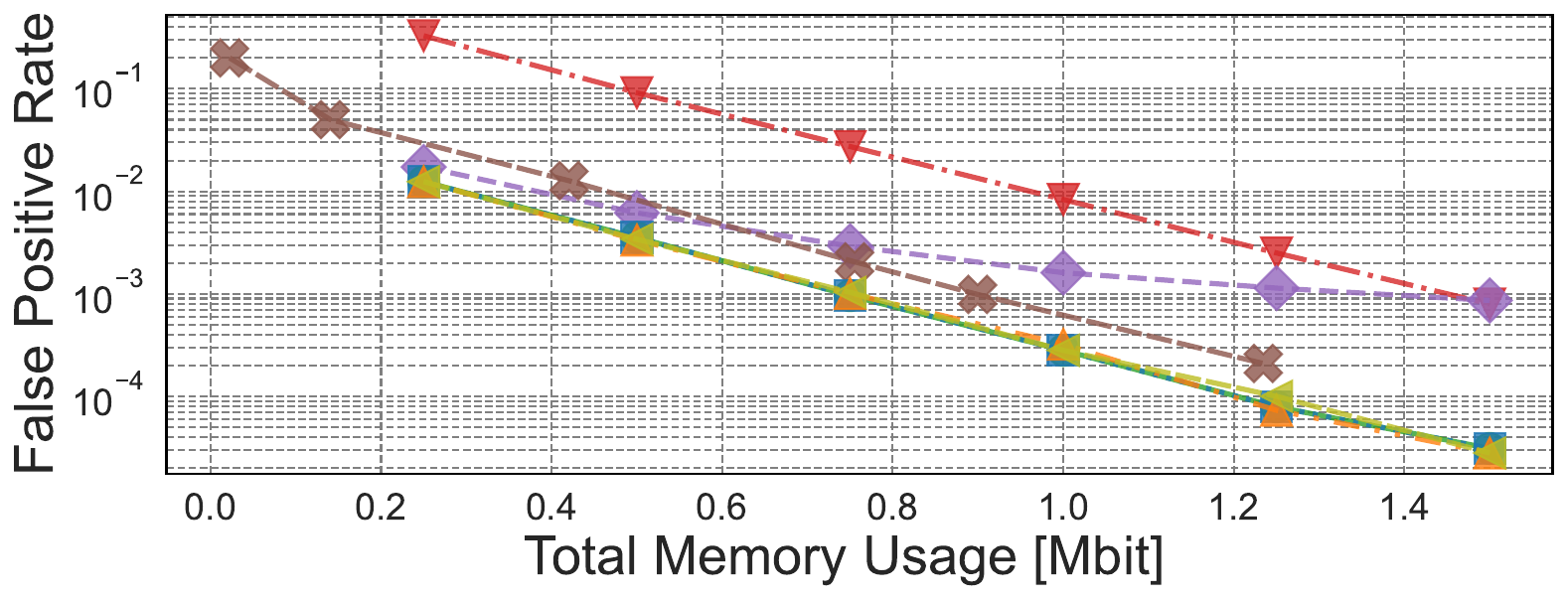}
    }
    \subfigure[$10^7$ swaps]{
        \includegraphics[width=0.3\columnwidth]{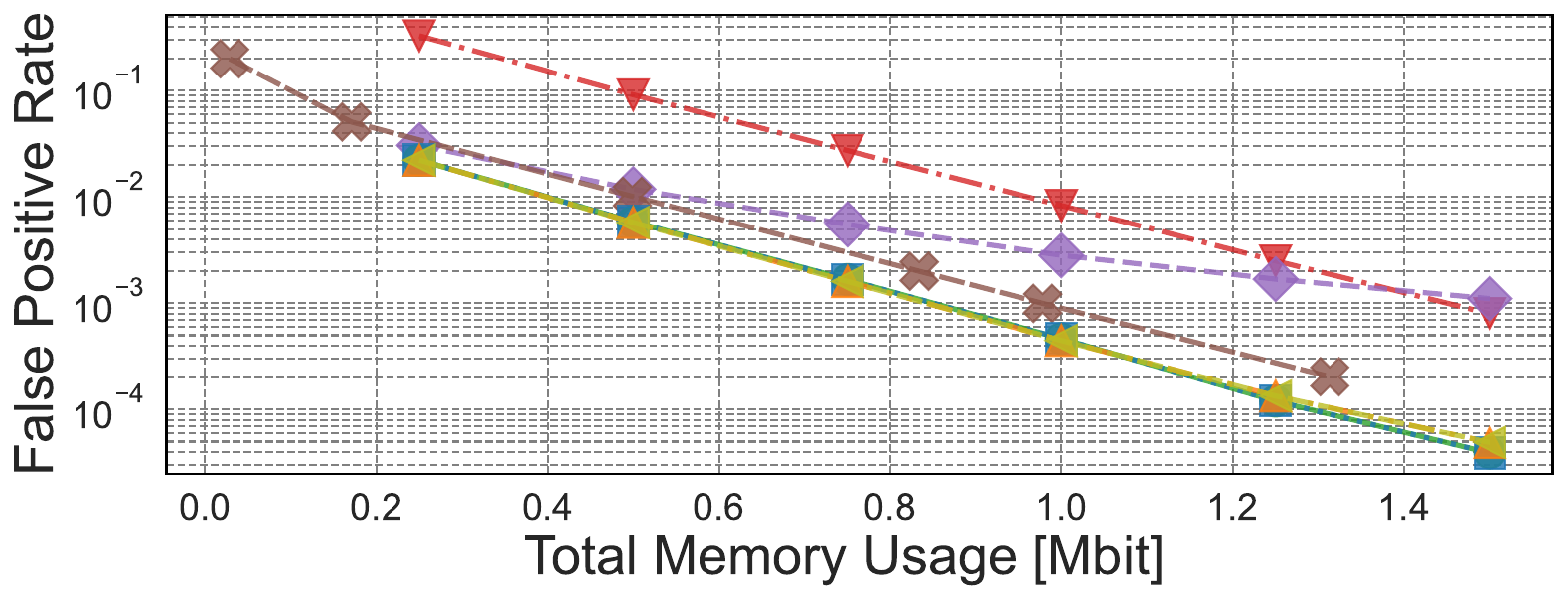}
    }
    \subfigure[$10^8$ swaps]{
        \includegraphics[width=0.3\columnwidth]{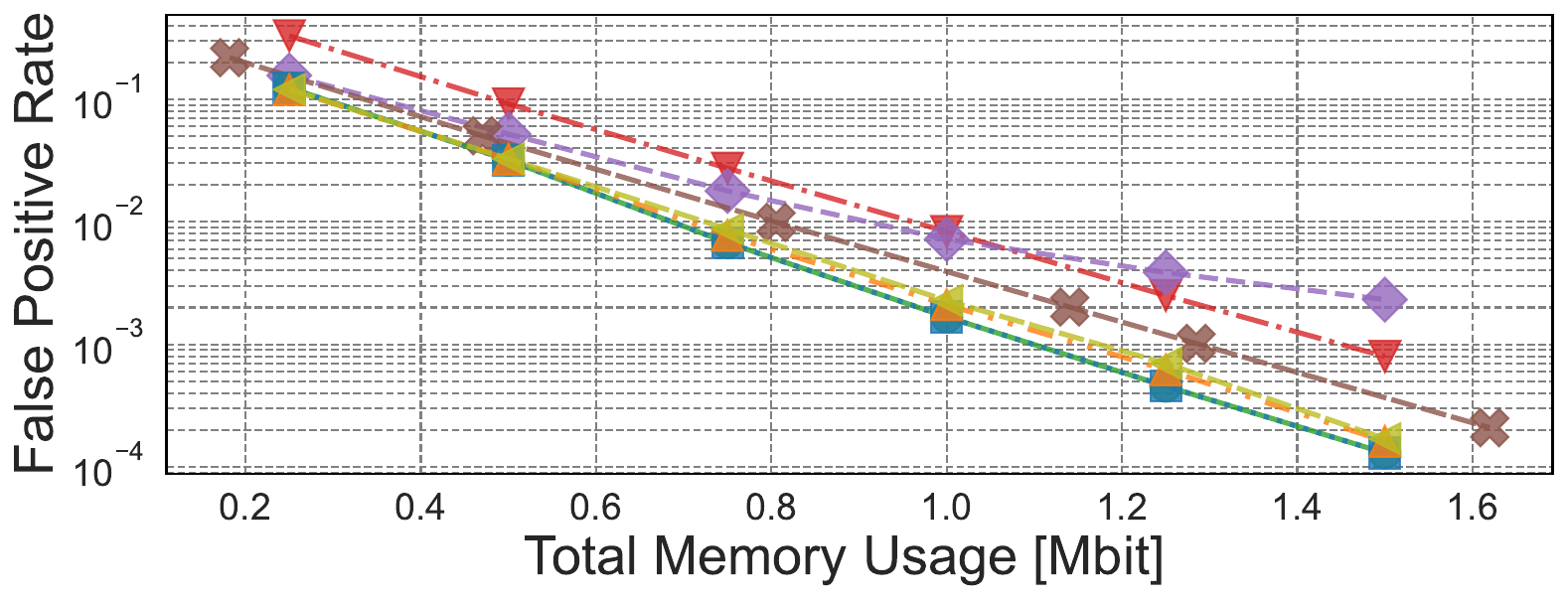}
    }
    \caption{Trade-off between memory usage and FPR for artificial data sets (seed 0). The hyperparameters for PLBFs are $N=1,000$ and $k=5$.}
    \label{fig: Arti_Memory_and_FPR}
\end{figure}

\begin{figure}[p]
    \subfigure[Fast PLBF++]{
        \label{fig: pp_diff}
        \includegraphics[width=0.48\columnwidth]{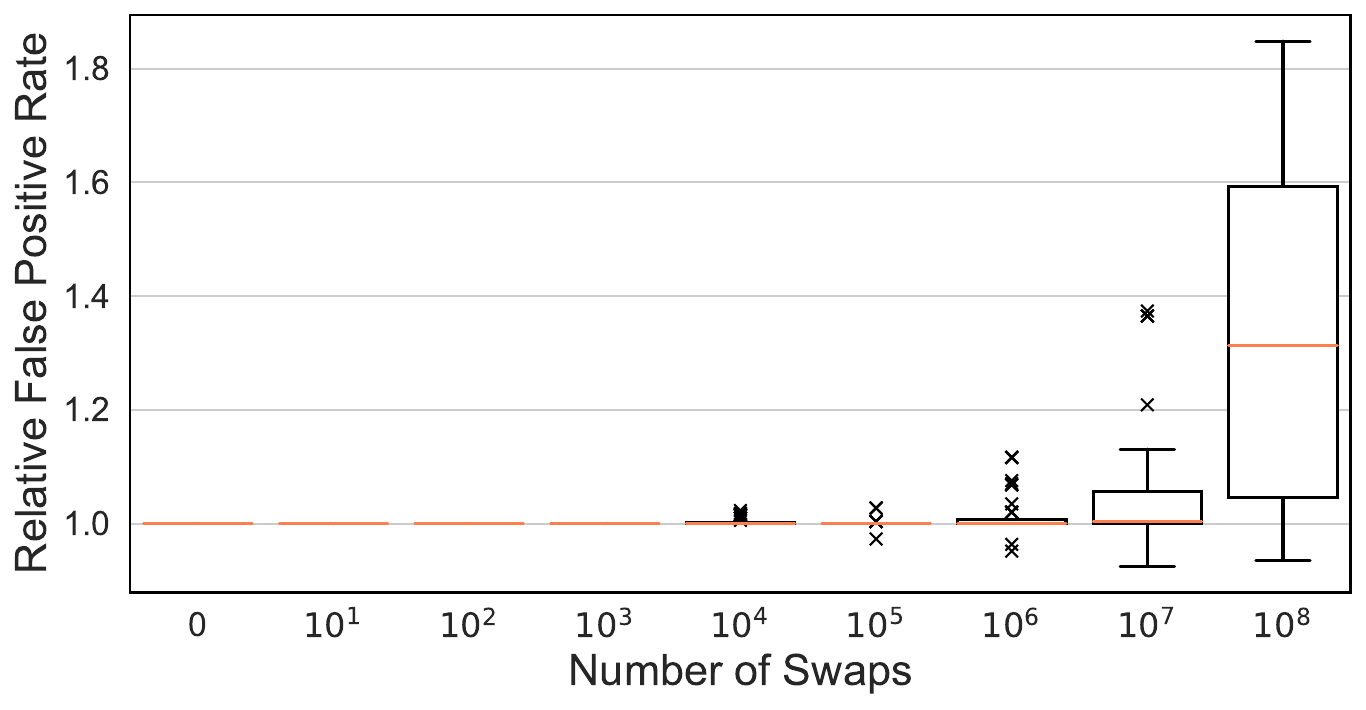}
    }
    \hspace{-1.0em}
    \subfigure[Fast PLBF\#]{
        \label{fig: monge_diff}
        \includegraphics[width=0.48\columnwidth]{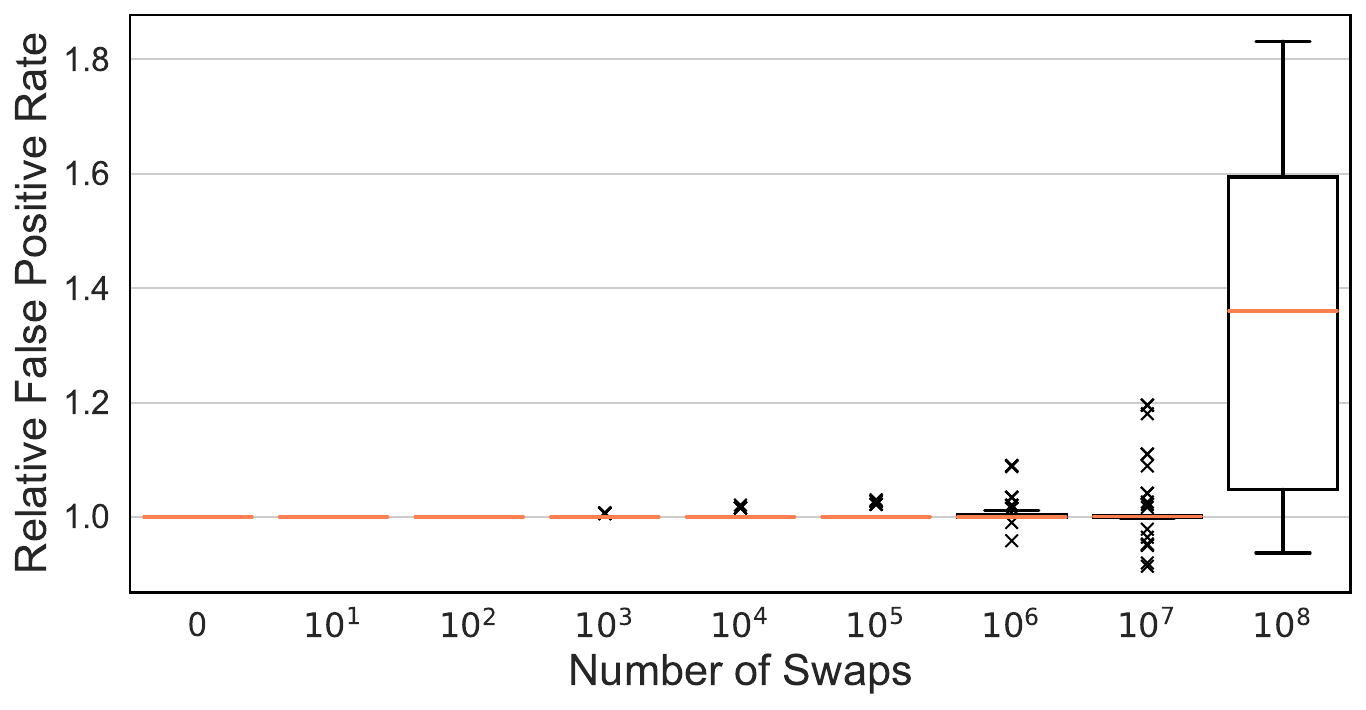}
    }
    \caption{Distribution of the ``relative false positive rate'' of fast PLBF++ and fast PLBF\# for each swap count. The ``relative false positive rate'' is the false positive rate of fast PLBF++ and fast PLBF\# divided by that of PLBF constructed with the same hyperparameters and conditions.}
    \label{fig: diff}
\end{figure}

Fast PLBF++ and fast PLBF\# are not theoretically guaranteed to be as accurate as PLBF, except when the score distribution is \textit{ideal}.
Therefore, we evaluated the accuracy of fast PLBF++ by creating a variety of artificial data sets, ranging from data with monotonicity to data with little monotonicity.
The results show that the smaller the monotonicity of the score distribution, the larger the difference in accuracy between fast PLBF++ and PLBF.
We also observed that in most cases, the FPR for fast PLBF++ is within 1.1 times that of PLBF, but in cases where there is little monotonicity in the score distribution, the FPR of fast PLBF++ and fast PLBF\# can be up to 1.85 times and 1.83 times that of PLBF, respectively.

Here, we explain the process of creating an artificial data set, which consists of two steps.
First, as in the original PLBF paper~\citep{vaidya2021partitioned}, the key and non-key score distribution is generated using the Zipfian distribution.
\cref{fig: Arti_0_0_log_hist} shows a histogram of the distribution of key and non-key scores at this time, and \cref{fig: Arti_0_0_pos_neg_ratio} shows $g_i/h_i ~ (i=1,2, \dots, N)$ when $N=1,000$.
This score distribution is \textit{ideal}.
Next, we perform \textit{swaps} to add non-monotonicity to the score distribution.
\textit{Swap} refers to changing the scores of the elements in the two adjacent segments so that the number of keys and non-keys in the two segments are swapped.
Namely, an integer $i$ is randomly selected from $\{1, 2, \dots, N-1 \}$, and the scores of elements in the $i$-th segment are changed so that they are included in the $(i+1)$-th segment, and the scores of elements in the $(i+1)$-th segment are changed to include them in the $i$-th segment.
\cref{fig: Arti_10_0_log_hist,fig: Arti_100_0_log_hist,fig: Arti_1000_0_log_hist,fig: Arti_10000_0_log_hist,fig: Arti_100000_0_log_hist,fig: Arti_1000000_0_log_hist,fig: Arti_10000000_0_log_hist,fig: Arti_100000000_0_log_hist} and \cref{fig: Arti_10_0_pos_neg_ratio,fig: Arti_100_0_pos_neg_ratio,fig: Arti_1000_0_pos_neg_ratio,fig: Arti_10000_0_pos_neg_ratio,fig: Arti_100000_0_pos_neg_ratio,fig: Arti_1000000_0_pos_neg_ratio,fig: Arti_10000000_0_pos_neg_ratio,fig: Arti_100000000_0_pos_neg_ratio} show the histograms of the score distribution and $g_i/h_i ~ (i=1,2, \dots, N)$ for $10, 10^2, \dots, 10^8$ swaps, respectively.
It can be seen that as the number of swaps increases, the score distribution becomes more non-monotonic.
For each case of the number of swaps, 10 different data sets were created using 10 different seeds.

\cref{fig: Arti_Memory_and_FPR} shows the accuracy of each method for each number of swaps, with the seed set to 0.
Hyperparameters for PLBFs are set to $N=1,000$ and $k=5$.
It can be seen that fast PLBF and fast PLBF++ achieve better Pareto curves than the other methods for all data sets.
It can also be seen that the higher the number of swaps, the more often there is a difference between the accuracy of fast PLBF++/\# and fast PLBF.

\cref{fig: diff} shows the difference in accuracy between fast PLBF++/\# and fast PLBF for each swap count.
Here, the ``relative false positive rate'' is the false positive rate of fast PLBF++/\# divided by that of PLBF constructed with the same hyperparameters and conditions.
We created 10 different data sets for each swap count and conducted 6 experiments for each data set and method with a memory usage of 0.25Mb, 0.5Mb, 0.75Mb, 1.0Mb, 1.25Mb, and 1.5Mb.
Namely, we compared the false positive rates of fast PLBF and fast PLBF++/\# under 60 conditions for each swap count.
The result shows that fast PLBF++ consistently achieves the same accuracy as PLBF in a total of $240$ experiments where the number of swaps is $10^3$ or less.
Similarly, fast PLBF\# consistently achieves the same accuracy as PLBF in a total of $180$ experiments where the number of swaps is $10^2$ or less.
Furthermore, in the experiments where the number of swaps is $10^7$ or less, the relative false positive rate is generally less than 1.1, with the exception of 14 cases for fast PLBF++ and 7 cases for fast PLBF\#. 
However, in the cases of $10^8$ swap counts, where there is almost no monotonicity in the score distribution, the false positive rate for fast PLBF++ and fast PLBF\# is up to 1.85 times and 1.83 times that for PLBF, respectively.

\section{Conclusion}
\label{sec: Conclusion}

PLBF is an outstanding LBF that can effectively leverage the distribution of the set and queries captured by a machine learning model.
However, PLBF is computationally expensive to construct.
We proposed fast PLBF and fast PLBF++ to solve this problem.
Fast PLBF is superior to PLBF because fast PLBF constructs exactly the same data structure as PLBF but does so faster.
Fast PLBF++ and fast PLBF\# are even faster than fast PLBF and achieve almost the same accuracy as PLBF and fast PLBF.
These proposed methods have greatly expanded the range of applications of PLBF.

\clearpage
\acks{This work was supported by JST AIP Acceleration Research JPMJCR23U2, Japan. We appreciate the valuable feedback from the anonymous reviewers of NeurIPS, which has improved the quality of our paper and led to the development of new methods and analyses.}

\vskip 0.2in
\bibliography{sample}

\clearpage
\appendix
\section{Solution of the Optimization Problems}
\label{app: SolutionOptProb}

In the original PLBF paper~\citep{vaidya2021partitioned}, the analysis with mathematical expressions was conducted only for the relaxed problem, while no analysis utilizing mathematical expressions was performed for the general problem.
Consequently, it was unclear which calculations were redundant, leading to the repetition of constructing similar DP tables.
Hence, this appendix provides a comprehensive analysis of the general problem.
It presents the optimal solution and the corresponding value of the objective function using mathematical expressions.
This analysis uncovers redundant calculations, enabling the derivation of fast PLBF.

The optimization problem designed by PLBF~\citep{vaidya2021partitioned} can be written as follows:
\begin{equation}
\begin{aligned}
& \underset{\bm{f}, \bm{t}} {\text{minimize}} && 
\sum_{i=1}^{k} c |\mathcal{S}| G_i \log_{2}\left(\frac{1}{f_i}\right) \\
&\text{subject to} && \sum_{i=1}^{k} H_i f_i \leq F \\
& && t_0 = 0 < t_1 < t_2 < \dots < t_k = 1 \\
& && f_i \leq 1 ~~~~~~ (i=1,2,\dots, k).
\end{aligned}
\end{equation}
The objective function represents the total memory usage of the backup Bloom filters.
The first constraint equation represents the condition that the expected overall FPR is below $F$.
$c \geq 1$ is a constant determined by the type of backup Bloom filters.
$\bm{G}$ and $\bm{H}$ are determined by $\bm{t}$.
Once $\bm{t}$ is fixed, we can solve this optimization problem to find the optimal $\bm{f}$ and the minimum value of the objective function.
In the original PLBF paper~\citep{vaidya2021partitioned}, the condition $f_i \leq 1 ~ (i=1,2,\dots, k)$ was relaxed; however, here we conduct the analysis without relaxation.

First, in the case of $\sum_{i \in \{j \mid G_j > 0\}} H_i \leq F$, the minimum value of the objective function is clearly $0$. This is because if we set 
\begin{equation}
    f_i = 
    \begin{dcases}
    0 & (G_i = 0) \\
    1 & (G_i > 0),
    \end{dcases}
\end{equation}
the conditions of the optimization problem are satisfied, and the value of the objective function (which is always non-negative) becomes 0. 
In addition, even assuming that $H_i = 0 \Rightarrow f_i = 1$, it has no effect on the solution to the optimization problem. This is because even if there is an $i$ that satisfies $H_i=0 ~ \land ~ f_i<1$, the $\bm{f}$ that can be obtained by changing $f_i$ to 1 satisfies the conditions of the optimization problem and has the same or a smaller value for the objective function.
Therefore, in the following, we will consider the case of $\sum_{i \in \{j \mid G_j > 0\}} H_i > F$, and assume that $H_i = 0 \Rightarrow f_i = 1$.

The Lagrange function is defined using the Lagrange multipliers $\bm{\mu}\in\mathbb{R}^k$ and $\nu\in\mathbb{R}$, as follows:
\begin{equation}
L(\bm{f}, \bm{\mu}, \nu) = 
\sum_{i=1}^{k} c |\mathcal{S}| G_i  \log_{2}\left(\frac{1}{f_i}\right) + 
\sum_{i=1}^{k} \mu_{i}(f_i - 1) + 
\nu \left( \sum_{i=1}^{k} H_i f_i - F \right).
\end{equation}

From the KKT condition, there exist $\Bar{\bm{\mu}}$ and $\Bar{\nu}$ in the local optimal solution $\Bar{\bm{f}}$ of this optimization problem, and the following holds:
\begin{equation}
\label{equ:L/f}
\frac{\partial L}{\partial f_i} (\Bar{\bm{f}}, \Bar{\bm{\mu}}, \Bar{\nu}) = 0 ~~~~~~ (i=1,2, \dots, k)
\end{equation}
\begin{equation}
\label{equ:L/mu}
\Bar{f}_{i} - 1 \leq 0, ~~~ \Bar{\mu}_{i} \geq 0, ~~~ \Bar{\mu}_{i}(\Bar{f}_{i}-1) = 0 ~~~~~~ (i=1,2, \dots, k)
\end{equation}
\begin{equation}
\label{equ:L/nu}
\sum_{i=1}^{k} H_i \Bar{f}_i - F \leq 0, ~~~ \Bar{\nu} \geq 0, ~~~ \Bar{\nu}\left(\sum_{i=1}^{k} H_i \Bar{f}_i - F\right) = 0.
\end{equation}
By introducing $\mathcal{I}_{f=1}=\{i|\Bar{f}_{i}=1\}$, \cref{equ:L/f,equ:L/mu,equ:L/nu} can be organized as follows:
\begin{equation}
\label{equ:f=1}
c |\mathcal{S}| G_i = \Bar{\mu}_{i} + \Bar{\nu} H_i, ~~~ \Bar{\mu}_{i}\geq0, ~~~ \Bar{f}_i=1 ~~~~~~ (i\in\mathcal{I}_{f=1}),
\end{equation}
\begin{equation}
\label{equ:f<1}
c |\mathcal{S}| G_i = \Bar{f}_i \Bar{\nu} H_i, ~~~ \Bar{\mu}_{i}=0, ~~~ \Bar{f}_i<1 ~~~~~~ (i\notin\mathcal{I}_{f=1}),
\end{equation}
\begin{equation}
\label{equ:F}
\sum_{i\in\mathcal{I}_{f=1}} H_i + \sum_{i\notin\mathcal{I}_{f=1}} H_i \Bar{f}_i - F \leq 0, ~~~ 
\Bar{\nu} \geq 0, ~~~ \Bar{\nu}\left(\sum_{i\in\mathcal{I}_{f=1}} H_i + \sum_{i\notin\mathcal{I}_{f=1}} H_i \Bar{f}_i - F\right) = 0.
\end{equation}

Here, we can prove $\Bar{\nu}>0$ by contradiction.
Assuming $\Bar{\nu}=0$, \cref{equ:f<1} implies that $\Bar{f}_i < 1 \Rightarrow G_i=0$; that is, $G_i > 0 \Rightarrow \Bar{f}_i = 1$.
However, this does not satisfy the conditions of the optimization problem, as $\sum_{i=1}^{k} H_i \Bar{f}_i \geq \sum_{i \in \{j \mid G_j > 0\}} H_i \Bar{f}_i = \sum_{i \in \{j \mid G_j > 0\}} H_i > F$.
Therefore, we have $\Bar{\nu}>0$.

Noting that we are assuming $H_i = 0 \Rightarrow f_i = 1$, we obtain the following from equations $\Bar{\nu}>0$ and \cref{equ:f<1}:
\begin{equation}
\label{equ:fi}
\Bar{f}_i = \frac{c|\mathcal{S}|}{\Bar{\nu}} \frac{G_i}{H_i} ~~~~~~ (i\notin\mathcal{I}_{f=1}).
\end{equation}
From \cref{equ:F,equ:fi} and $\Bar{\nu}>0$, we obtain
\begin{equation}
\label{equ:nu}
\frac{c|\mathcal{S}|}{\Bar{\nu}} = \frac{F - \sum_{i\in\mathcal{I}_{f=1}} H_i}{1 - \sum_{i\in\mathcal{I}_{f=1}} G_i},
\end{equation}
where $G_{f=1}$ and $H_{f=1}$ are defined in \cref{equ:define_G_f=1_H_f=1}.
Thus, we get
\begin{equation}
\label{equ:bar_f}
    \Bar{f}_i = 
    \begin{dcases}
    1 & (i \in \mathcal{I}_{f=1}) \\
    \frac{(F - H_{f=1})G_i}{(1 - G_{f=1})H_i} & (i \notin \mathcal{I}_{f=1}).
    \end{dcases}
\end{equation}
Substituting this into the objective function of the optimization problem (Equation~\ref{equ:prob}), we obtain
\begin{align}
  \sum_{i=1}^{k} c|\mathcal{S}| G_i  \log_{2}\left(\frac{1}{\Bar{f}_i}\right)
  &= \sum_{i\notin\mathcal{I}_{f=1}} - c|\mathcal{S}| G_i  \log_{2}\left(
    \frac{F - \sum_{j\in\mathcal{I}_{f=1}} H_j}{1 - \sum_{j\in\mathcal{I}_{f=1}} G_j}  \frac{G_i}{H_i}
  \right) \\
  &= c|\mathcal{S}| (1-G_{f=1})\log_{2}\left(
    \frac{1-G_{f=1}}{F-H_{f=1}}
  \right) -
  c|\mathcal{S}| \sum_{i\notin\mathcal{I}_{f=1}} G_i \log_{2}\left(
    \frac{G_i}{H_i}
  \right).
\end{align}

We can also obtain the conditions of $\mathcal{I}_{f=1}$.
First, from the assumption, 
\begin{equation}
\label{equ:H_i=0_and_I_f=1}
    H_i = 0 ~ \Rightarrow ~ i \in \mathcal{I}_{f=1}.
\end{equation}
Using $\Bar{\nu}>0, c>0,|\mathcal{S}|>0$, \cref{equ:f=1} and \cref{equ:nu} derives
\begin{equation}
\label{equ:H_i>0_and_I_f=1}
    i\in\mathcal{I}_{f=1} ~ \land ~ H_i > 0 ~ \Rightarrow ~ \frac{G_i}{H_i} \geq \frac{\Bar{\nu}}{c|\mathcal{S}|} = \frac{1 - G_{f=1}}{F - H_{f=1}},
\end{equation}
and  \cref{equ:f<1} and \cref{equ:nu} derives
\begin{equation}
\label{equ:not_I_f=1}
    i\notin\mathcal{I}_{f=1} ~ \Rightarrow ~ \frac{G_i}{H_i} < \frac{\Bar{\nu}}{c|\mathcal{S}|} = \frac{1 - G_{f=1}}{F - H_{f=1}}.
\end{equation}

\section{Modification of the PLBF Framework}
\label{app: modification of the PLBF framework}

In this appendix, we describe the framework modifications we made to PLBF in our experiments.
In the original PLBF paper~\citep{vaidya2021partitioned}, the optimization problem was designed to minimize the amount of memory usage under a given target FPR (Equation~\ref{equ:prob}).
However, this framework makes it difficult to compare the results of different methods and hyperparameters.
Therefore, we designed the following optimization problem, which minimizes the expected FPR under a given memory usage condition:
\begin{equation}
\label{equ:modified_prob}
\begin{aligned}
& \underset{\bm{f}, \bm{t}} {\text{minimize}} && 
\sum_{i=1}^{k} H_i f_i \\
&\text{subject to} &&
\sum_{i=1}^{k} c|\mathcal{S}| G_i \log_{2}\left(\frac{1}{f_i}\right) \leq M \\
& && t_0 = 0 < t_1 < t_2 < \dots < t_k = 1 \\
& && f_i \leq 1 ~~~~~~ (i=1,2,\dots, k),
\end{aligned}
\end{equation}
where $M$ is a parameter that is set by the user to determine the upper bound of memory usage.

Introducing $\mathcal{I}_{f=1}$ for analysis as we have done in \cref{app: SolutionOptProb}, it follows that the optimal FPRs $\Bar{\bm{f}}$ is
\begin{equation}
\label{equ:fi_conditioned_by_M}
\Bar{f}_i = 
    \begin{dcases}
    1 & (i \in \mathcal{I}_{f=1}) \\
    2^{-\beta} \frac{G_i}{H_i} & (i \notin \mathcal{I}_{f=1})
    \end{dcases},
\end{equation}
where
\begin{equation}
\label{equ:beta}
    \beta = \frac{M}{c|\mathcal{S}|\left(1 - G_{f=1}\right)} + \frac{1}{1 - G_{f=1}} \sum_{i\notin\mathcal{I}_{f=1}} G_i \log_{2} \left(\frac{G_i}{H_i}\right).
\end{equation}
When $\bm{f} = \Bar{\bm{f}}$, the expected FPR, that is, the objective function of \cref{equ:modified_prob} is 
\begin{equation}
    H_{f=1} + 2^{-\beta} (1 - G_{f=1}).
\end{equation}
Therefore, as in the original framework, we can find the best $\bm{t}$ and $\bm{f}$ by finding a way to cluster the $1$st to $(j-1)$-th segments into $k-1$ regions that maximizes $\sum_{i=1}^{k-1} G_i \log_{2}\left(G_i / H_i\right)$ for each $j=k,k+1,\dots,N$.
We can find it using the same DP algorithm and its acceleration methods as in the original framework.

\end{document}